\def\ointccw{\ointctrclockwise}
\definecolor{red}{rgb}{1,0,0}
\definecolor{gre}{rgb}{0,0.7,0}
\definecolor{blu}{rgb}{0,0,1}
\newtheorem{thm}{Theorem}
\newtheorem{assump}{Assumption}%
\newtheorem{lem}{Lemma}
\newtheorem{prop}{Proposition}
\theoremstyle{definition}
\theoremstyle{remark}
\theoremstyle{plain}
\numberwithin{equation}{section}
\def\CC{{\mathbb C}}
\def\DD{{\mathbb D}}
\def\EE{{\mathbb E}}
\def\GG{{\mathbb G}}
\def\KK{{\mathbb K}}
\def\NN{{\mathbb N}}
\def\RR{{\mathbb R}}
\def\TT{{\mathbb T}}
\def\WW{{\mathbb W}}
\def\ZZ{{\mathbb Z}}
\def\vecb{{\text{\boldmath$b$}}}
\def\vece{{\text{\boldmath$e$}}}
\def\veck{{\text{\boldmath$k$}}}
\def\vecm{{\text{\boldmath$m$}}}
\def\vecn{{\text{\boldmath$n$}}}
\def\vecp{{\text{\boldmath$p$}}}
\def\vecs{{\text{\boldmath$s$}}}
\def\vecu{{\text{\boldmath$u$}}}
\def\vecx{{\text{\boldmath$x$}}}
\def\vecy{{\text{\boldmath$y$}}}
\def\vecz{{\text{\boldmath$z$}}}
\def\vecalpha{{\text{\boldmath$\alpha$}}}
\def\vecalf{{\text{\boldmath$\alpha$}}}
\def\vecbeta{{\text{\boldmath$\beta$}}}
\def\veceta{{\text{\boldmath$\eta$}}}
\def\vectheta{{\text{\boldmath$\theta$}}}
\def\vecxi{{\text{\boldmath$\xi$}}}
\def\vecnull{{\text{\boldmath$0$}}}
\def\scrA{{\mathcal A}}
\def\scrB{{\mathcal B}}
\def\scrC{{\mathcal C}}
\def\scrI{{\mathcal I}}
\def\scrH{{\mathcal H}}
\def\scrJ{{\mathcal J}}
\def\scrK{{\mathcal K}}
\def\scrL{{\mathcal L}}
\def\scrP{{\mathcal P}}
\def\scrR{{\mathcal R}}
\def\scrS{{\mathcal S}}
\def\scrT{{\mathcal T}}
\def\scrW{{\mathcal W}}
\def\Re{\operatorname{Re}}
\def\Im{\operatorname{Im}}
\def\e{\mathrm{e}}
\def\i{\mathrm{i}}
\def\d{\mathrm{d}}
\def\off{\mathrm{off}}
\def\nc{\mathrm{nc}}
\def\diag{\operatorname{diag}}
\def\id{\operatorname{id}}
\def\LB{\mathrm{LB}}
\def\L{\operatorname{L{}}}
\def\Op{\operatorname{Op}}
\def\T{\operatorname{T{}}}
\def\tot{{\operatorname{tot}}}
\def\Tr{\operatorname{Tr}}
\def\vol{\operatorname{vol}}
\def\HiS{{\operatorname{HS}}}
\def\F{\underline{F}}
\def\Fura{\underrightarrow{F}}
\def\one{\mathbbm{1}}
\def\Green{G}
\title[Quantum transport in a crystal]{Quantum transport in a crystal with short-range interactions: The Boltzmann-Grad limit}
\author{Jory Griffin}
\address{Jory Griffin, Department of Mathematics,
University of Oklahoma,
Norman, OK 73019-3103, USA}
\email{\tt j.griffin@ou.edu}
\author{Jens Marklof}
\address{Jens Marklof, School of Mathematics, University of Bristol, Bristol BS8 1TW, U.K.} 
\email{\tt j.marklof@bristol.ac.uk}
\thanks{Research supported by EPSRC grant EP/S024948/1}
\date{\today}
\begin{document}

\begin{abstract}
We study the macroscopic transport properties of the quantum Lorentz gas in a crystal with short-range potentials, and show that in the Boltzmann-Grad limit the quantum dynamics converges to a random flight process which is not compatible with the linear Boltzmann equation. Our derivation relies on a hypothesis concerning the statistical distribution of lattice points in thin domains, which is closely related to the Berry-Tabor conjecture in quantum chaos.
\end{abstract}

\maketitle

\section{Introduction}

In 1905, Lorentz \cite{Lorentz05} introduced a kinetic model for electron transport in metals, which he argued should in the limit of low scatterer density be described by the linear Boltzmann equation. Although Lorentz' paper predates the discovery of quantum mechanics, the {\em Lorentz gas} has since served as a fundamental model for chaotic transport in both the classical and quantum setting, with applications to radiative transfer, neutron transport, semiconductor physics, and other models of transport in low-density matter. There has been significant progress in the derivation of the linear Boltzmann equation from first principles in the case of classical transport, starting from the pioneering works \cite{Gallavotti69,Spohn78,Boldrighini83} for random scatterer configurations, to the more recent derivation of new, generalised kinetic transport equations that highlight the limited validity of the Boltzmann equation for periodic \cite{Caglioti10,partII} and other aperiodic scatterer configurations \cite{MS2019}. 

In the quantum setting, the only complete derivation of the linear Boltzmann equation in the low-density limit is for random scatterer configurations \cite{EngErdos}, which followed analogous results in the weak-coupling limit \cite{Spohn77,ErdosYau}. The theory of quantum transport in periodic potentials on the other hand is a well developed theory of condensed matter physics. The general consensus is transport in periodic potentials without the presence of any disorder is {\em ballistic}, that is particles move almost freely, with minimal interaction with the scatterers; there is no diffusion. In this paper we propose that this picture changes in the low-density limit, where under suitable rescaling of space and time units the quantum dynamics is asymptotically described by a random flight process with strong scattering, similar to the setting of random potentials in the work of Eng and Erd\"os  \cite{EngErdos}. Our work is motivated by Castella's important studies \cite{Castella_WC,Castella_LD,Castella_LD2} of both the weak-coupling and low-density limits for periodic potentials in the case of zero Bloch vector. Castella shows that the weak-coupling limit gives rise to a linear Boltzmann equation with memory \cite{Castella_WC}. The low-density limit on the other hand diverges \cite{Castella_Plagne_LD}, and only the introduction of physically motivated off-diagonal damping terms leads to a limit, which for small damping is compatible with the linear Boltzmann equation. As we will show here, the case of random or generic Bloch vector does not diverge in the Boltzmann-Grad limit, without any requirement for damping, but the limit process differs significantly from that described by the linear Boltzmann equation. Our results complement our recent paper \cite{GM} which establishes convergence rigorously up to second order in perturbation theory. The aim of the present paper is thus to give a derivation of all higher order terms and identify the full random flight process, conditional on an assumption on the distribution of lattice points in a particular scaling limit. A rigorous verification of this hypotheses seems currently out of reach. 

We will here focus on the case when the particle wavelength $h$ is comparable to the potential range $r$, and much smaller than the fundamental cell of the lattice.  This choice of scaling means that a wave-packet will evolve semiclassically far away from the scatterers, but that any interaction with the potential is truly quantum. This is a scaling not traditionally discussed in homogenisation theory in which one usually assumes the characteristic wavelength is either much larger than the period (low-frequency homogenisation) or of the same or smaller order (high-frequency homogenisation); see for example \cite{Allaire05,BenoitGloria,BLP,Birman,Craster10,Gerard91,Gerard97,HMC16,Markowich94,Panati03}. Our scaling is also different from that leading to the classic point scatterer (or $s$-wave scatterer, Fermi pseudo-potential), where the potential scale $r$ is taken to zero with an appropriate renormalisation of the potential strength. In contrast to the setting of smooth finite-range potentials discussed in the present paper, periodic (and other) superpositions of point scatterers are exactly solvable \cite{Albeverio,AlbeverioGesztesyHoegh-Krohn,AlbeverioHoegh-Krohn,ExnerSeba,Grossmann, HH-KJ}.

Our set-up is as follows. We assume throughout that the space dimension $d$ is three or higher. Consider the Schr\"{o}dinger equation
\begin{equation}
\frac{\i h}{2 \pi} \partial_t \psi(t,\vecx) = H_{h,\lambda} \psi(t,\vecx)
\end{equation}
where 
\begin{equation}\label{Hamiltonian}
H_{h,\lambda}=-\frac{h^2}{8\pi^2}\; \Delta + \lambda \Op(V) ,
\end{equation}
$\Delta$ is the standard $d$-dimensional Laplacian and $\Op(V)$ denotes multiplication by the $\scrL$-periodic potential
\begin{equation}\label{Vdef}
V(\vecx) = \sum_{\vecb\in\scrL}  W(r^{-1} (\vecx-\vecb)),
\end{equation}
with $W$ the single-site potential, scaled by $r>0$, and $\scrL$ a full-rank Euclidean lattice in $\RR^d$. We re-scale space units by a constant factor so that the co-volume of $\scrL$ (i.e.\ the volume of its fundamental cell) is one. (One example to keep in mind is the cubic lattice $\scrL = \ZZ^d$.) The {\em coupling constant} $\lambda>0$ will remain fixed throughout, and $h$ is a scaling parameter which measures the characteristic wave lengths of the quantum particle. We will assume throughout that $h$ is comparable with the potential scaling $r$.

We assume that $W$ is in the Schwartz class $\scrS(\RR^d)$ and real-valued. This short-range assumption on the potential is key -- a small wavepacket moving through this potential should experience long stretches of almost free evolution, followed by occasional interactions with localised scatterers. One could also consider adding an external potential living on the macroscopic scale although we will not pursue that idea here.

We denote by
\begin{equation}\label{Hamiltonian-loc}
H_\lambda^{\text{loc}}=-\frac{1}{8\pi^2}\; \Delta + \lambda \Op(W)
\end{equation}
the single-scatterer Hamiltonian for the unscaled potential $W$ with $h=1$. Its resolvent is denoted by $\Green_\lambda(E)=(E-H_\lambda^{\text{loc}})^{-1}$, and the corresponding $T$-operator is defined as
\begin{equation}\label{def:TE}
T(E)=\lambda \Op(W) + \lambda^2 \Op(W)\, \Green_\lambda(E)\, \Op(W).
\end{equation}

Rather than consider solutions of the Schr\"{o}dinger equation directly, we instead consider the time evolution of a quantum observable $A$ given by the Heisenberg evolution
\begin{equation}\label{At}
A(t)=U_{h,\lambda}(t)\, A\, U_{h,\lambda}(-t) .
\end{equation}
Here $U_{h,\lambda}(t)= \e^{-\frac{2\pi\i}{h} H_{h,\lambda} t}$ is the propagator corresponding to the Hamiltonian $H_{h,\lambda}$. 

Let us now take an observable $A=\Op(a)$ given by the quantisation of a classical phase-space density $a=a(\vecx,\vecy)$, with $\vecx$ denoting particle position and $\vecy$ momentum. The question now is whether, in the low density limit $r= h\to 0$ and with the appropriate rescaling of length and time units, the phase-space density of the time-evolved quantum observable $A(t)$ (i.e. its principal symbol) can be described asymptotically by a function $f(t,\vecx,\vecy)$ governed by a random flight process. Eng and Erd\"os \cite{EngErdos} confirmed this in the case of random scatterer configurations, and established that the density $f(t,\vecx,\vecy$) is a solution of the linear Boltzmann equation,
\begin{equation}\label{KINETICEQ008}
\begin{cases}
\displaystyle
\bigl(\partial_t+\vecy\cdot\nabla_\vecx\bigr) f(t,\vecx,\vecy) 
=
\int_{\RR^d}  \big[ \Sigma(\vecy,\vecy') f(t,\vecx,\vecy') -  \Sigma(\vecy',\vecy) f(t,\vecx,\vecy) \big] \,d\vecy' & \\
 f(0,\vecx,\vecy)=a(\vecx,\vecy) & 
 \end{cases}
\end{equation}
where 
$\Sigma(\vecy,\vecy')$ is the collision kernel of the single site potential $W(\vecx)$ and $\vecy'$ and $\vecy$ denote the incoming and outgoing momenta, respectively. The collision kernel is given by the formula
\begin{equation}\label{Sigma001}
\Sigma(\vecy,\vecy') =4 \pi^2 \, |T(\vecy,\vecy')|^2 \delta(\tfrac12 \|\vecy\|^2-\tfrac12 \|\vecy'\|^2) 
\end{equation}
where the {\em $T$-matrix} $T(\vecy,\vecy')$ is the kernel of $T(E)$ in momentum representation, with $E=\tfrac12 \|\vecy\|^2$ (``on-shell'').
The {\em total} scattering cross section is defined as
\begin{equation}\label{Sigmatot}
\Sigma_\tot(\vecy) = \int_{\RR^d} \Sigma(\vecy',\vecy) \, \d \vecy'.
\end{equation} 
Solutions of the linear Boltzmann equation can be written in terms of the collision series
\begin{equation}\label{collseries}
f_{\LB}(t,\vecx,\vecy) = \sum_{k=1}^\infty f_{\LB}^{(k)}(t,\vecx,\vecy) 
\end{equation}
with the zero-collision term
\begin{equation}\label{collseries0}
f_{\LB}^{(1)}(t,\vecx,\vecy) = a( \vecx - t \vecy,\vecy)\, \e^{- t \Sigma_\tot(\vecy)} ,
\end{equation}
and the $(k-1)$-collision term
\begin{multline}\label{collseriesk}
f_{\LB}^{(k)}(t,\vecx,\vecy) = \int_{(\RR^d)^k}\int_{\RR_{\geq 0}^k}
\delta(\vecy-\vecy_1)\, a\bigg( \vecx -\sum_{j=1}^k  u_j \vecy_j ,\vecy_k \bigg) \\
\times \rho_{\LB}^{(k)}(\vecu,\vecy_1,\ldots,\vecy_k)\, \delta\bigg(t-\sum_{j=1}^k u_j\bigg) \,  \d \vecu\, \d\vecy_1\cdots\d\vecy_k 
\end{multline}
with 
\begin{equation}\label{colldensityBG}
 \rho_{\LB}^{(k)}(\vecu,\vecy_1,\ldots,\vecy_k)= \prod_{i=1}^k \e^{-u_i \Sigma_\tot(\vecy_i)}\; \prod_{j=1}^{k-1} \Sigma(\vecy_j,\vecy_{j+1}) .
\end{equation}
The product form of the density $\rho_{\LB}^{(k)}$ shows that the corresponding random flight process is Markovian, and describes a particle moving along a random piecewise linear curve with momenta $\vecy_i$ and exponentially distributed flight times $u_i$.

The principal result of this paper is that, for periodic potentials of the form \eqref{Vdef} and using the same scaling as in the random setting \cite{EngErdos}, there exists a limiting random flight process describing macroscopic transport. The derivation requires a hypothesis on the fine-scale distribution of lattice points which is discussed in detail as Assumption \ref{hyp0} in Section \ref{secPoisson}.

\begin{thm}[Main Result]
Under Assumption \ref{hyp0} on the distribution of lattice points, there exists an evolution operator $L(t)$, distinct from that of the linear Boltzmann equation, such that for any $a\in\scrS(\RR^d\times\RR^d)$ we have
\begin{equation}
\| U_{h,\lambda}(t r^{1-d})\, \Op_{r,h}(a)\, U_{h,\lambda}(-t r^{1-d}) - \Op_{r,h}(L(t) a) \|_\HiS \to 0
\end{equation}
where $\Op_{r,h}(a)$ is the Weyl quantisation of the phase-space symbol $a$ in the Boltzmann-Grad scaling and $\| \cdot \|_{\HiS}$ is the Hilbert-Schmidt norm.
\end{thm}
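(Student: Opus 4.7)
The plan is to expand the Heisenberg-evolved observable $A(tr^{1-d})$ as a Dyson series in powers of $\lambda\Op(V)$ and to analyse each term separately in the Boltzmann--Grad limit $r=h\to 0$, mirroring the strategy of Eng--Erd\"os \cite{EngErdos} in the random setting but adapted to the periodic structure. Group the terms of this series so that $(k-1)$ elementary scatterings off $V$ are resummed, via the standard geometric resolvent identity, into $(k-1)$ applications of the single-site $T$-operator $T(E)$ of \eqref{def:TE} at lattice sites $\vecb_1,\ldots,\vecb_{k-1}\in\scrL$. The $k$-th term of the resulting collision series is then an oscillatory integral over intermediate momenta and propagation times, multiplied by a sum over $(k-1)$-tuples in $\scrL^{k-1}$.

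Passing to momentum representation, each insertion at $\vecb_j$ contributes a phase $\e^{2\pi\i\vecb_j\cdot(\vecp_j-\vecp_{j+1})/h}$ and the free evolution between scatterings a quadratic phase in the intermediate momenta. In the scaling $r=h$ and time $tr^{1-d}$, a stationary-phase/semiclassical analysis forces the lattice sites that contribute non-negligibly to lie in a thin tube of width $\sim r$ about the piecewise linear classical trajectory with segment momenta $\vecy_1,\ldots,\vecy_k$ and segment lengths $u_1,\ldots,u_k$. After a change of variables from the lattice positions to impact parameters transverse to each segment, the $k$-th term takes the form \eqref{collseriesk}, but with the Markovian density \eqref{colldensityBG} replaced by a density $\rho^{(k)}(\vecu,\vecy_1,\ldots,\vecy_k)$ that encodes the asymptotic joint counting statistics of lattice points in the simultaneous family of thin tubes along the whole trajectory. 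Prescribing this joint limiting distribution is precisely the content of Assumption \ref{hyp0}.

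With the hypothesis supplied, convergence of each finite-$k$ term reduces to interchanging the lattice sum with the oscillatory integral, applying the assumed limiting distribution, and recognising the remaining integral as the claimed $(k-1)$-collision term. Summing over $k$ defines the candidate operator $L(t)$; because the limiting $\rho^{(k)}$ does \emph{not} factorise into the product form of \eqref{colldensityBG} (the limiting point process of lattice points in thin tubes is non-Poisson, in contrast to the random scatterer setting), the evolution $L(t)$ is automatically distinct from the linear Boltzmann flow, yielding the qualitative part of the theorem. To pass to the Hilbert--Schmidt statement, one combines an Egorov-type argument in the Weyl calculus with Hilbert--Schmidt bounds on the individual terms, derived from the on-shell $T$-matrix kernel (controlled since $W\in\scrS(\RR^d)$), so that $A(tr^{1-d})$ can be approximated in $\|\cdot\|_\HiS$ by the Weyl quantisation of the finite collision series.

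Two technical pillars remain. Pillar (i) is uniform-in-$k$ summability: the partial sums must be truncated at some $K=K(\ve)$ with error $<\ve$ uniformly in $h$, which requires kernel bounds for $T(\vecy,\vecy')$ combined with $\Sigma_\tot$-type a priori estimates and a careful treatment of the recollision geometry. Pillar (ii), which I expect to be the principal obstacle, is converting Assumption \ref{hyp0}, an equidistribution statement about lattice points in thin domains, into convergence of the full coupled oscillatory integral with $k-1$ scatterers. This demands simultaneous control of the phase alignment forced by stationary phase, of the correlations between distinct trajectory segments (including near-coincidence and re-collision configurations), and of the dependence of the tubes on the intermediate momenta, which must be integrated against Schwartz symbols. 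It is here that the Berry--Tabor-type hypothesis is used in its full strength; once available, the remaining steps are bookkeeping of the Weyl calculus and control of the collision-series tail.
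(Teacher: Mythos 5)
Your proposal contains a genuine conceptual gap concerning what Assumption~\ref{hyp0} says and where the deviation from the linear Boltzmann equation comes from. You pose the analysis in position space, via a Dyson series in which the scatterer locations $\vecb_j\in\scrL$ appear and a stationary-phase argument localises contributions to a thin tube about a classical trajectory; you then posit that the failure of the linear Boltzmann equation stems from the claim that ``the limiting point process of lattice points in thin tubes is non-Poisson''. This is the opposite of what the paper does. After a Floquet--Bloch decomposition (Section~\ref{secFloquet}) the perturbation series~\eqref{scrIdef} is a sum over tuples $\vecp_1,\ldots,\vecp_n\in\scrP=\scrL^*+\vecalf$ of points in the translated \emph{dual} lattice, and Assumption~\ref{hyp0} asserts precisely that $\scrP$ can be replaced by a unit-intensity Poisson process at the relevant scale --- a phase-space version of the Berry--Tabor conjecture on the Poisson statistics of the Bloch energies $\tfrac12\|\vecp\|^2$ and the associated directions. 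The deviation from the linear Boltzmann equation does \emph{not} arise from non-Poissonian statistics; it arises from the discreteness of the Bloch momentum spectrum that survives the Poisson modelling: diagrams in which distinct momentum labels coincide give nontrivial contributions, captured by the set-partition combinatorics of Sections~\ref{secPoisson}, \ref{secOrganisationoftheseries}, \ref{secpositivity} and Appendix~\ref{sec:partitions}. These recollision diagrams build the kernel $\rho_{\ell m}^{(k)}$ in~\eqref{rhokalt_intro}, which, unlike~\eqref{colldensityBG}, assigns positive probability to the initial and final momenta coinciding, as emphasised in Section~\ref{secDiscussion}. You appear to be importing intuition from the \emph{classical} periodic Lorentz gas, where it is the physical-space lattice $\scrL$ and its non-Poisson tube statistics that matter; the quantum problem here lives on the dual lattice and uses a Poisson hypothesis.

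Your two technical pillars do have genuine counterparts in the paper: pillar (i) corresponds to the decay and summability estimates of Section~\ref{sec:decay} (Propositions~\ref{prop91} and~\ref{cor:Hseries}), and pillar (ii) to passing to the limit in Section~\ref{sec:takingthelimit} and identifying the $T$-matrix structure in Sections~\ref{secOrganisationoftheseries}--\ref{secpositivity}. But a derivation organised around position-space tube counting with a non-Poisson hypothesis misidentifies both the object to which Assumption~\ref{hyp0} applies and the mechanism producing the new limit process, and would not reproduce the theorem or the kernels~\eqref{rho112}--\eqref{rho122}.
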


The precise scaling of the quantum observable $\Op_{r,h}(a)$ is explained in Section \ref{secScaling}. The limiting evolution operator $L(t)$ is given by the series
\begin{equation}\label{collseriesLt}
L(t)a(\vecx,\vecy)= f(t,\vecx,\vecy) = \sum_{k=1}^\infty f^{(k)}(t,\vecx,\vecy) ,
\end{equation}
where $f^{(k)}$ coincides with $f_{\LB}^{(k)}$ for $k=1$,
\begin{equation}\label{collseries001}
f^{(1)}(t,\vecx,\vecy) = a( \vecx - t \vecy,\vecy)\, \e^{- t \Sigma_\tot(\vecy)} ,
\end{equation}
but deviates significantly at higher order. For $k\geq 2$, the $(k-1)$-collision term is given by
\begin{multline}\label{collserieskOUR}
f^{(k)}(t,\vecx,\vecy) =  \frac{1}{k!}  \sum_{\ell,m=1}^k \int_{(\RR^d)^k}\int_{\RR_{\geq 0}^k}
\delta(\vecy-\vecy_\ell)\,  a\bigg( \vecx -\sum_{j=1}^k  u_j \vecy_j ,\vecy_m \bigg) \\
\times \rho_{\ell m}^{(k)}(\vecu,\vecy_1,\ldots,\vecy_k)\, \delta\bigg(t-\sum_{j=1}^k u_j\bigg) \, \d \vecu\, \d\vecy_1\cdots\d\vecy_k ,
\end{multline}
with the collision densities
\begin{equation} \label{rhokalt_intro} 
\rho_{\ell m}^{(k)}(\vecu,\vecy_1,\dots,\vecy_k)  = \big| g_{\ell m}^{(k)}(\vecu,\vecy_1,\dots,\vecy_k) \big|^2
\,\omega_k(\vecy_1,\ldots,\vecy_k)\, \prod_{i=1}^{k} \e^{- u_i \Sigma_\tot(\vecy_i)} .
\end{equation}
Here 
\begin{equation} 
\omega_k(\vecy_1,\ldots,\vecy_k)= \prod_{j=1}^{k-1} \delta\big(\tfrac12 \| \vecy_j\|^2-\tfrac12\|\vecy_{j+1}\|^2\big) 
\end{equation}
and $g_{\ell m}^{(k)}$ are the coefficients of the matrix valued function
\begin{equation} \label{rhokdfinalGdef0}
\GG^{(k)}(\vecu,\vecy_1,\dots,\vecy_k) 
= \frac{1}{(2\pi\i)^k} \ointccw\cdots \ointccw \big( \DD(\vecz)- \WW \big)^{-1}  \exp(\vecu\cdot\vecz)\, \d z_1 \cdots \d z_k ,
\end{equation}
where $\DD(\vecz)=\diag(z_1,\ldots,z_k)$ and $\WW = \WW (\vecy_1,\dots,\vecy_k)$ with entries
\begin{equation}
w_{ij} = 
\begin{cases}
0 & (i=j) \\
-2\pi\i T(\vecy_i,\vecy_j) & (i\neq j) .
\end{cases}
\end{equation}
The paths of integration in \eqref{rhokdfinalGdef0} are circles around the origin with radius strictly greater than $r_0= k \max |w_{ij}|$. The matrix $\GG^{(k)}(\vecu,\vecy_1,\dots,\vecy_k)$ is in fact the derivative $\partial_{u_1}\cdots\partial_{u_k}$ of the Borel transform of the function $F(\vecu)=(\DD(\vecu)^{-1} - \WW)^{-1}$. We furthermore note that the above formulas are independent of the choice of scatterer configuration $\scrL$, as in the classical setting.
For the one-collision terms we will furthermore derive the following explicit representation in terms of the Lorentz-Boltzmann density \eqref{colldensityBG} and $J$-Bessel functions,
\begin{equation} \label{rho112}
\rho_{1 1}^{(2)}(\vecu,\vecy_1,\vecy_2)  = \rho_\LB^{(2)}(\vecu,\vecy_1,\vecy_2) \, \bigg| \frac{u_1 T(\vecy_2,\vecy_1)}{u_2 T(\vecy_1,\vecy_2)}\bigg| \, \big| J_1\big(4\pi [u_1 u_2 T(\vecy_1,\vecy_2) T(\vecy_2,\vecy_1)]^{1/2} \big) \big|^2 .
\end{equation}
and
\begin{equation} \label{rho122}
\rho_{1 2}^{(2)}(\vecu,\vecy_1,\vecy_2)  = \rho_\LB^{(2)}(\vecu,\vecy_1,\vecy_2) \, \big| J_0\big(4\pi [u_1 u_2 T(\vecy_1,\vecy_2) T(\vecy_2,\vecy_1)]^{1/2} \big) \big|^2 
\end{equation}
The remaining matrix elements can be computed via the identities
\begin{equation}
\rho_{22}^{(2)}(u_1,u_2,\vecy_1,\vecy_2) = \rho_{11}^{(2)}(u_2,u_1,\vecy_2,\vecy_1),
\end{equation}
\begin{equation}
\rho_{21}^{(2)}(u_1,u_2,\vecy_1,\vecy_2) = \rho_{12}^{(2)}(u_2,u_1,\vecy_2,\vecy_1) .
\end{equation}
A notable difference with the solution \eqref{collseriesk} to the linear Boltzmann equation is that in \eqref{collserieskOUR} there is a non-zero probability that the final momentum $\vecy_\ell$ is equal to the initial momentum $\vecy_m$.

The paper is organised as follows. We will first explain in Section \ref{secScaling} the precise scaling needed to observe our limiting process and state the main result. In Section \ref{secFloquet} we recall the well-known Floquet-Bloch decomposition for periodic potentials and in Section \ref{secT} we recall an explicit formula for the $T$-operator in our specific setting. Section \ref{secDuhamel} explains the perturbative approach to calculate the series expansion for the time evolution of $A(t)$. This is followed by a discussion of the main hypothesis in this study in Section \ref{secPoisson} which in brief can be viewed as a phase-space generalisation of the Berry-Tabor conjecture for the statistics of quantum energy levels for integrable systems \cite{BerryTabor,Marklof00}. In Section \ref{sec:explicit} we provide an explicit computation of terms appearing in the formal series, and in Section \ref{sec:decay} we prove that the series is absolutely convergent provided $\lambda$ is small enough. In Section \ref{sec:takingthelimit} we take the low-density limit using the formulas from Section \ref{sec:explicit} and show how the limiting object can be written in terms of the $T$-operator described in Section \ref{secT}. In Section \ref{secpositivity} we establish positivity of this limiting expansion and derive the formulas for \eqref{collserieskOUR}. A key observation is that the one-collision term is distinctly different from the corresponding term for the linear Boltzmann equation. We conclude the paper with a discussion and outlook in Section \ref{secDiscussion}. The appendix provides detailed background of the combinatorial structures used in this paper.

\subsection*{Acknowledgements} 

We thank S\o ren Mikkelsen for valuable comments on the first draft of this paper.

\section{Microlocal Boltzmann-Grad scaling}\label{secScaling}

The phase space of the underlying classical Hamiltonian dynamics is $\T(\RR^d)=\RR^d\times\RR^d$, where the first component parametrises the position $\vecx$ and the second the momentum $\vecy$ of the particle. Given a function $a:\T(\RR^d)\to\RR$, we associate with it the observable $A=\Op(a)$  acting on functions $f\in \scrS(\RR^d)$ through the Weyl quantisation
\begin{equation}\label{def:Opa}
\Op(a) f(\vecx) = \int_{\T(\RR^d)} a(\tfrac12(\vecx+\vecx'),\vecy) \, \e((\vecx-\vecx')\cdot\vecy)\, f(\vecx')\, \d \vecx' \d \vecy ,
\end{equation}
where we have used the shorthand $\e(z)=\e^{2\pi\i z}$. The Weyl quantisation is useful to capture the phase space distribution of quantum states. In the case of free quantum dynamics, with $\lambda=0$ and $h=1$, we have for example the well-known quantum-classical correspondence principle 
\begin{equation} \label{egorov}
U_{1,0}(t) \Op(a) U_{1,0}(-t) = \Op(L_0(t)a)
\end{equation}
with the classical free evolution $[L_0(t) a](\vecx,\vecy)=a(\vecx-t\vecy,\vecy)$.
It is convenient to incorporate the scaling parameter $h>0$ in \eqref{Hamiltonian} by setting 
\begin{equation}\label{def:Opah}
\Op_h(a) f(\vecx) = h^{-d/2} \int_{\T(\RR^d)} a(\tfrac12(\vecx+\vecx'),\vecy) \, \e_h((\vecx-\vecx')\cdot\vecy)\, f(\vecx')\, \d \vecx' \d \vecy 
\end{equation}
with $\e_h(z)=\e^{\tfrac{2\pi\i}{h} z}$. Note that we have $\Op_h(a)=\Op(D_{1,h} a)$ for $D_{1,h} a(\vecx,\vecy) = h^{d/2} \, a(\vecx, h \vecy)$. We refer to $D_{1,h}$ as the {\em microlocal scaling}. In particular, \eqref{egorov} becomes
\begin{equation} \label{egorovh}
U_{h,0}(t) \Op_h(a) U_{h,0}(-t) = \Op_h(L_0(t)a), 
\end{equation}

The mean free path length of a particle travelling in a potential of the form \eqref{Vdef} is asymptotic (for $r$ small) to the inverse total scattering cross section of the single-site potential $W$ \cite{Marklof15,MS2019}; the total scattering cross section in turn equals $r^{d-1}$, up to constants. In the low-density it is natural to measure length units in terms of the mean free path lengths or, equivalently, in units of $r^{1-d}$. We refer to the corresponding scaling $D_{r,1}$ defined by $D_{r,1} a(\vecx,\vecy) = r^{d(d-1)/2} \, a( r^{d-1} \vecx, \vecy)$ as the {\em Boltzmann-Grad scaling}, and the combined scaling
\begin{equation}\label{BGscaling}
D_{r,h} a(\vecx,\vecy) =  r^{d(d-1)/2} h^{d/2} \, a( r^{d-1} \vecx, h \vecy),
\end{equation}
as the {\em microlocal Boltzmann-Grad scaling}.
We define the corresponding scaled Weyl quantisation by $\Op_{r,h}=\Op\circ D_{r,h}$. The quantum-classical correspondence \eqref{egorov} for the free dynamics reads in this scaling
\begin{equation} \label{egorovrescaled}
U_{h,0}(t r^{1-d}) \Op_{r,h}(a) U_{h,0}(-t r^{1-d}) = \Op_{r,h} (L_0(t)a).
\end{equation}
The key point here is that we require an extra scaling in time relative to the mean free path.
The challenge for the present study is thus to understand the asymptotics of $A(t r^{1-d})$ as in \eqref{At}, for every fixed $t>0$, with initial data $A=\Op_{r,h}(a)$ and $a$ in the Schwartz class $\scrS(\T(\RR^d))$ (i.e. $a$ is infinitely differentiable and all its derivatives decay rapidly as $\|\vecx\|,\|\vecy\|\to \infty$). The question is, more precisely, whether there is a family of linear operators $L(t)$ so that 
\begin{equation}\label{Ath}
\| U_{h,\lambda}(t r^{1-d})\, \Op_{r,h}(a)\, U_{h,\lambda}(-t r^{1-d}) - \Op_{r,h}(L(t) a) \|_\HiS \to 0
\end{equation}
in the Hilbert-Schmidt norm, defined as
\begin{equation}
\| X \|_\HiS = \langle X, X\rangle_\HiS^{1/2} , \qquad \langle X, Y\rangle_\HiS = \Tr (X^\dagger\; Y).
\end{equation}
To understand \eqref{Ath}, it is sufficient to establish the convergence of
\begin{equation}\label{Ath2}
\langle B, A(t r^{1-d})  \rangle_\HiS \to \langle b, L(t) a\rangle
\end{equation}
with $A(t)$ as in \eqref{At}, $A=\Op_{r,h}(a)$, $B=\Op_{r,h}(b)$, and $a,b\in\scrS(\T(\RR^d))$.
The inner product on the right hand side of \eqref{Ath2} is defined by
\begin{equation}\label{innerp}
\langle f, g\rangle = \int_{\T(\RR^d)} \overline{f(\vecx,\vecy)}\, g(\vecx,\vecy)\, \d \vecx \d \vecy.
\end{equation}
 We direct the reader towards \cite[Appendix A]{GM} for an explanation of how \eqref{Ath2} can be reformulated as a statement about solutions of the Schr\"{o}dinger equation. As mentioned previously, we will here restrict our attention to the case when $r$ is of the same order of magnitude as $h$, i.e. $r=h\, c_0$ for fixed effective scattering radius $c_0$. By adjusting $W$, we may in fact assume without loss of generality that $c_0=1$. This is precisely the scaling used in \cite{EngErdos} for the case of random potentials, although in a slightly different formulation in terms of Husimi functions for the phase-space presentation of quantum states.

\section{Floquet-Bloch decomposition}\label{secFloquet}

Floquet-Bloch theory allows us to reduce the quantum evolution in periodic potentials to invariant Hilbert spaces $\scrH_\vecalf$ of quasiperiodic functions $\psi$, satisfying
\begin{equation}\label{quasip}
\psi(\vecx+\vecb) = \e(\vecb\cdot\vecalf) \psi(\vecx) ,
\end{equation}
for all $\vecb \in \scrL$ where $\vecalf \in \TT^*=\RR^d/\scrL^*$ is the {\em quasimomentum} and
\begin{equation}
\scrL^*=\{ \veck\in\RR^d \mid \veck\cdot\vecb\in\ZZ \text{ for all } \vecb\in\scrL\}
\end{equation}
is the dual (or reciprocal) lattice of $\scrL$.
We denote by $\scrH_\vecalf$ the Hilbert space of such functions that have finite $\L^2$-norm with respect to the inner product
\begin{equation}
\langle \psi, \varphi  \rangle_\vecalf = \int_{\TT} \overline{\psi(\vecx)}\, \varphi(\vecx)\, \d \vecx ,
\end{equation}
with $\TT=\RR^d/\scrL$.
We define the corresponding Hilbert-Schmidt product for linear operators on $\scrH_\vecalf$ by
\begin{equation}
\langle X, Y  \rangle_{\HiS,\vecalf} = \Tr (X^\dagger\, Y).
\end{equation}
For a given quasi-momentum $\vecalf\in\TT^*$, consider the Bloch functions 
\begin{equation}
\varphi_\veck^\vecalf(\vecx)=\e((\veck+\vecalf)\cdot\vecx), \qquad \veck\in\scrL^*,
\end{equation}
and define the \emph{Bloch projection} $\Pi_\vecalf: \scrS(\RR^d) \to\scrH_\vecalf$ by
\begin{equation}\label{Pi-def}
\Pi_\vecalf f(\vecx) = \sum_{\veck\in\scrL^*} \langle \varphi_\veck^\vecalf, f\rangle \;  \varphi_\veck^\vecalf(\vecx) 
\end{equation}
with inner product
\begin{equation}
\langle f, g\rangle = \int_{\RR^d} \overline{f(\vecx)}\, g(\vecx)\, \d \vecx .
\end{equation}
Note that, by Poisson summation,
\begin{equation} \label{eq:Frep}
\Pi_\vecalf f(\vecx) = \sum_{\vecb\in\scrL} \e(\vecb\cdot\vecalf) f(\vecx-\vecb),
\end{equation}
and hence that by integrating over $\vecalf \in \TT^*$ one regains $f(\vecx)$. The kernel of $\Pi_\vecalf$ is thus
\begin{equation} \label{eq:Frep2}
\Pi_\vecalf (\vecx,\vecx') = \sum_{\vecb\in\scrL} \e(\vecb\cdot\vecalf) \delta_\vecb(\vecx-\vecx').
\end{equation}
Instead of \eqref{Ath2} the plan is now to consider the convergence
\begin{equation}\label{Ath3}
\langle B,\Pi_\vecalf A(t r^{1-d})  \rangle_\HiS \to \langle b,L(t) a\rangle
\end{equation}
for typical $\vecalf$. The advantage is that we are working in a Hilbert space with discrete basis. One can then obtain information on \eqref{Ath2} by integrating over $\vecalf$. In fact we will argue that the right hand side of \eqref{Ath3} is independent of $\vecalf$ for almost every $\vecalf$.

\section{The {\em T}-operator for a single scatterer}\label{secT}

Recall from \eqref{def:TE} that the $T$-operator for the single scatterer potential $W$ is defined by
\begin{equation}
T(E)=\lambda \Op(W) + \lambda^2 \Op(W)\, \Green_\lambda(E)\, \Op(W)  
\end{equation}
in the half-plane $\Im E>0$ where the resolvent $\Green_\lambda(E)$ is resonance-free, and then extended by analytic continuation.  
The Born series for $\Green_\lambda(E)$ leads to the formal series expansion
\begin{equation}
T(E) = \lambda \Op(W) \sum_{n=0}^\infty  (\lambda \Green_0(E)\Op(W))^n .
\end{equation}
Using $\varphi_\vecy(\vecx)=\e(\vecy\cdot\vecx)$ as a basis for the momentum representation, the free resolvent $\Green_0(E)$ has the kernel
\begin{equation}
\Green_0(\vecy,\vecy',E) = \langle \varphi_\vecy, \Green_0(E) \varphi_{\vecy'}\rangle = \frac{\delta(\vecy-\vecy')}{E - \tfrac12 \|\vecy\|^2}  ,
\end{equation}
and similarly $\Op(W)$ has kernel  $\langle \varphi_\vecy, \Op(W) \varphi_{\vecy'}\rangle=\hat W(\vecy-\vecy')$.
The $T$-matrix is defined as the kernel of $T(E)$ in momentum representation, i.e., 
\begin{equation}
T(\vecy,\vecy',E)=\langle \varphi_\vecy, T(E) \varphi_{\vecy'}\rangle.
\end{equation}
It will be convenient to set $E=\tfrac12\|\vecy\|^2+\i\gamma$, with $\Re\gamma\geq 0$, and define 
\begin{equation}\label{Tg}
T^\gamma(\vecy,\vecy')= T(\vecy,\vecy',\tfrac12\|\vecy\|^2+\i \gamma),  \qquad 
g^\gamma(\vecy,\vecy')= \frac{1}{\tfrac12\|\vecy\|^2 - \tfrac12 \|\vecy'\|^2 + \i\gamma} .
\end{equation}
The corresponding perturbation series is 
\begin{equation} \label{Tseries}
T^\gamma(\vecy,\vecy')= \sum_{n=1}^\infty \lambda^n T_n^\gamma(\vecy,\vecy'),
\end{equation}
where the $T_n^\gamma$ are defined by
\begin{equation}
T_1^\gamma(\vecy_0,\vecy_1) = \hat W(\vecy_0-\vecy_1)
\end{equation}
and for $n\geq 2$
\begin{multline} \label{DtoT0}
T_n^\gamma(\vecy_0,\vecy_n) = \int_{(\RR^d)^{n-1}} \hat W(\vecy_0-\vecy_1) \cdots \hat W(\vecy_{n-1}-\vecy_n)  \\ 
\times\bigg( \prod_{j=1}^{n-1}  g^\gamma(\vecy_0,\vecy_j) \bigg)
  \,\d\vecy_1\cdots \d\vecy_{n-1} .
\end{multline}
The analytic continuation of $T^\gamma$ from $\Re\gamma>0$ to the boundary $\Re\gamma =0$ is obtained via the integral representation
\begin{multline} \label{DtoT011}
T_n^\gamma(\vecy_0,\vecy_n)  = (-2 \pi \i)^{n-1}\int_{\RR_{\geq 0}^{n-1}} \bigg\{ \int_{(\RR^d)^{n-1}} \hat W(\vecy_0-\vecy_1) \cdots \hat W(\vecy_{n-1}-\vecy_n)  \\ 
\times e\bigg[\sum_{j=1}^{n-1} \theta_j \bigg( \tfrac12\|\vecy_0\|^2 - \tfrac12 \|\vecy_j\|^2 + \i\gamma \bigg) \bigg]
  \,\d\vecy_1\cdots \d\vecy_{n-1} \bigg\} \,\d\theta_1\cdots\d\theta_{n-1}.
\end{multline}
The choice $\gamma=0$ is referred to as ``on-shell''. We drop the syperscript if $\gamma=0$, i.e., $T(\vecy,\vecy')=T^0(\vecy,\vecy')$, $T_n(\vecy,\vecy')=T_n^0(\vecy,\vecy')$.
The $T$-matrix $T(\vecy,\vecy')$ is then related to the on-shell scattering matrix via
\begin{equation}
S(\vecy,\vecy') = \delta(\vecy-\vecy') - 2\pi \i \, \delta(\tfrac12 \|\vecy\|^2-\tfrac12 \|\vecy'\|^2)\,T(\vecy,\vecy') .
\end{equation}
The unitarity of the $S$-matrix is equivalent to the relation, for $\|\vecy\|=\|\vecy'\|$,
\begin{equation}
T(\vecy,\vecy')-\overline T(\vecy',\vecy) = -2\pi\i \int_{\RR^d} \delta(\tfrac12 \|\vecy\|^2-\tfrac12 \|\vecy''\|^2)\, T(\vecy,\vecy'') \overline T(\vecy',\vecy'') \d\vecy''.
\end{equation}
This in particular implies the optical theorem
\begin{equation} \label{opticaltheorem}
\Im T(\vecy,\vecy) = -\frac{1}{4\pi} \Sigma_\tot(\vecy).
\end{equation}

We will now prove that the integrals defining \eqref{DtoT011} converge uniformly in $\Re \gamma \geq 0$ in dimensions $d > 2$. 
For $f \in \scrS(\RR^{dk})$ and $S \subset \{1,\dots,k\}$ we denote by $f_S \in \scrS(\RR^{dk})$ the partial inverse Fourier transform of $f$ in the variables $\vecy_i$ for $i \in S$:
\begin{equation}
f_S(\vecy_1,\dots,\vecy_k) =  \int_{\RR^{dk}} f(\vecz_1,\dots,\vecz_k) [\prod_{i \in S} \e(\vecz_k \cdot \vecy_k)] \, [\prod_{i\notin S} \delta(\vecz_i -\vecy_i)] \, \d \vecz_1 \cdots \d \vecz_k.
\end{equation}
We use the notation $\langle x \rangle = \sqrt{1+x^2}$ and $\langle \vecx \rangle = \sqrt{1+\|\vecx\|^2}$.
\begin{lem} \label{lem51}
Let $f \in \scrS(\RR^{dk})$. Then,
\begin{multline}
\left|\int_{\RR^{dk}} f(\vecy_1,\dots,\vecy_k) \e(-\tfrac12\theta_1\|\vecy_1\|^2-\cdots-\tfrac12\theta_k\|\vecy_k\|^2) \, \d \vecy_1 \cdots \d \vecy_k \right| \\ \leq 2^{dk/4}\langle \theta_1 \rangle^{-d/2} \cdots \langle\theta_k\rangle^{-d/2} \, \sup_{S \subset \{1,\dots,k\} } \| f_S \|_{L^1}.
\end{multline}
\end{lem}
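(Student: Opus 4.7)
The plan is to split the variables into two groups based on the size of $|\theta_i|$: for those with $|\theta_i|\geq 1$ I will exploit the Gaussian-chirp oscillation via Fresnel integrals and move to the Fourier side, while for those with $|\theta_i|<1$ I will bound trivially. Concretely, let $S=\{i\in\{1,\ldots,k\}:|\theta_i|\geq 1\}$ and apply Fourier inversion in the $S$-variables to rewrite
\begin{equation}
f(\vecy_1,\ldots,\vecy_k)=\int_{\RR^{dk}} f_S(\vecw_1,\ldots,\vecw_k)\prod_{i\in S}\e(-\vecw_i\cdot\vecy_i)\prod_{i\notin S}\delta(\vecw_i-\vecy_i)\,\d\vecw_1\cdots\d\vecw_k.
\end{equation}
Substituting this into the integral $I$ on the left-hand side and swapping the order of integration (justified, for instance, by approximating $f$ with $f_\varepsilon(\vecy)=f(\vecy)\e^{-\varepsilon\|\vecy\|^2}$ to make all integrals absolutely convergent and then letting $\varepsilon\to 0$, using that $f_S\in\scrS$ dominates the outer integrand), it remains to evaluate, for each $i\in S$, the integral $\int_{\RR^d}\e(-\vecw_i\cdot\vecy_i-\tfrac12\theta_i\|\vecy_i\|^2)\,\d\vecy_i$.

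For this I would complete the square to get
\begin{equation}
\int_{\RR^d}\e\bigl(-\vecw_i\cdot\vecy_i-\tfrac12\theta_i\|\vecy_i\|^2\bigr)\,\d\vecy_i=\e\bigl(\tfrac{1}{2\theta_i}\|\vecw_i\|^2\bigr)\int_{\RR^d}\e\bigl(-\tfrac12\theta_i\|\vecu\|^2\bigr)\,\d\vecu,
\end{equation}
and then use the $d$-fold Fresnel identity $\bigl|\int_{\RR^d}\e(-\tfrac12\theta_i\|\vecu\|^2)\,\d\vecu\bigr|=|\theta_i|^{-d/2}$. The remaining phase factors $\e(\tfrac{1}{2\theta_i}\|\vecw_i\|^2)$ and, for $i\notin S$, $\e(-\tfrac12\theta_i\|\vecw_i\|^2)$ all have modulus one, so taking absolute values inside the outer integral yields the clean intermediate bound
\begin{equation}
|I|\leq\bigg(\prod_{i\in S}|\theta_i|^{-d/2}\bigg)\|f_S\|_{L^1}.
\end{equation}

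To convert this to the advertised $\langle\theta_i\rangle^{-d/2}$ decay with the explicit constant $2^{dk/4}$, I will use two elementary inequalities separately on $S$ and its complement. For $i\in S$ we have $\langle\theta_i\rangle^2=1+\theta_i^2\leq 2\theta_i^2$, hence $|\theta_i|^{-d/2}\leq 2^{d/4}\langle\theta_i\rangle^{-d/2}$; for $i\notin S$ we have $\langle\theta_i\rangle^2<2$, hence $1\leq 2^{d/4}\langle\theta_i\rangle^{-d/2}$. Multiplying the $k$ factors gives $2^{dk/4}\prod_{i=1}^k\langle\theta_i\rangle^{-d/2}$, and bounding $\|f_S\|_{L^1}\leq\sup_{S'\subset\{1,\ldots,k\}}\|f_{S'}\|_{L^1}$ finishes the proof.

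The main technical subtlety, rather than an outright obstacle, is the rigorous manipulation of the conditionally convergent Fresnel integrals; the $\varepsilon$-regularisation above handles this cleanly because $f$ and all its partial Fourier transforms $f_S$ are Schwartz. The rest is elementary bookkeeping, with the two-case $\langle\theta_i\rangle$ vs $|\theta_i|$ comparison being the reason for the particular threshold $|\theta_i|=1$ chosen to define $S$.
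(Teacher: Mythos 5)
Your proof is correct and follows essentially the same route as the paper: choose $S$ by thresholding $|\theta_i|$ at $1$, pass to the partial Fourier transform $f_S$, complete the square and use the Fresnel integral $\bigl|\int_{\RR^d}\e(-\tfrac12\theta\|\vecu\|^2)\,\d\vecu\bigr|=|\theta|^{-d/2}$, and finally convert $|\theta_i|^{-d/2}$ and $1$ to $2^{d/4}\langle\theta_i\rangle^{-d/2}$ on $S$ and its complement. The only addition is your explicit $\varepsilon$-regularisation to justify the Fubini/Fresnel manipulation, which the paper leaves implicit but which does not change the structure of the argument.
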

\begin{proof}
We partition $\RR^k$ into $2^k$ regions according to whether $|\theta_i| \leq 1$ or $|\theta_i|>1$. Take $S \subset \{1,\dots,k\}$ and assume that for $i \in S$, $|\theta_i|>1$ and for $i \notin S$, $|\theta_i| \leq 1$. We have that
\begin{multline}
\int_{\RR^{dk}} f(\vecy_1,\dots,\vecy_k) \e(-\tfrac12\theta_1\|\vecy_1\|^2-\cdots-\tfrac12\theta_k\|\vecy_k\|^2) \, \d \vecy_1 \cdots \d \vecy_k \\
= \int_{\RR^{dk}} \int_{\RR^{dk}} f_S(\veceta_1,\dots,\veceta_k)  [\prod_{i \in S} \e(\tfrac12 \theta_i^{-1} \|\veceta_i\|^2) \, \e(-\tfrac12 \theta_i \|\vecy_i + \theta_i^{-1} \veceta_i\|^2) ]  \\
\hfill \times [\prod_{i\notin S} \e(-\tfrac12 \theta_i \|\vecy_i\|^2) \delta(\vecy_i-\veceta_i)] \,\d \veceta_1 \cdots \d \veceta_k  \, \d \vecy_1 \cdots \d \vecy_k. 
\end{multline}
Therefore, using the identity
\begin{equation}
\left| \int_{\RR^d} \e(-\tfrac12 \theta_i \, \| \vecy_i\|^2) \, \d \vecy_i \right| = |\theta_i|^{-d/2}
\end{equation}
we obtain
\begin{multline}
\left|\int_{\RR^{dk}} f(\vecy_1,\dots,\vecy_k) \e(-\tfrac12\theta_1\|\vecy_1\|^2-\cdots-\tfrac12\theta_k\|\vecy_k\|^2) \, \d \vecy_1 \cdots \d \vecy_k \right|  \\
\leq  \bigg(\prod_{i \in S} |\theta_i|^{-d/2} \bigg) \| f_S\|_{L^1}.
\end{multline}
Taking a supremum over all $2^k$ regions we see
\begin{multline}
\left|\int_{\RR^{dk}} f(\vecy_1,\dots,\vecy_k) \e(-\tfrac12\theta_1\|\vecy_1\|^2-\cdots-\tfrac12\theta_k\|\vecy_k\|^2) \, \d \vecy_1 \cdots \d \vecy_k \right| \\
\leq \min\{1,|\theta_1|^{-d/2} \} \cdots \min \{1 , |\theta_k|^{-d/2} \}\sup_{S \subset \{1,\dots,k\}} \| f_S \|_{L^1}.
\end{multline}
The result then follows since 
\begin{equation}
\min \{1, |\theta_i|^{-d/2} \} = (\max\{1,|\theta_i|\})^{-d/2}
\end{equation}
and $\max\{1,x\} \geq \tfrac{1}{\sqrt{2}} \langle x \rangle$.
\end{proof}

Let us apply this Lemma in our situation, in particular to the inner integral in \eqref{DtoT011}. For multi-indices $\vecalf, \vecbeta$ we define 
$$ \vecx^\vecbeta = x_1^{\beta_1} \cdots x_d^{\beta_d}, \quad D^\vecalf =   D_\vecx^\vecalf = (\tfrac{1}{2\pi\i} \partial_{x_1})^{\alpha_1} \cdots (\tfrac{1}{2\pi\i} \partial_{x_n})^{\alpha_n}$$
and the norm
\begin{equation}
\| f \|_{M,N,p} = \sup_{\substack{|\vecalf|\leq M\\ |\vecbeta| \leq N}} \| \vecx^\vecbeta (D^\vecalf f)(\vecx)  \|_{L^p}.
\end{equation}

\begin{prop} \label{prop51}
There exists a constant $C_d$ depending only on the dimension $d$ such that for all $\vecy_0,\vecy_n \in \RR^d$, $W\in\scrS(\RR^d)$, $\vectheta\in\RR^{n-1}$, $\Re\gamma\geq 0$,
\begin{multline}
\bigg| \int_{(\RR^d)^{n-1}} \hat W(\vecy_0-\vecy_1) \cdots \hat W(\vecy_{n-1}-\vecy_n)  \\ 
\times e\bigg[\sum_{j=1}^{n-1} \theta_j \bigg( \tfrac12\|\vecy_0\|^2 - \tfrac12 \|\vecy_j\|^2+\i\gamma \bigg) \bigg]
  \,\d\vecy_1\cdots \d\vecy_{n-1} \bigg| \\
 \leq  C_d^n\, \| W \|_{2d+2,d+1,1}^n \,\e^{-2\pi (\theta_1+\cdots+\theta_{n-1}) \gamma}\, \langle \theta_1 \rangle^{-d/2} \cdots \langle\theta_{n-1}\rangle^{-d/2}.
\end{multline}
\end{prop}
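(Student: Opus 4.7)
The plan is to apply Lemma \ref{lem51} with $k=n-1$ after separating the parts of the exponential that are independent of the integration variables. Writing
\begin{equation*}
\e\!\left[\sum_j\theta_j\bigl(\tfrac12\|\vecy_0\|^2-\tfrac12\|\vecy_j\|^2+\i\gamma\bigr)\right]
= \e^{\pi\i\|\vecy_0\|^2\sum_j\theta_j}\;\e^{-2\pi\gamma\sum_j\theta_j}\;\e\bigl(-\tfrac12\textstyle\sum_j\theta_j\|\vecy_j\|^2\bigr),
\end{equation*}
one sees that the first factor has modulus $1$, the second has modulus $\e^{-2\pi(\sum_j\theta_j)\Re\gamma}$ (which matches the bound as stated), and the remaining oscillation is exactly the form to which Lemma \ref{lem51} applies with $k=n-1$ and $f(\vecy_1,\dots,\vecy_{n-1})=\hat W(\vecy_0-\vecy_1)\cdots\hat W(\vecy_{n-1}-\vecy_n)$ (with $\vecy_0,\vecy_n$ treated as fixed parameters). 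This produces the $\prod_j\langle\theta_j\rangle^{-d/2}$ decay factor and reduces the proposition to the uniform estimate $\sup_{S\subset\{1,\dots,n-1\}}\|f_S\|_{L^1}\leq C_d^n\|W\|_{2d+2,d+1,1}^n$.

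To establish this estimate, I would re-express $f$ in position space via $\hat W(\vecp)=\int W(\vecx)\,\e(-\vecp\cdot\vecx)\,\d\vecx$ and observe that each $\vecy_i$ (for $1\leq i\leq n-1$) appears only in the factor $\e(-\vecy_i\cdot(\vecx_{i+1}-\vecx_i))$. The partial inverse Fourier transform in $\{\vecy_i\}_{i\in S}$ converts these oscillators into Dirac deltas $\delta(\vecy_i-(\vecx_{i+1}-\vecx_i))$, which link neighbouring $\vecx$-variables so that the index set $\{1,\dots,n\}$ is partitioned into maximal blocks with breaks at each $i\notin S$. This produces a factorisation of $f_S$ into transfer kernels of two kinds: isolated fixed edges contributing $\hat W(\vecy_j-\vecy_{j+1})$, and free intervals of $r$ consecutive free indices, bounded by fixed nodes $\vecy^L$ and $\vecy^R$, contributing $W^{(r+1)}(\vecy^L,\vecy_{\mathrm{int}},\vecy^R)$ whose modulus equals $|\Psi(\vecy_{\mathrm{int}};\vecy^R-\vecy^L)|$, where $\Psi$ is the Fourier transform in a single auxiliary position variable of a product of $r+1$ shifted copies of $W$. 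Integrating out the internal $\vecy$-variables of each piece leaves a kernel $\bar K(\vecy^R-\vecy^L)$ depending only on the difference of adjacent fixed nodes; the remaining integration over the shared fixed-node variables is an iterated convolution evaluated at $\vecy_n-\vecy_0$, which Young's inequality bounds by $\prod_{j<m}\|\bar K_j\|_{L^1}\cdot\|\bar K_m\|_{L^\infty}$.

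For fixed edges one has $\|\hat W\|_{L^1}\leq C_d\|W\|_{d+1,0,1}$ and $\|\hat W\|_{L^\infty}\leq\|W\|_{L^1}$; for free intervals of length $r$, Fubini gives $\|\bar K\|_{L^\infty}\leq\|W\|_{L^1}^{r+1}$ uniformly in the difference variable, while the Fourier-decay inequality $\|\hat G\|_{L^1}\leq C_d\sum_{|\vecalf|\leq d+1}\|D^\vecalf G\|_{L^1}$ applied to $G(\vecx)=W(\vecx)\prod_k W(\vecx+\vecw_k)$, combined with Leibniz's rule, yields $\|\bar K\|_{L^1}\leq C_d(r+1)^{d+1}\|W\|_{d+1,0,1}^{r+1}$. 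The main obstacle is controlling the combinatorial blow-up from this Leibniz step: a priori, the factor $(r+1)^{d+1}$ per free interval could accumulate into $n^{\Theta(n)}$ across many short intervals. The resolution is the elementary inequality $1+r\leq 2^r$ for $r\geq 1$, which yields $\prod_k(1+r_k)\leq 2^{\sum_k r_k}\leq 2^n$ across the free intervals and absorbs the combinatorial contribution into the universal constant. Multiplying over all (at most $n$) pieces one obtains $\|f_S\|_{L^1}\leq C_d^n\|W\|_{d+1,0,1}^n\leq C_d^n\|W\|_{2d+2,d+1,1}^n$, completing the proof.
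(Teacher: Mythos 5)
Your reduction to $\sup_S\|f_S\|_{L^1}$ via Lemma~\ref{lem51}, after factoring out the unimodular $\vecy_0$-phase and the $\e^{-2\pi\gamma\sum\theta_j}$ decay, is identical to the paper's opening move. Your estimate of $\|f_S\|_{L^1}$, however, takes a genuinely different route, and it is a correct one.

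The paper stays in momentum space: it inserts weights $\langle\vecx_i\rangle^{-d-1}$ against the Fourier oscillators $\e(\vecy_i\cdot\vecx_i)$ defining $f_S$, integrates by parts in each $\vecy_i$ with $i\in S$ (each such derivative hits at most the two adjacent factors $\hat W(\vecy_{i-1}-\vecy_i)$ and $\hat W(\vecy_i-\vecy_{i+1})$, yielding $\leq 2^{|\vecm_i|}$ Leibniz terms), and then converts the telescoping chain $\prod_j\varphi_j(\vecy_{j-1}-\vecy_j)$ into a genuine product by the substitution $\vecy_j\mapsto\vecy_j-\vecy_{j+1}$, closing with $\|\varphi_0\|_{L^\infty}\prod_j\|\varphi_j\|_{L^1}$. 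This lands on the norm $\|W\|_{2d+2,d+1,1}$ because one $\hat W$ can absorb derivatives of order $d+1$ from each of its two neighbouring $\vecy_i$ variables. Your approach instead passes to position space, where the partial inverse Fourier transform in $\{\vecy_i\}_{i\in S}$ rigidifies each coupling $\e(-\vecy_i\cdot(\vecx_{i+1}-\vecx_i))$ into the constraint $\vecx_{i+1}-\vecx_i=\vecy_i$; the $\vecx$-chain breaks into maximal blocks at the indices $i\notin S$, and after collapsing the deltas each block contributes a kernel whose modulus depends on its two boundary $\vecy$-nodes only through their difference. Integrating $|f_S|$ over the internal $\vecy$-variables produces $\bar K_m$, and the remaining $\vecy_{c_m}$-integrations form an iterated convolution evaluated at $\vecy_0-\vecy_n$, so Young's inequality closes the estimate. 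Both routes incur a controlled combinatorial factor — yours $\prod_m(r_m+1)^{d+1}\leq 2^{n(d+1)}$ via $1+r\leq 2^r$, the paper's $\prod_{i\in S}2^{|\vecm_i|}\leq 2^{n(d+1)}$ from two-factor Leibniz — and both are $C_d^n$. One small bonus of your version is that Leibniz distributes the $d+1$ derivatives of the Fourier-decay inequality across the $r+1$ factors $W(\vecx+\vecw_k)$, so each receives order $\leq d+1$, giving the slightly sharper control by $\|W\|_{d+1,0,1}$ (of course still bounded by $\|W\|_{2d+2,d+1,1}$ as required). The block/diagrammatic bookkeeping — verifying the difference-variable dependence of each block kernel, the Jacobian-one change of variables from internal $\vecy_i$ to partial sums $\vecw_k$, and the convolution structure in the fixed-node variables — is more involved to write out in full, but the plan is sound and would yield the stated bound.
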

\begin{proof}
The exponentially decaying factors can be pulled outside immediately. We then want to apply Lemma \ref{lem51}. Let $S = \{s_1,\dots,s_k\} \subset \{1,\dots,n-1\}$ and consider the norm $
\left\| f_S\right\|_{L^1} $
where $f(\vecy_1,\dots,\vecy_{n-1}):= \hat W(\vecy_0-\vecy_1) \cdots  \hat W(\vecy_{n-1}-\vecy_n)$ with $\vecy_0$ and $\vecy_n$ constant. By definition we have that
\begin{multline} \label{422}
\|f_S\|_{L^1} = \int_{\RR^{d(n-1)}} \bigg| \int_{\RR^{d|S|}} \hat W(\vecy_0-\vecy_{1}) \dots \hat W(\vecy_{n-1}-\vecy_{n}) \\
 \hfill \times [\prod_{i\in S} \e(\vecy_{i}\cdot\vecx_{i}) \d \vecy_{i}]  \bigg| [\prod_{i\notin S} \d \vecy_i] [\prod_{i\in S} \d \vecx_{i}].
 \end{multline}
Let $\vecm = (m_1,\dots,m_d) \in \ZZ_{\geq 0}^{d}$ and put $|\vecm| = m_1 +\cdots+m_d$. We define the multinomial coefficient
$$ \begin{pmatrix} N \\ \vecm \end{pmatrix} = \frac{N!}{m_1!\cdots m_d! (N-m_1-\cdots-m_d)!}.$$
and use that
\begin{equation} \langle \vecx \rangle^N \leq (1+|x_1|+\cdots+|x_d|)^N  = \sum_{| \vecm | \leq N} \begin{pmatrix} N \\ \vecm \end{pmatrix} \prod_{j=1}^d |x_j|^{m_j}
\end{equation}
to bound \eqref{422} above by
\begin{equation} \label{4230}
\begin{split}
\|f_S\|_{L^1} & =  \sum_{|\vecm_{s_1}|\leq d+1} \begin{pmatrix} d+1 \\ \vecm_{s_1} \end{pmatrix} \cdots \sum_{|\vecm_{s_k}|\leq d+1} \begin{pmatrix} d+1 \\ \vecm_{s_k} \end{pmatrix} \\
 & \times \int_{\RR^{d(n-1)}} \bigg| [\prod_{i \in S} \vecx_{i}^{\vecm_{i}}] \int_{\RR^{d|S|}} \hat W(\vecy_0-\vecy_{1}) \dots \hat W(\vecy_{n-1}-\vecy_{n}) \\
 & \times [\prod_{i\in S} \e(\vecy_{i}\cdot\vecx_{i}) \d \vecy_{i}]  \bigg| [\prod_{i\notin S} \d \vecy_i]  [\prod_{i \in S} \langle \vecx_i \rangle^{-d-1} \d \vecx_{i}].
\end{split}
\end{equation}
Integrating by parts with respect to $\vecy_{i}$ for $i \in S$ and pulling the absolute value inside the integral yields the upper bound
\begin{multline} \label{423}
\|f_S\|_{L^1} \leq  \bigg( \int_{\RR^d} \langle \vecx \rangle^{-d-1} \d \vecx \bigg)^k  \sum_{|\vecm_{s_1}| \leq d+1} \begin{pmatrix} d+1 \\ \vecm_{s_1} \end{pmatrix} \cdots \sum_{|\vecm_{s_k}| \leq d+1} \begin{pmatrix} d+1 \\ \vecm_{s_k} \end{pmatrix}\\
\times \int_{\RR^{d(n-1)}}  \bigg| [\prod_{i\in S} D_{\vecy_i}^{\vecm_i}] \hat W(\vecy_0-\vecy_{1}) \dots \hat W(\vecy_{n-1}-\vecy_{n}) \bigg| \d \vecy_1 \cdots \d \vecy_{n-1}.
\end{multline}
Using the triangle inequality, the $\vecy_i$ integral can be bounded above by a sum of $\prod_{i\in S} 2^{|\vecm_i|}$ terms of the form
\begin{multline} \label{four26}
\int_{\RR^{d(n-1)}} \big|  \varphi_0(\vecy_0-\vecy_{1}) \cdots \varphi_{n-1}(\vecy_{n-1}-\vecy_{n})  \big|
\d \vecy_1 \cdots \d \vecy_{n-1} \\
= \int_{\RR^{d(n-1)}} \big|  \varphi_0(\vecy_0-\vecy_{1}-\cdots-\vecy_n) \varphi_1(\vecy_1) \cdots \varphi_{n-1}(\vecy_{n-1})  \big|
\d \vecy_1 \cdots \d \vecy_{n-1}
\end{multline}
where each $\varphi_i$ is a derivative of $\hat W$ of order $\leq 2d+2$. Pulling absolute values inside tells us that \eqref{four26} is bounded above by
\begin{equation}\label{42six}
\| \varphi_0 \|_{\L^\infty} \| \varphi_1 \|_{\L^1} \cdots \| \varphi_{n-1} \|_{\L^1} .  
\end{equation}
We have that 
\begin{equation}
\begin{split}\label{infnorms}
\|\varphi_0\|_{L^\infty} &\leq \sup_{|\vecalf| < 2d+2} \sup_{\vecy \in \RR^d}  \left| D_{\vecy}^\vecalf \int_{\RR^d}  W(\vecx) \, \e(-\vecx \cdot\vecy) \d \vecx \right| \\ 
&=  \sup_{|\vecalf| < 2d+2} \sup_{\vecy \in \RR^d} \left|\int_{\RR^d} \vecx^\vecalf W(\vecx) \, \e(-\vecx \cdot\vecy) \d \vecx \right|\leq \| W \|_{2d+2,0,1}.
\end{split} 
\end{equation}
For the $L^1$ norms we similarly have
\begin{equation}
\begin{split}
\|\varphi_i\|_{L^1} &\leq \sup_{|\vecalf|<2d+2} \int_{\RR^d} \left| D_{\vecy}^\vecalf \int_{\RR^d} W (\vecx) \e(-\vecx\cdot\vecy) \d \vecx \right| \, \d \vecy \\
&= \sup_{|\vecalf|<2d+2} \int_{\RR^d} \left|  \int_{\RR^d}\vecx^{\vecalf} W (\vecx) \e(-\vecx\cdot\vecy) \d \vecx \right| \, \d \vecy \\
&\leq \sup_{|\vecalf|<2d+2} \left(\int_{\RR^d} \langle \vecy\rangle^{-d-1} \d \vecy\right) \sum_{\vecm \leq d+1} \begin{pmatrix} d+1 \\ \vecm \end{pmatrix} \int_{\RR^d}  \left|  D_\vecx^{\vecm} \vecx^{\vecalf} W (\vecx)\right| \d \vecx.
\end{split} 
\end{equation}
Note that $D_\vecx^\vecm \vecx^\vecalf W(\vecx)$ yields a sum of up to $2^{|\vecm|}$ terms each of which is of the form $\vecx^{\tilde \vecalpha} D_{\vecx}^{\tilde \vecm} W(\vecx)$ where $|\tilde \vecalf|<|\vecalf|$ and $|\tilde \vecm| < |\vecm|$. Hence, using the identity 
\begin{equation} \label{multinomialidentity}
\sum_{|\vecm|<d+1} \begin{pmatrix} d+1 \\ \vecm \end{pmatrix} 2^{|\vecm|} = (2d+1)^{d+1}
\end{equation}
we obtain
\begin{equation} \label{onenorms}
\|\varphi_i\|_{L^1} \leq  \left( \int_{\RR^d} \langle \vecy\rangle^{-d-1} \d \vecy \right) (2d+1)^{d+1} \, \| W \|_{2d+2,d+1,1}.
\end{equation}
Finally then, combining these bounds with \eqref{423} and applying \eqref{multinomialidentity} we obtain
\begin{equation}
\|f_S\|_{L^1} \leq \left(\int_{\RR^d} \langle \vecx \rangle^{-d-1} \d \vecx\right)^{k+n-1} (2d+1)^{(k+n-1)(d+1)} \, \| W \|_{2d+2,d+1,1}^n.
\end{equation}
\end{proof}

This proposition shows that 
\begin{equation}
|T_n^\gamma(\vecy_0,\vecy_n) | \leq (2 \pi)^{n-1} C_d^n\, \|W \|_{2d+2,d+1,1}^n \bigg(\int_0^\infty \langle \theta \rangle^{-d/2} \,\d\theta \bigg)^{n-1}
\end{equation}
for all $\Re\gamma\geq 0$, provided $d >2 $. Therefore, the series \eqref{Tseries} converges absolutely, uniformly for all $\Re\gamma\geq 0$, if
\begin{equation}
|\lambda| < \left(2\pi C_d\, \|W \|_{2d+2,d+1,1} \int_0^\infty \langle \theta \rangle^{-d/2} \,\d\theta \right)^{-1}.
\end{equation}

\section{The perturbation series}\label{secDuhamel}

We set $H_\lambda=H_{1,\lambda}$, $U_\lambda(t)=U_{1,\lambda}(t)$, and note that
\begin{equation} \label{lambdaisrescaled}
H_{h,\lambda}= h^2 H_{\lambda/h^2}, \qquad
U_{h,\lambda}(t)= U_{\lambda/h^2}(h t) .
\end{equation}
The first term in \eqref{Ath} thus can be written
$$
U_{\lambda/h^2}(t h r^{1-d}) \, \Op_{r,h}(a) \, U_{\lambda/h^2}(-t h r^{1-d}) ,
$$ 
which is the quantity of interest for this work.
To simplify notation, we will write in the upcoming discussion $\lambda$ for $\lambda/h^2$ and $t$ for $t h r^{1-d}$ and later re-substitute when taking limits.
Furthermore, we can declutter our expressions by passing to the so-called interaction picture,
$$ U_{\lambda}(t) U_0(-t) \, \Op_{r,h}(a) \, U_0(t) U_{\lambda}(-t). $$
After the relevant calculations we then simply replace $a$ by $L_0(t) a$ due to \eqref{egorovh}. Because of the gauge invariance of $A(t)$ in \eqref{At} under the substitution $H_{h,\lambda}\mapsto H_{h,\lambda}+E$ for any $E\in\RR$, we may replace the potential $V$ by $V-\int_\scrL V(\vecx)\d\vecx$ in the following. 
This means that the potential now has the Fourier series
\begin{equation}\label{nonconsec}
V(\vecx)= r^d \sum_{\vecb\in\scrL^*\setminus\{\vecnull\}} \hat W(r \vecb) \,\e(\vecb \cdot \vecx) , 
\end{equation}
with $\hat W(\vecy) = \int W(\vecx) e(-\vecx\cdot\vecy)\,\d\vecy$.
Thus we may ignore in the following expansions all terms with $\hat W(\vecnull)$; but note that we have {\em not} assumed here that $\hat W(\vecnull)=0$.

We proceed using Duhamel's principle. In particular one has that
\begin{equation}
U_{\lambda}(t)  = U_0(t) -  2 \pi \i \lambda  \int_0^t U_\lambda(s)\, \Op(V)\,  U_0(t-s) \,   \d s.
\end{equation}
Iterating this expression yields a formal perturbative expansion for $U_\lambda(t) U_0(-t)$ and $U_0(t) U_{\lambda}(-t)$. After multiplying these two series together one obtains as in \cite[\S 5]{GM} the formal perturbative expansion
\begin{multline} \label{scrIdef}
\Tr[ \Pi_\vecalf U_\lambda(t ) U_0(-t) \Op_{r,h}( a) U_0(t ) U_\lambda(-t ) \Op_{r,h}(b) ] \\
= 
\sum_{n=0} ^{\infty} (2 \pi \i \lambda)^n  \scrI_n(t) + O(r^\infty),
\end{multline}
with $\scrI_0(t)= \Tr[ \Pi_\vecalf \Op_{r,h}( a) \Op_{r,h}(b) ]$ and for $n\geq 1$
\begin{equation}
 \scrI_n(t) = \sum_{\ell=0}^{n} (-1)^\ell \scrI_{\ell,n}(t)  .
\end{equation}
For $1\leq \ell \leq n-1$ we have, with the shorthand $\scrP=\scrL^*+\vecalf$,
\begin{equation}
\begin{split}
& \scrI_{\ell,n}(t) =r^{nd} h^d \sum_{\substack{\vecp_1,\dots,\vecp_n=\vecp_0\in\scrP\\ \text{non-consec}}} 
\scrW(r\vecp_0,\ldots,r\vecp_n)
\int_{\RR^d}   \int_{\substack{0<s_1<\cdots<s_\ell<t \\ 0<s_n<\cdots<s_{\ell+1}<t} } 
\\
& \times  \e\bigg(-\tfrac12 \, s_1\|\vecp_0\|^2+\tfrac12 \sum _{j=1}^{\ell-1} (s_j-s_{j+1}) \|\vecp_j \|^2  + \tfrac12 \, s_\ell \|\vecp_\ell \|^2
-\tfrac12 \, s_{\ell+1}\|\vecp_\ell + r^{d-1} \veceta\|^2 \bigg) \\
& \times \e\bigg(\tfrac12 \sum_{j=\ell+1}^{n-1} (s_j-s_{j+1}) \|\vecp_j +r^{d-1} \veceta \|^2 + \tfrac12 \, s_n \|\vecp_n+ r^{d-1} \veceta \|^2\bigg)  \\
& \times \tilde a (-\veceta, h (\vecp_\ell+\tfrac12 r^{d-1}\veceta)) \,   \tilde b( \veceta, h (\vecp_n+ \tfrac12 r^{d-1} \veceta)) \, \d \vecs  \, \d \veceta,
\end{split}
\end{equation}
where 
\begin{equation}\label{scrWdef}
\scrW(\vecy_0,\ldots,\vecy_n)= \prod_{j=0}^{n-1}  \hat W(\vecy_j-\vecy_{j+1})  ,
\end{equation}
and the summation ``non-consec'' is restricted to terms with $\vecp_j\neq\vecp_{j+1}$; recall the comment after \eqref{nonconsec}. We make the variable substitutions $u_0= s_1$, $u_j = s_{j+1}-s_j$ for $j = 1,\ldots, \ell-1$, and $u_j = s_j - s_{j+1}$ for $j = \ell+1,\ldots, n-1$ and $u_n = s_n$. Let $\Box_{\ell,n}(t)$ denote the simplex
\begin{multline}
\Box_{\ell,n}(t)=\{ \vecu=(u_0,\ldots,u_n) \in\RR_{\geq 0}^{n+1} \mid  \\
u_0+\cdots+u_{\ell-1} < t,\; u_\ell=0,\;
u_{\ell+1} + \cdots + u_n < t\} ,
\end{multline}
and let $\d^\perp\vecu=\prod_{j\neq \ell} \d u_j$. Then
\begin{equation}\label{LOT}
\begin{split}
\scrI_{\ell,n}(t) &=r^{nd} h^d  \sum_{\substack{\vecp_1,\dots,\vecp_n=\vecp_0\in\scrP\\ \text{non-consec}}} 
\scrW(r\vecp_0,\ldots,r\vecp_n)\int_{\RR^d}   \int_{\Box_{\ell,n}(t)}
   \\
&\times  \e\bigg(\tfrac12 \, \sum_{j=0}^{\ell-1}  u_j (\|\vecp_\ell\|^2-\|\vecp_j \|^2) +
\tfrac12 \sum_{j=\ell+1}^{n} u_j (\|\vecp_j + r^{d-1} \veceta\|^2-\|\vecp_\ell+r^{d-1}\veceta\|^2)  \bigg) \\
& \times \tilde a (-\veceta, h (\vecp_\ell+\tfrac12 r^{d-1}\veceta)) \,  \tilde b( \veceta, h (\vecp_n+ \tfrac12 r^{d-1} \veceta)) \, \d^\perp\vecu \, \d \veceta .
\end{split}
\end{equation}
The terms $\scrI_{0,n}(t)$ and $\scrI_{n,n}(t)$ have an analogous representation. 

As in the case of the $T$-matrix, it will be useful to embed these quantities in an analytic family by extending to complex energy. To this end, we define for $\Re\gamma\geq 0$,
\begin{equation}\label{LOT2}
\begin{split}
\scrI_{\ell,n}^\gamma(t) &=r^{nd} h^d  \sum_{\substack{\vecp_1,\dots,\vecp_n=\vecp_0\in\scrP\\ \text{non-consec}}} 
\scrW(r\vecp_0,\ldots,r\vecp_n)\int_{\RR^d}   \int_{\Box_{\ell,n}(t)}
   \\
&\times  \e\bigg(\tfrac12 \, \sum_{j=0}^{\ell-1}  u_j (\|\vecp_\ell\|^2-\|\vecp_j \|^2+\i\gamma) \\ & \qquad +
\tfrac12 \sum_{j=\ell+1}^{n} u_j (\|\vecp_j + r^{d-1} \veceta\|^2-\|\vecp_\ell+r^{d-1}\veceta\|^2+\i\gamma)  \bigg) \\
& \times \tilde a (-\veceta, h (\vecp_\ell+\tfrac12 r^{d-1}\veceta)) \,  \tilde b( \veceta, h (\vecp_n+ \tfrac12 r^{d-1} \veceta)) \, \d^\perp\vecu \, \d \veceta .
\end{split}
\end{equation}
In the following, we will drop the superscript $\gamma$ in the case $\gamma=0$.

We wish to consider the limit of this quantity as $h=r \to 0$, uniformly for $\Re\gamma\geq 0$. The first simplification we make is to replace the second argument of $\tilde a$ and $\tilde b$ by $h \vecp_\ell$ and $h \vecp_n$ respectively which incurs an error of order $r^d$. Recall now that, in view of \eqref{Ath2} and \eqref{lambdaisrescaled}, we are interested in the quantity $h^{-2n}\scrI_{\ell,n}^\gamma(t h r^{1-d})$. Since $h^{-2n} r^{nd} h^d (h r^{1-d})^n = h^{d-n} r^n = r^d$ we see for $h=r$
\begin{multline}\label{multieq}
h^{-2n}\scrI_{\ell,n}^\gamma(t h r^{1-d})  \\
= r^d \sum_{\substack{\vecp_1,\dots,\vecp_n=\vecp_0\in\scrP\\ \text{non-consec}}} H_{t,\ell,n}^{r^{2-d}\gamma}\big( r^{2-d} (\tfrac12\|\vecp_0\|^2,\ldots,\tfrac12\|\vecp_n\|^2),  r \vecp_0,\ldots,r\vecp_n \big) (1+O(r^d)) ,
\end{multline}
where
\begin{equation}\label{Hdef}
\begin{split}
& H_{t,\ell,n}^\gamma(\vecxi,\vecy_0,\ldots,\vecy_n) =\scrW(\vecy_0,\ldots,\vecy_n) \int_{\RR^{d}} \int_{\Box_{\ell,n}(t)}  \\
&\times \e\bigg(- \sum_{j=0}^{\ell-1}  u_j (\xi_j-\xi_\ell-\i\gamma) +  \sum_{j=\ell+1}^{n}  u_j (\xi_j-\xi_\ell+\i\gamma+(\vecy_j-\vecy_\ell)\cdot\veceta)\bigg) \\
& \times \tilde a (-\veceta, \vecy_\ell)\, \tilde b( \veceta, \vecy_n) \, \d^\perp\vecu \, \d \veceta.
\end{split}
\end{equation}
Using the definition of $\tilde a$ and $\tilde b$ and integrating over $\veceta$ yields
\begin{equation} \label{Hdef2}
\begin{split}
H_{t,\ell,n}^\gamma(\vecxi,\vecy_0,\ldots,\vecy_n)&= \scrW(\vecy_0,\ldots,\vecy_n) \int_{\RR^{d}} \int_{\Box_{\ell,n}(t)}
  \\
& \times  \e\bigg(- \sum_{j=0}^{\ell-1}  u_j (\xi_j-\xi_\ell-\i\gamma) +  \sum_{j=\ell+1}^{n}  u_j (\xi_j-\xi_\ell+\i\gamma) \bigg)  \\
& \times \scrA_{\ell,n}(\vecy_0,\ldots,\vecy_n,\vecu)\, \d^\perp\vecu ,
\end{split}
\end{equation}
where
\begin{equation}\label{scrAdef}
\scrA_{\ell,n}(\vecy_0,\ldots,\vecy_n,\vecu)=   \int_{\RR^d}   a\bigg( \vecx - \sum_{j=\ell+1}^n  u_j (\vecy_j-\vecy_\ell),\vecy_\ell \bigg) b( \vecx, \vecy_n) \, \d \vecx.
\end{equation}

We recall here that we are working in the interaction picture. To return to the original lab frame, we replace $a$ by the evolved symbol (i.e., replacing $\vecx$ by $\vecx-t\vecy_\ell$), so that 
$$
a\bigg( \vecx - \sum_{j=\ell+1}^n  u_j (\vecy_j-\vecy_\ell),\vecy_\ell \bigg)
$$
becomes
$$
a\bigg( \vecx - \bigg(t-\sum_{j=\ell+1}^n  u_j\bigg)\vecy_\ell - \sum_{j=\ell+1}^n  u_j \vecy_j,\vecy_\ell \bigg) .
$$
One can interpret this as corresponding to a classical trajectory in which the particle initially has momentum $\vecy_\ell$, undergoes straight line motion for time $(t-u_{\ell+1}-\cdots-u_n)$, and then experiences $n-\ell$ collisions separated by straight line motion for times $u_j$ with momenta $\vecy_{j}$ for $j= \ell+1,\dots,n$.

\section{The Poisson model}\label{secPoisson}

We note that the momenta $\vecp_j$ in the summation \eqref{multieq} are of order $r^{-1}$, and that for $r\to 0$
\begin{equation}
|\{ \vecp\in\scrP : \|\vecp \|^2 \leq r^{-2} \}| \sim \vol(\scrB_1^d) \, r^{-d}
\end{equation}
where $\scrB_1^d$ is the $d$-dimensional unit ball. This means that the average spacing between consecutive values of the set $\{ \|\vecp\|^2 \leq r^{-2}:\vecp\in\scrP \}$ is of the order $r^{d-2}$. Thus \eqref{multieq} measures correlations between the $\|\vecp_j\|^2$ precisely on the scale of the mean spacing. Starting with the influential work of Berry and Tabor in the context of quantum chaos \cite{BerryTabor}, it has been conjectured that the statistics on this correlation scale should be governed by a one-dimensional Poisson process. Rigorous results towards a proof of this conjecture are mostly limited to two-point statistics, where the problem reduces to a variant of quantitative versions of the Oppenheim conjecture \cite{Bleher95,EMM,Marklof02,Marklof03,MargulisMohammadi,Sarnak96}; results on higher correlation functions are obtained in \cite{VanderKam}.

\begin{assump}\label{hyp0}
We assume in the following that in the $r=h\to 0$ asymptotics of \eqref{multieq} the lattice $\scrP = \scrL^* + \vecalf$ with $\scrL$ fixed (arbitrary) and $\vecalf$ random can be replaced by a Poisson process in $\RR^d$ with unit intensity.
\end{assump}

This assumption should be thought of as a generalisation of the Berry-Tabor conjecture on the Poisson distribution of energy levels of quantum systems with integrable classical Hamiltonian. We assume both that the lengths of lattice vectors (which represent the energy levels) behave as if they belonged to a Poisson process, and that the angular distribution of the lattice vectors is uniform on the $(d-1)$-sphere and independent of the length (on the correct scale). 

Assumptions of this kind have previously been used in modeling spectral correlations of diffractive systems, see for instance \cite{Bogomolny00,Bogomolny02,Letendre19}. To formulate Assumption \ref{hyp0} in precise terms, define $\scrJ_{\ell,n}^\gamma(t)$ via the relation 
\begin{multline}\label{multieq007}
h^{-2n}\scrJ_{\ell,n}^\gamma(t h r^{1-d})  \\
= r^d\; \EE \sum_{\substack{\vecp_1,\dots,\vecp_n=\vecp_0\in\scrP\\ \text{non-consec}}} H_{t,\ell,n}^{r^{2-d}\gamma}\big( r^{2-d} (\tfrac12\|\vecp_0\|^2,\ldots,\tfrac12\|\vecp_n\|^2),  r \vecp_0,\ldots,r\vecp_n \big) ,
\end{multline}
where $\scrP$ is a Poisson point process in $\RR^d$ with intensity one, and $\EE$ denotes expectation.
Then Assumption \ref{hyp0} should be understood as
\begin{equation}\label{hyp007}
\lim_{r=h\to 0} h^{-2n} \int_{\RR^d/\scrL^*}  \sum_{n=1}^\infty \sum_{\ell=0}^n (-1)^{\ell} \big[ \scrI_{\ell,n}^\gamma(t h r^{1-d}) - \scrJ_{\ell,n}^\gamma(t h r^{1-d}) \big] \d\vecalf = 0
\end{equation}
for all $\Re\gamma\geq 0$, $t>0$. Similarly, it is a conjecture that 
\begin{equation} \label{hyp008}
\lim_{r=h\to 0} h^{-2n} \sum_{n=1}^\infty \sum_{\ell=0}^n (-1)^{\ell} \big[ \scrI_{\ell,n}^\gamma(t h r^{1-d}) - \scrJ_{\ell,n}^\gamma(t h r^{1-d}) \big]  = 0
\end{equation}
for Lebesgue almost every $\vecalf$, and indeed for $\vecalf$ satisfying a mild diophantine condition as in \cite{Marklof02,Marklof03}. Statement \eqref{hyp008} is more subtle than \eqref{hyp007}, though the implication $\eqref{hyp008}\Rightarrow\eqref{hyp007}$ would require uniform upper bounds for dominated convergence; cf.~\cite[Sect.~12]{GM}. Statement \eqref{hyp007} is the only heuristic assumption made in this study.

As an illustrative example, fix $\vecalf =(\sqrt{2},\sqrt{3})$ and consider the sequence $(\lambda_i, \theta_i)_{i \in \NN}$ of elements of the set $\{( \pi \|\vecn+\vecalf\|^2, \frac{1}{2\pi } \arg (\vecn+\vecalf) ) \in \RR_{\geq 0} \times [0,1) \mid \vecn \in \ZZ^2 \} $ arranged in increasing order according to the first component where $0\leq \arg \vecz < 2 \pi$ is the polar angle of $\vecz$. Our assumption is concerned with the distribution of points $(\lambda_i,\theta_i)$ restricted to a strip $[R- \Delta R,R) \times [0,1)$ for $\Delta R > 0$ fixed and $R \to \infty$. Due to the choice of normalisation, a strip of this form contains roughly $\Delta R$ points. Broadly speaking, the points contained in the strip should behave more and more randomly as $R$ increases -- see Figure \ref{fig1}.
\begin{center}
\begin{figure}[h]
\includegraphics[width=0.8\textwidth]{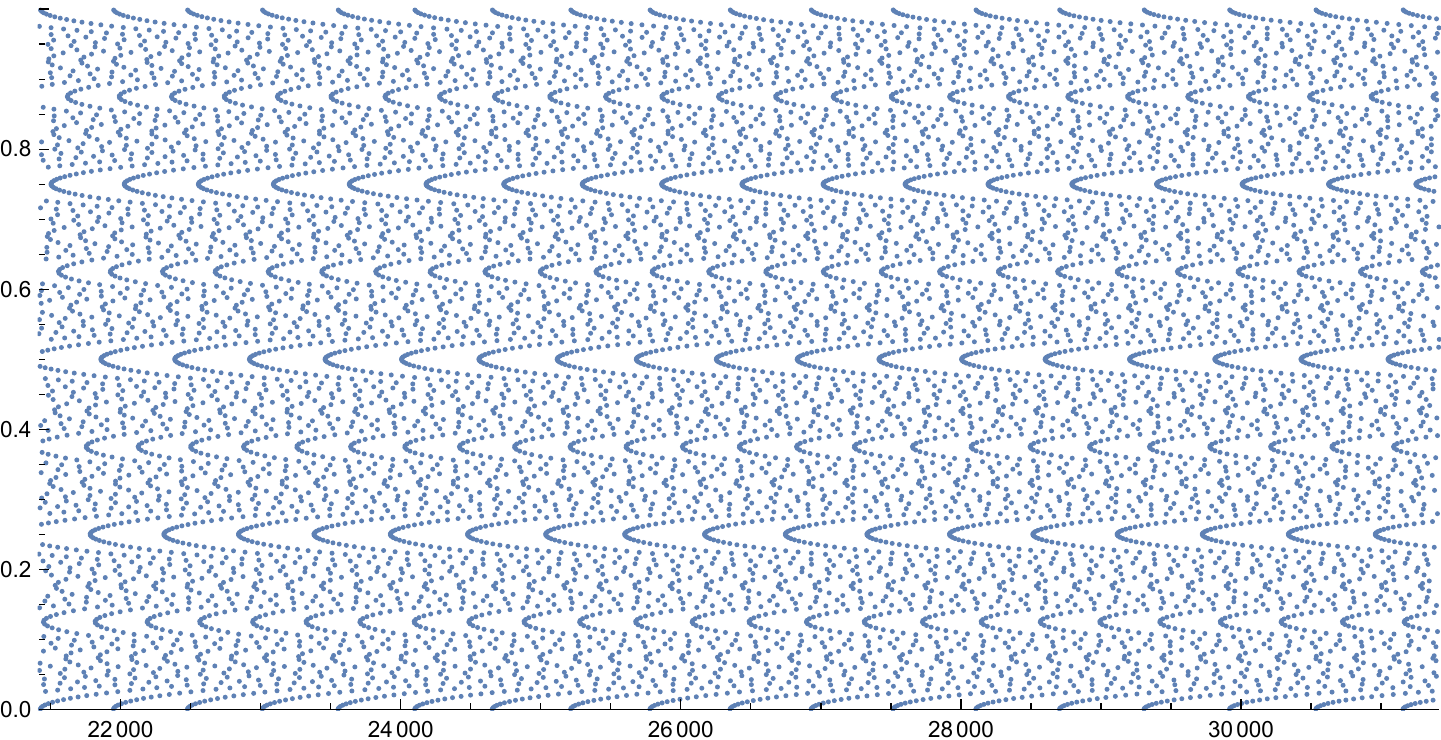}
\includegraphics[width=0.8\textwidth]{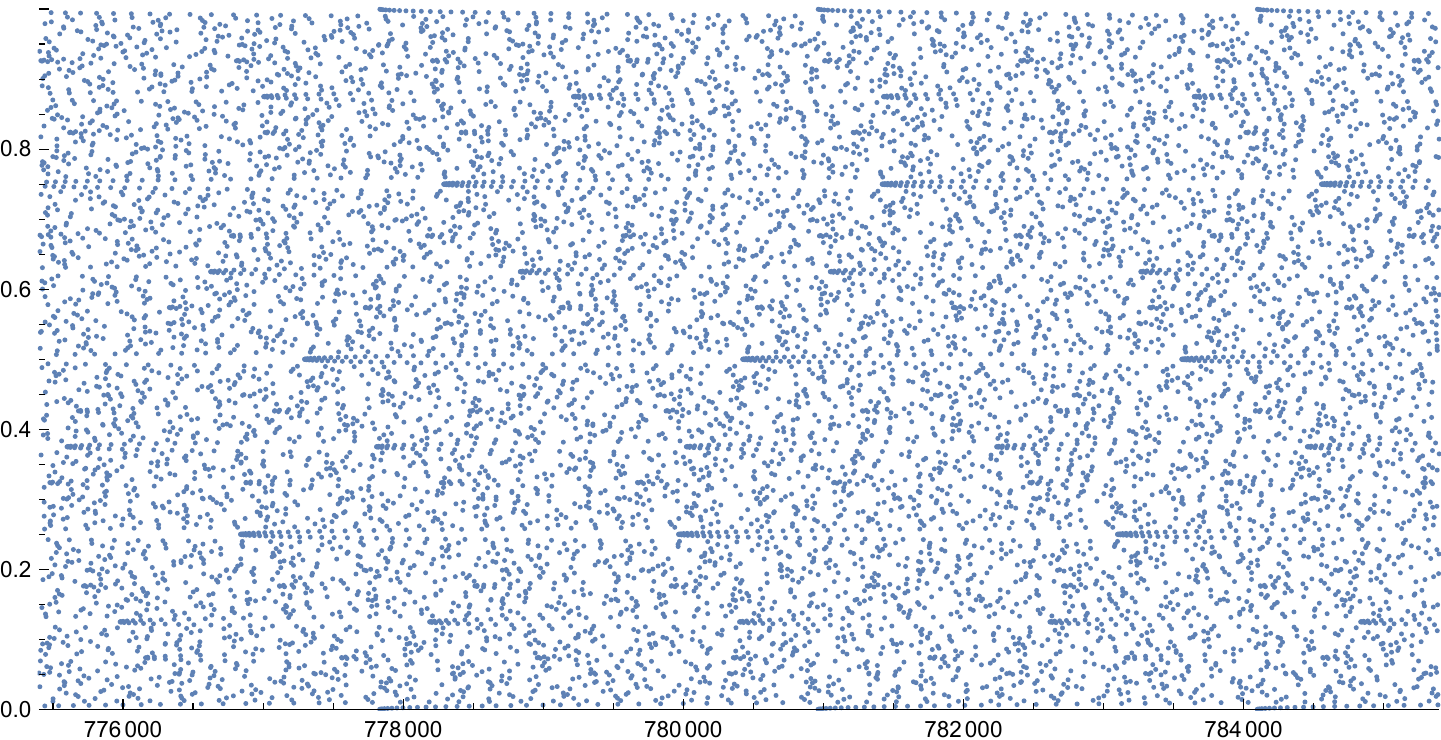}
\caption{Scatter plots of $(\lambda_i,\theta_i)$ in the strip $[R-\Delta R,R) \times [0,1)$ for $R = \pi \times 100^2$ and $R = \pi \times 500^2$,  respectively, with $\Delta R = 10^4$. For large R we expect the point set to be modelled by a Poisson point process, cf. Assumption \ref{hyp0}.}\label{fig1}
\end{figure}
\end{center}

The Berry-Tabor conjecture, and by extension our assumption, is more readily expressed in terms of the gap distribution. Consider the sequence $(\lambda_{i+1}-\lambda_i,\theta_i)$ for all points in the window $[R-\Delta R,R) \times [0,1)$. The Berry-Tabor conjecture states that in the limit $R \to \infty$, the sequence of gaps $\lambda_{i+1}-\lambda_i$ has an exponential distribution with mean $1$. Our Assumption \ref{hyp0} then implies that in the limit $R\to \infty$, the sequence of pairs $(\lambda_{i+1}-\lambda_i,\theta_i)$ is distributed according to the product of an exponential distribution with mean $1$ and the uniform distribution on $[0,1)$ -- see Figure \ref{fig2}.

\begin{center}
\begin{figure}[h]
\includegraphics[width=0.8\textwidth]{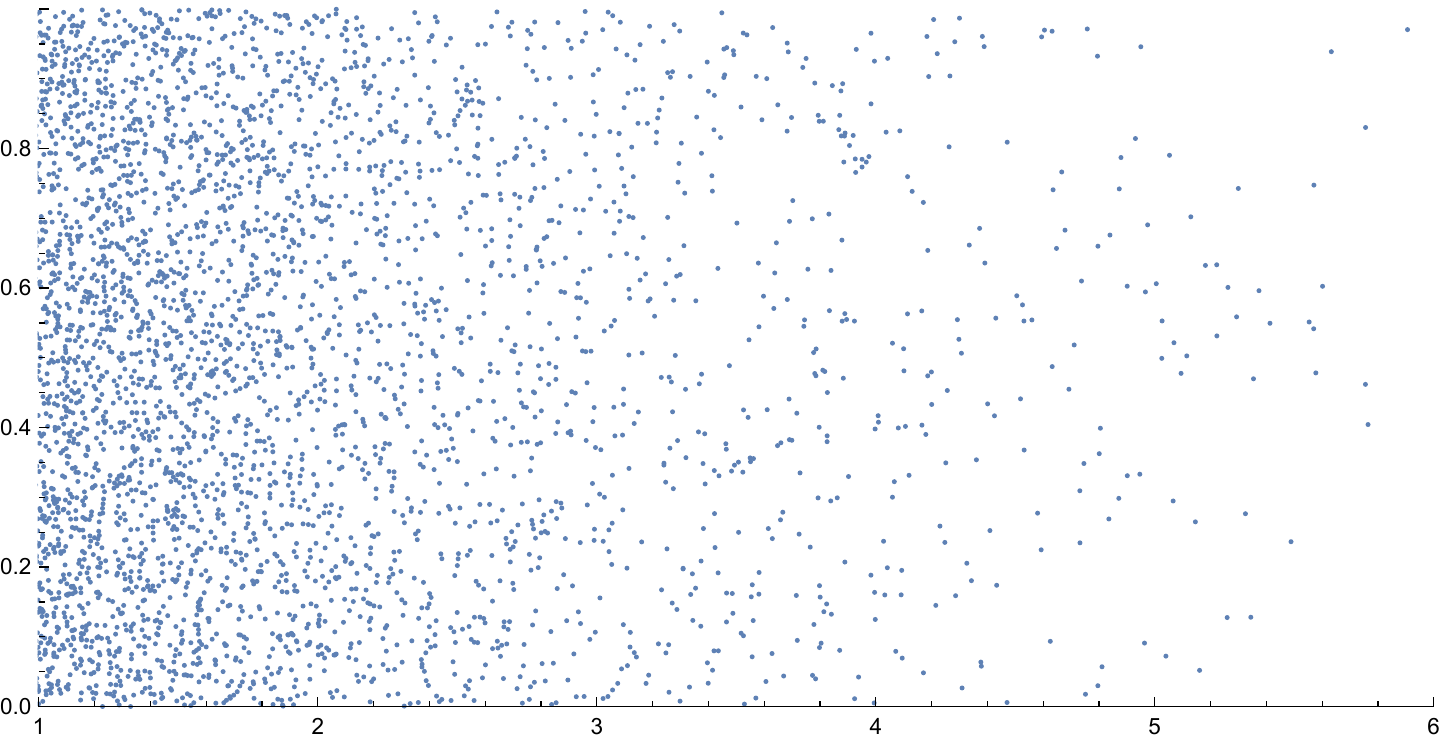}
\includegraphics[width=0.8\textwidth]{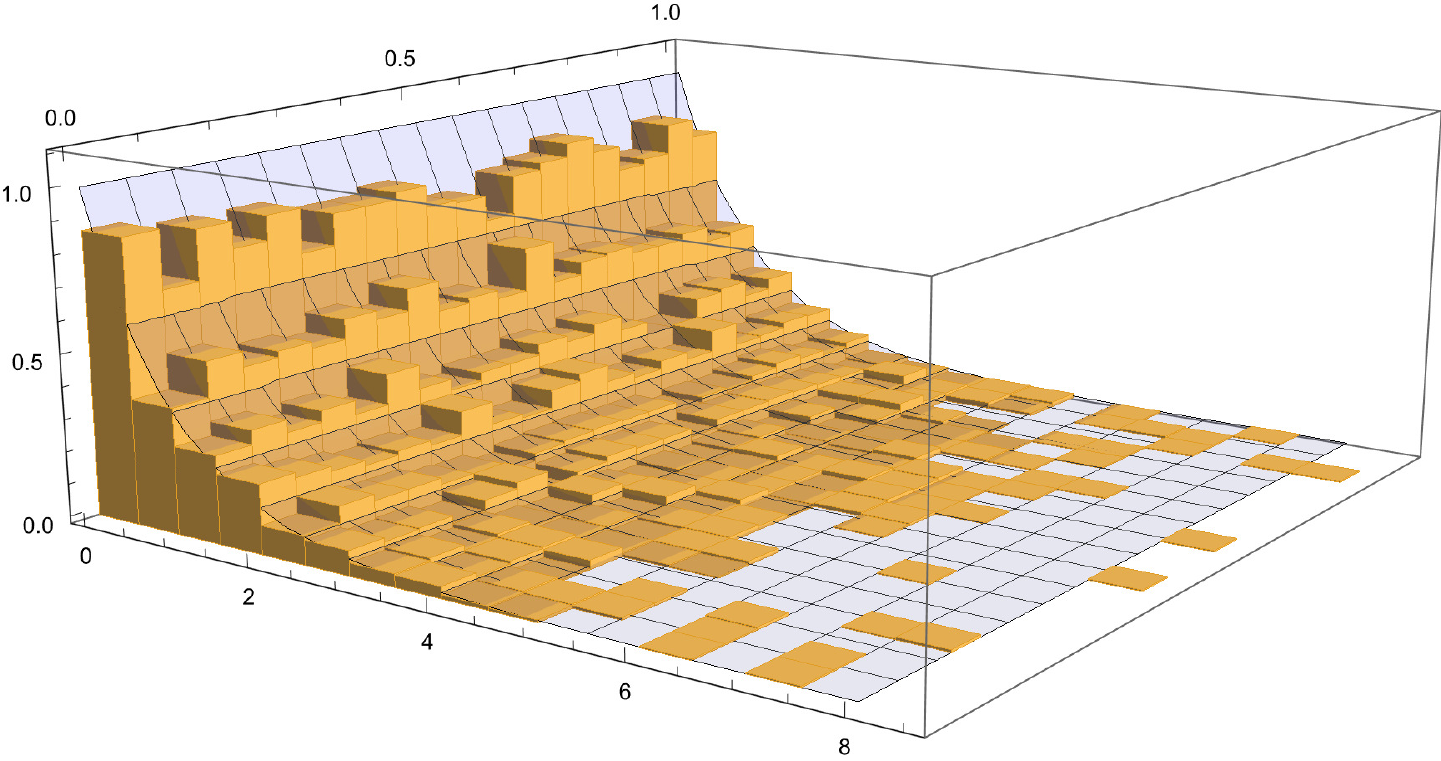}
\caption{A scatter plot and histogram for the sequence $(\lambda_{i+1}-\lambda_i,\theta_i)$ for $R=\pi \times 500^2$ and $\Delta R=10^4$. The surface superimposed on the histogram is the density of the conjectured limiting distribution under Assumption \ref{hyp0}.}  \label{fig2}
\end{figure}
\end{center}

Let $\scrJ_n^\gamma = \sum_{\ell=0}^n (-1)^\ell\scrJ_{\ell,n}^\gamma$. The principal objective of this paper is to now prove that the limit 
\begin{equation}
\lim_{h=r\to 0}   \sum_{n=1}^\infty  (2 \pi \i \lambda h^{-2})^n \; \scrJ_n^\gamma(t h r^{1-d})
\end{equation}
exists for $\lambda>0$ sufficiently small but fixed, every $t>0$ and $\Re\gamma>0$, and to evaluate the limit at $\gamma=0$.
The convergence is stated in Proposition \ref{prop:tmatexpansion} and explicit formulas for the limit are discussed in Section \ref{secpositivity}.

Now, in order to take the expectation value of the sum over the momenta in in \eqref{multieq}, with $\scrP$ a Poisson point process, one needs to keep track of terms where the various $\vecp_i$ and $\vecp_j$ are equal or distinct. This is best done through the notion of set partitions, which are presented in detail in Appendix \ref{sec:partitions}. We denote by $\Pi(n,k)$ the collection of set partitions $\F=[F_1,\ldots,F_k]$ of the set $\{0,\ldots,n\}$ into $k$ blocks $F_1,\ldots,F_k$. Let us then define, for a given set partition $\F\in\Pi(n,k)$, $\vecxi\in\RR^{k}$, $\vecy_1,\ldots,\vecy_k\in\RR^d$ 
\begin{equation}
H_{t,\ell,\F}^\gamma(\vecxi,\vecy_1,\ldots,\vecy_k) = H_{t,\ell,n}^\gamma(\iota_{\F}(\vecxi),\iota_{\F}(\vecy_1,\ldots,\vecy_k)) ,
\end{equation}
where $\iota_{\F}(\vecxi)$ is the embedding $\iota_{\F}: \RR^k\to \RR^{n+1}$ defined by
$\vecxi=(\xi_1,\ldots,\xi_k)\mapsto (x_0,\ldots,x_n)$ where $(x_j=\xi_i \iff j\in F_i)$, and $\iota_{\F}(\vecy_1,\ldots,\vecy_k)$ is the vector analogue $\iota_{\F}: (\RR^d)^k\to(\RR^d)^{n+1}$. See Section \ref{subsec:ordered} for details.

With this we can write
\begin{multline}\label{multieq20}
\sum_{\substack{\vecp_1,\dots,\vecp_n=\vecp_0\in\scrP\\ \text{non-consec}}} H_{t,\ell,n}^{r^{2-d}\gamma}\big( r^{2-d} (\tfrac12\|\vecp_0\|^2,\ldots,\tfrac12\|\vecp_n\|^2),  r \vecp_0,\ldots,r\vecp_n \big) \\
= \sum_{k=1}^{n+1} \sum_{\F\in\Pi_\circ(n,k)} \sum_{\substack{\vecp_1,\dots,\vecp_k\in\scrP\\ \text{distinct}}} H_{t,\ell,\F}^{r^{2-d}\gamma}\big( r^{2-d} (\tfrac12\|\vecp_1\|^2,\ldots,\tfrac12\|\vecp_k\|^2),  r \vecp_1,\ldots,r\vecp_k \big) .
\end{multline}

Note that for $\vece=(1,1,\ldots,1)\in\RR^k$ and any $\omega\in\RR$
\begin{equation}\label{traninv}
H_{t,\ell,\F}^\gamma(\vecxi+\omega \vece,\vecy_1,\ldots,\vecy_k)=H_{t,\ell,\F}^\gamma(\vecxi,\vecy_1,\ldots,\vecy_k).
\end{equation}
This decomposition allows us to compute the expectation over $\scrP$. Specifically, Campbell's theorem yields, for $\scrP$ a Poisson process with intensity one,
\begin{equation}\label{multieq3}
\begin{split}
\EE & \sum_{\substack{\vecp_1,\dots,\vecp_k\in\scrP\\ \text{distinct}}} H_{t,\ell,\F}^{r^{2-d}\gamma}\big( r^{2-d} (\tfrac12\|\vecp_1\|^2,\ldots,\tfrac12\|\vecp_k\|^2),  r \vecp_1,\ldots,r\vecp_k \big) \\
& =\int_{(\RR^d)^k} H_{t,\ell,\F}^{r^{2-d}\gamma}\big( r^{2-d} (\tfrac12\|\vecy_1\|^2,\ldots,\tfrac12\|\vecy_k\|^2),  r \vecy_1,\ldots,r\vecy_k \big)\,\d\vecy_1\cdots\d\vecy_k\\
& =r^{-kd} \int_{(\RR^d)^k} H_{t,\ell,\F}^{r^{-d}\gamma}\big( r^{-d} (\tfrac12\|\vecy_1\|^2,\ldots,\tfrac12\|\vecy_k\|^2),  \vecy_1,\ldots,\vecy_k \big)\,\d\vecy_1\cdots\d\vecy_k .
\end{split}
\end{equation}
Due to its translation invariance, $H_{t,\ell,\F}^\gamma$ is determined by its values in the $l^\text{th}$ coordinate plane $\Sigma^\perp=\{\vecx\in\RR^k: x_l=0\}$. It will be convenient in our calculations below to fix $l$ so that $\ell\in F_l$. We define the corresponding Fourier transform of $H_{t,\ell,\F}^\gamma$ restricted to $\Sigma^\perp$ by
\begin{equation}\label{Gperp}
G_{t,\ell,\F}^\gamma\big(\vectheta ,  \vecy_1,\ldots,\vecy_k) = 
\int_{\Sigma^\perp} H_{t,\ell,\F}^\gamma\big(\vecxi,  \vecy_1,\ldots,\vecy_k) \, e(-\vecxi\cdot\vectheta)\, \d^\perp\vecxi ,
\end{equation}
where $\vectheta\in\Sigma^\perp$ and $\d^\perp\vecxi= \prod_{\substack{j = 1\\ j\neq l}}^k \d \vecxi_j$ denotes the standard Lebesgue measure on $\Sigma^\perp$. For $\vecxi\in\Sigma^\perp$ we then have 
\begin{equation} \label{Gperpinv}
H_{t,\ell,\F}^\gamma\big(\vecxi,  \vecy_1,\ldots,\vecy_k\big) = 
\int_{\Sigma^\perp}  G_{t,\ell,\F}^\gamma\big( \vectheta ,  \vecy_1,\ldots,\vecy_k\big) \, e(\vecxi\cdot\vectheta)\,  \d^\perp\vectheta .
\end{equation}

Define the $(k\times (n+1))$-matrix $\Delta_{\ell,\F}$ by
\begin{equation}
(\Delta_{\ell,\F})_{ij}= 
\begin{cases}
0 & \text{if $i=l$} \\
-1  & \text{if $j\in F_i \cap [0,\ell-1]$} \\
1  & \text{if $j\in F_i \cap [\ell+1, n]$}\\
0 & \text{otherwise.}
\end{cases}
\end{equation}
In particular $(\Delta_{\ell,\F})_{i\ell} = 0$ for all $i=1,\dots,k$. In view of \eqref{Hdef2} and \eqref{Gperp} we find in the case $\gamma=0$ that
\begin{equation} \label{Gperp2}
G_{t,\ell,\F}\big( \vectheta ,  \vecy_1,\ldots,\vecy_k\big) \\
= 
\scrW(\iota_{\F}(\vecy_1,\ldots,\vecy_k))\, \widehat\scrA_{t,\ell,\F}(\vectheta,\vecy_1,\ldots,\vecy_{k})  
\end{equation}
with
\begin{equation} \label{scrAdef2}
\widehat\scrA_{t,\ell,\F}(\vectheta,\vecy_1,\ldots,\vecy_{k}) = \int_{\Box_{\ell,n}(t)} \scrA_{\ell,n}(\iota_{\F}(\vecy_1,\ldots,\vecy_{k}),\vecu) \, \delta(\vectheta-\Delta_{\ell,\F}\vecu)\, \d^\perp\vecu,
\end{equation}
and
\begin{equation}\label{exdelta}
\delta(\vectheta-\Delta_{\ell,\F}\vecu)= 
\prod_{\substack{i=1\\ \ell \notin F_i}}^k \delta\bigg(\theta_i+\sum_{j=0}^{\ell-1}  u_j \mathbb{1}(j\in F_i)-\sum_{j=\ell+1}^{n} u_j  \mathbb{1}(j\in F_i) \bigg) .
\end{equation}
Thus if $\ell\in F_l$
\begin{multline}\label{Hdef01234}
H_{t,\ell,\F}(\vecxi,\vecy_1,\ldots,\vecy_k)  =\scrW(\iota_{\F}(\vecy_1,\ldots,\vecy_k)) 
 \int_{\Sigma^\perp}
 \e\bigg(\sum_{\substack{i=1\\ i\neq l}}^k  \theta_i (\xi_i-\xi_l) \bigg)  \\
\times \widehat\scrA_{t,\ell,\F}(\vectheta,\vecy_1,\ldots,\vecy_{k}) \, \d^\perp\vectheta.
\end{multline}
More generally, for $\Re \gamma \geq 0$, we have the formula
\begin{multline}\label{Hdef01234gamma}
H_{t,\ell,\F}^\gamma(\vecxi,\vecy_1,\ldots,\vecy_k)  
=\scrW(\iota_{\F}(\vecy_1,\ldots,\vecy_k))  \\
\times
\int_{\Sigma^\perp} \e\bigg(\sum_{\substack{i=1\\ i\neq l}}^k \big[ \theta_i (\xi_i-\xi_l) + \i |\theta_i| \gamma\big] \bigg)   
\widehat\scrA_{t,\ell,\F}(\vectheta,\vecy_1,\ldots,\vecy_{k}) \, \d^\perp\vectheta .
\end{multline}
Our task is now to determine the convergence of
\begin{equation} \label{decayrecap}
 \begin{split}
& \sum_{n=1}^\infty  (2 \pi \i \lambda h^{-2})^n\; \scrJ_{\ell,n}^\gamma(t h r^{1-d}) \\
& =  \sum_{n=1}^\infty (2 \pi \i \lambda)^n\sum_{\ell=0}^n (-1)^{\ell} \sum_{k=1}^n \sum_{\F \in \Pi_\circ(n,k)} \int_{\RR^{dk}}\scrW(\iota_{\F}(\vecy_1,\ldots,\vecy_k))  \\
& \times
\int_{\Sigma^\perp} \e\bigg(\sum_{\substack{i=1\\ i\neq l}}^k \big[ \theta_i (\xi_i-\xi_l) + \i |\theta_i| \gamma\big] \bigg)   
\widehat\scrA_{t,\ell,\F}(r^d \vectheta,\vecy_1,\ldots,\vecy_{k}) \, \d^\perp\vectheta\,  \d \vecy_1 \cdots \d \vecy_k 
\end{split}
\end{equation}
as $r=h\to 0$ and calculate the limit. 

\section{Explicit formulas} \label{sec:explicit}
 In this section we provide more explicit formulas for $\widehat\scrA_{t,\ell,\F}(\vectheta,\vecy_1,\ldots,\vecy_{k})$. The main results are the expression \eqref{lem6.1proof1003} for general $\vectheta$ and \eqref{eq716} for $\vectheta = \vecnull$. We assume that $\F$ is some set partition into $k$ blocks with $\ell \in F_l$. We define
\begin{equation}
\begin{split}
I_-(\F) &= \{ i \in \{1,\dots,k\}\setminus \{ l\} \, \mid F_i \cap [0,\ell-1] \neq\emptyset \} , \\
I_+(\F) &= \{ i \in \{1,\dots,k\}\setminus \{l\} \, \mid F_i \cap [\ell+1,n] \neq\emptyset \},
\end{split}
\end{equation}
and similarly
\begin{equation}
\begin{split}
I_-^*(\F) &= \{ i \in \{1,\dots,k\}\setminus \{l\} \, \mid F_i \cap [\ell+1,n] =\emptyset \}, \\
I_+^*(\F) &= \{ i \in \{1,\dots,k\}\setminus \{l\} \, \mid F_i \cap [0,\ell-1] =\emptyset \}.
\end{split}
\end{equation}
We also write
\begin{equation}
J_- = I_- \cap I_-^*,\quad  J_+ = I_+ \cap I_+^*,\quad J = I_- \cap I_+. 
\end{equation}
Note that $J_-$ (resp. $J_+$) contains indices corresponding to \emph{one-sided} blocks of the partition, i.e. blocks, all of whose elements are less than or equal to (resp. greater than or equal to)  $\ell$. The set $J$ contains indices corresponding to blocks which are not one-sided, i.e. they contain elements both less than and greater than $\ell$. This provides a complete categorisation of blocks $F_i$:
 \begin{equation}
 \{1,\dots,k\} = \{l\} \cup J_- \cup J_+ \cup J.
 \end{equation}
Let us furthermore define
\begin{equation}
\mu_i=\mu_{i,\ell,\F} =  |F_i \cap [0,\ell]|-1, \qquad \nu_i=\nu_{i,\ell,\F} =  |F_i \cap [\ell,n]|-1 .
\end{equation}
Combining \eqref{scrAdef} and \eqref{scrAdef2} we have
\begin{multline} \label{lem6.1proof1}
\widehat\scrA_{t,\ell,\F}(\vectheta,\vecy_1,\ldots,\vecy_{k})  \\
=\int_{\Box_{\ell,n}(t)} \left( \int_{\RR^d}   a\bigg( \vecx - \sum_{\substack{i=1\\ i\neq l}}^k(\vecy_i-\vecy_l)\sum_{j \in F_i \cap [\ell+1,n]}  u_j ,\vecy_l \bigg) b( \vecx, \vecy_1) \, \d \vecx \right) \\
\times \prod_{\substack{i=1\\ i\neq l}}^k \delta\bigg(\theta_i+\sum_{j \in F_i \cap [0,\ell-1]}  u_j -\sum_{j \in F_i \cap [\ell+1,n]} u_j \bigg) 
 \d^\perp\vecu.
\end{multline}
First we can freely integrate over all $u_j$ for $j \in F_l$. This yields a factor of 
\begin{equation}
\frac{(t- \sum_{j=0, j\notin F_l}^{\ell-1}  u_j)_+^{\mu_l} (t- \sum_{j=\ell+1, j\notin F_l}^{n} u_j)_+^{\nu_l}}{\mu_l! \nu_l!}
\end{equation}
where $(x)_+ = \max \{0 , x \}$. If we use the convention $(x)_+^0 = \one [x > 0]$ then, writing $r_i = \sum_{j \in F_i \cap [0,\ell-1]} u_j$ and $s_i = \sum_{j \in F_i \cap [\ell+1,n]} u_j$, we obtain
\begin{equation} \label{lem6.1proof1001}
\begin{split}
& \widehat\scrA_{t,\ell,\F}(\vectheta,\vecy_1,\ldots,\vecy_{k})  \\
& = \int_{\RR_{\geq 0}^{|I_-|}}\int_{\RR_{\geq 0}^{|I_+|}} \bigg( \int_{\RR^d}   a\bigg( \vecx - \sum_{i\in I_+}(\vecy_i-\vecy_l) s_i ,\vecy_l \bigg) b( \vecx, \vecy_1) \, \d \vecx \bigg) \\
& \times \frac{(t- \sum_{i \in I_- } r_i)_+^{\mu_l} (t- \sum_{i \in I_+ } s_i)_+^{\nu_l}}{\mu_l! \nu_l!}
\prod_{i\in J_-} \Lambda_i^-(r_i) \delta(\theta_i+r_i)\,\d r_i\\
& \times \prod_{i\in J_+} \Lambda_i^+(s_i) \delta(\theta_i -s_i ) \,\d s_i 
\prod_{i\in J} \Lambda_i^-(r_i) \Lambda_i^+(s_i) \delta(\theta_i+r_i -s_i ) \,\d r_i\,\d s_i ,
\end{split}
\end{equation}
 with
\begin{equation}
\Lambda_i^-(r_i) = \int_{\RR_{\geq 0}^{\mu_i+1}} \delta\bigg( r_i - \sum_{j \in F_i \cap [0,\ell-1]} u_j \bigg) \prod_{j \in F_i \cap [0,\ell-1]} \d u_j
=
\frac{r_i^{\mu_i}}{\mu_i !}
\end{equation}
and
\begin{equation}
\Lambda_i^+(s_i) = \int_{\RR_{\geq 0}^{\nu_i+1}} \delta\bigg( s_i - \sum_{j \in F_i \cap [\ell+1,n]} u_j \bigg) \prod_{j \in F_i \cap [\ell+1,n]} \d u_j
=
\frac{s_i^{\nu_i}}{\nu_i !}.
\end{equation}
In other words
\begin{equation} \label{lem6.1proof1002}
\begin{split}
&\widehat\scrA_{t,\ell,\F}(\vectheta,\vecy_1,\ldots,\vecy_{k})  \\
& =\int_{\RR_{\geq 0}^{|I_-|}}\int_{\RR_{\geq 0}^{|I_+|}}  \bigg( \int_{\RR^d}   a\bigg( \vecx - \sum_{i\in I_+}(\vecy_i-\vecy_l) s_i ,\vecy_l \bigg) b( \vecx, \vecy_1) \, \d \vecx \bigg) \\
& \times \frac{(t- \sum_{i \in I_- } r_i)_+^{\mu_l} (t- \sum_{i \in I_+ } s_i)_+^{\nu_l}}{\mu_l!\nu_l!}
  \prod_{i\in J_-} \frac{r_i^{\mu_i}}{\mu_i !} \delta(\theta_i+r_i)\,\d r_i\\
& \times \prod_{i\in J_+} \frac{s_i^{\nu_i}}{\nu_i !}  \delta(\theta_i -s_i ) \,\d s_i 
\prod_{i\in J} \frac{r_i^{\mu_i} s_i^{\nu_i}}{\mu_i ! \nu_i !} \delta(\theta_i+r_i -s_i ) \,\d r_i\,\d s_i .
\end{split}
\end{equation}
Integrating over $r_i$ for $i \in I_- = J_- \cup J$ and $s_i$ for $i \in J_+$ yields
\begin{equation} \label{lem6.1proof1003}
\begin{split}
& \widehat\scrA_{t,\ell,\F}(\vectheta,\vecy_1,\ldots,\vecy_{k})  \\
& =\int_{\RR_{\geq 0}^{|J|}} \bigg( \int_{\RR^d}   a\bigg( \vecx - \sum_{i\in  J_+}(\vecy_i-\vecy_l) \theta_i - \sum_{i\in J}(\vecy_i-\vecy_l) s_i ,\vecy_l \bigg) b( \vecx, \vecy_1) \, \d \vecx \bigg) \\
&\times \frac{1}{\mu_l!\nu_l!} \left( t + \sum_{i \in J_- } \theta_i - \sum_{i \in J} (s_i-\theta_i) \right)_+^{\mu_l} \left(t- \sum_{i \in J_+ } \theta_i - \sum_{i \in J} s_i \right)_+^{\nu_l}\\
&\times \prod_{i\in J_-} \frac{(-\theta_i)_+^{\mu_i}}{\mu_i !} \prod_{i\in J_+} \frac{(\theta_i)_+^{\nu_i}}{\nu_i !} 
\prod_{i\in J} \frac{(s_i-\theta_i)_+^{\mu_i} s_i^{\nu_i}}{\mu_i ! \nu_i !}\d\vecs.
\end{split}
\end{equation}
When $\vectheta \to \vecnull$ note that the integrand vanishes unless $\mu_i = 0$ for all $i \in J_-$ and $\nu_i = 0$ for all $i \in J_+$, that is to say that a partition $\F$ only contributes in the limit if the only one-sided blocks are singletons. This motivates the definition of $\Omega(n,k) \subset \{0,\ldots, n\}\times \Pi_\circ(n,k)$ as the set of marked partitions $(\ell,\F)$ such that every block $F_i$ not containing $\ell$ either (i) is a singleton or (ii) contains at least one number strictly less than $\ell$ and at least one number strictly greater than $\ell$ (cf.~Appendix \ref{subsec:Marked}). 

The support of $\widehat\scrA_{t,\ell,\F}(\vectheta,\vecy_1,\ldots,\vecy_{k})$ as a function of $\vectheta$ is contained in the domain
\begin{equation}
\{\vectheta\in\Sigma^\perp \mid -t\leq\theta_i\leq 0\, \forall i\in J_-,\;  0\leq \theta_i\leq t\, \forall i\in J_+ \};
\end{equation}
and is continuous in this domain in a sufficiently small neighbourhood of the origin. Putting $\vectheta=\vecnull$ in \eqref{lem6.1proof1003} yields
\begin{multline} \label{lem6.1proof1004}
\widehat\scrA_{t,\ell,\F}(\vecnull,\vecy_1,\ldots,\vecy_{k}) =\int_{\RR_{\geq 0}^{|J|}} \bigg( \int_{\RR^d}   a\bigg( \vecx - \sum_{i\in J}(\vecy_i-\vecy_l) s_i ,\vecy_l \bigg) b( \vecx, \vecy_1) \, \d \vecx \bigg) \\
\times \frac{1}{\mu_l!\nu_l!} \left( t - \sum_{i \in J} s_i \right)_+^{\mu_l+\nu_l}\prod_{i\in J_- \cup J_+ } \one(|F_i|=1) \prod_{i\in J} \frac{s_i^{\mu_i+\nu_i}}{\mu_i ! \nu_i !} \,\d\vecs.
\end{multline}
In other words, for $(\ell,\F)\in\Omega(n,k)$ we have
\begin{multline}\label{eq716}
\widehat\scrA_{t,\ell,\F}(\vecnull,\vecy_1,\ldots,\vecy_{k}) 
=\int_{\RR_{\geq 0}^k} 
\bigg(\int_{\RR^d}   a\bigg( \vecx - \sum_{j=1}^k u_j (\vecy_j-\vecy_l),\vecy_l \bigg) b( \vecx, \vecy_1) \, 
 \d \vecx\bigg) \\ 
\times \bigg(\prod_{i=1}^k \frac{u_i^{\mu_{i}+\nu_{i}} }{\mu_{i}! \nu_{i}!}\bigg) \, \delta(u_1+\cdots+u_k-t) \, \d \vecu ,
\end{multline}
whereas for $(\ell,\F)\notin\Omega(n,k)$ we have that $\widehat\scrA_{t,\ell,\F}(\vecnull,\vecy_1,\ldots,\vecy_{k})=0$.

\section{Decay estimates} \label{sec:decay}

 In this section we prove Proposition \ref{cor:Hseries} which establishes the absolute and uniform convergence of the series defining \eqref{decayrecap}. The proof will be similar in spirit to that of Proposition \ref{prop51}, and the key ingredient is the following decay estimate. 

\begin{prop}\label{prop91} 
Let $a,b\in\scrS(\T(\RR^d))$. There exists a constant $C_{a,b,d}$ such that for all $W\in\scrS(\RR^d)$, $t>0$, $\Re\gamma\geq 0$, $r>0$,
\begin{equation}
\begin{split}
& \bigg|\int_{(\RR^d)^k} \scrW(\iota_{\F}(\vecy_1,\ldots,\vecy_k)) 
 \e\bigg(\sum_{\substack{i=1\\ i\neq l}}^k  \big[\theta_i (\tfrac12\|\vecy_i\|^2-\tfrac12\|\vecy_l\|^2) + \i |\theta_i| \gamma\big] \bigg)  \\[-0.3cm]
& \qquad \times \widehat\scrA_{t,\ell,\F}(r^d \vectheta,\vecy_1,\ldots,\vecy_{k}) \,\d\vecy_1\cdots\d\vecy_k \bigg| \\
& \leq  C_{a,b,d}^n \langle t \rangle^n \, \|W\|_{2d+2,d+1,1}^n\, \bigg(\prod_{i=1}^k  \frac{1}{|F_i|!} \bigg) \, \bigg(\prod_{\substack{i=1\\ i \neq l}}^k \e^{-2 \pi |\theta_i| \gamma} \langle \theta_i \rangle^{-d/2} \bigg).
\end{split}
\end{equation}
\end{prop}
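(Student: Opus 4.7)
The proof will closely parallel that of Proposition~\ref{prop51}, with the role of the chain $\hat W(\vecy_0-\vecy_1)\cdots\hat W(\vecy_{n-1}-\vecy_n)$ now played by $\scrW(\iota_\F(\vecy_1,\ldots,\vecy_k))\,\widehat\scrA_{t,\ell,\F}(r^d\vectheta,\vecy_1,\ldots,\vecy_k)$. First I would pull the $\vecy$-independent damping $\e(\i|\theta_i|\gamma)=\e^{-2\pi|\theta_i|\gamma}$ outside the $\vecy$-integral, which accounts for the required $\prod_{i\neq l}\e^{-2\pi|\theta_i|\gamma}$ factor. Next I would rewrite the remaining phase as $\sum_{i=1}^{k}\theta_i'\,\tfrac12\|\vecy_i\|^2$, where $\theta_l'=-\sum_{i\neq l}\theta_i$ and $\theta_i'=\theta_i$ otherwise, and apply Lemma~\ref{lem51} to the $\vecy$-integral. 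This produces the decay $\prod_{i=1}^{k}\langle\theta_i'\rangle^{-d/2}\leq\prod_{i\neq l}\langle\theta_i\rangle^{-d/2}$ together with a supremum over subsets $S\subset\{1,\ldots,k\}$ of the $L^1$-norm of the corresponding partial Fourier transform of the $\vecy$-dependent integrand.

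To control the polynomial-in-$t$ contribution, I would insert the explicit formula \eqref{lem6.1proof1003}, swap the $\vecs$- and $\vecy$-integrations, and exploit the support constraints inherited from the delta functions in \eqref{lem6.1proof1002}: on the support of the integrand one has $|r^d\theta_i|\leq t$ for $i\in J_-\cup J_+$, $0\leq s_i, s_i-r^d\theta_i\leq t$ for $i\in J$, and $\sum_{i\in J} s_i\leq t$. Consequently every factor of the form $(\pm r^d\theta_i)_+^{\mu_i}$, $(s_i-r^d\theta_i)_+^{\mu_i}$, $s_i^{\nu_i}$, or $(t\pm\cdots)_+^{\mu_l+\nu_l}$ is bounded by the corresponding power of $2t$. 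The combined polynomial degree equals $\mu_l+\nu_l+\sum_{i\neq l}(\mu_i+\nu_i)=n+1-k$, and the $\vecs$-integration contributes a volume factor at most $t^{|J|}/|J|!$ with $|J|\leq k-1$, so the total polynomial-in-$t$ bound is $2^n\langle t\rangle^n$. The combinatorial factorials $\tfrac{1}{\mu_l!\nu_l!}\prod_{i\in J_-}\tfrac{1}{\mu_i!}\prod_{i\in J_+}\tfrac{1}{\nu_i!}\prod_{i\in J}\tfrac{1}{\mu_i!\nu_i!}$ are bounded by $\tfrac{2^n\prod_i|F_i|}{\prod_i|F_i|!}$ via the estimates $\binom{|F_i|-1}{\mu_i}\leq 2^{|F_i|-1}$ and $(|F_i|-1)!=|F_i|!/|F_i|$; since AM-GM gives $\prod_i|F_i|\leq((n+1)/k)^k\leq e^{(n+1)/e}$, the surplus factor is sub-exponential in $n$ and can be absorbed into $C_{a,b,d}^n$, leaving the desired $\prod_i|F_i|!^{-1}$.

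Finally, I would bound $\sup_S\|f_S\|_{L^1}$ by adapting the weight-and-integrate-by-parts argument of Proposition~\ref{prop51}. Writing $\Phi(\vecy_1,\vecy_l;\vecs,\vectheta)=\int_{\RR^d} a(\vecx-\cdots,\vecy_l)\,b(\vecx,\vecy_1)\,\d\vecx$, Young's inequality bounds this pointwise by $\|a\|_{\infty}\|b(\cdot,\vecy_1)\|_{L^1_\vecx}$ and $\Phi$ inherits Schwartz decay in $\vecy_1,\vecy_l$ from $a,b$, so its contribution is absorbed into $C_{a,b,d}$. For the chain $\scrW(\iota_\F(\vecy))=\prod_{j=0}^{n-1}\hat W(\vecy_{b(j)}-\vecy_{b(j+1)})$ I would then insert $\langle\vecx_i\rangle^{d+1}$-weights for $i\in S$, integrate by parts in $\vecy_i$, pull absolute values inside, and use the telescoping substitution $\vecp_{j+1}\mapsto\vecp_{j+1}-\vecp_j$ of \eqref{four26} to factorise the integral into one $L^\infty$-norm and $(n-1)$ $L^1$-norms, each bounded by $\|W\|_{2d+2,d+1,1}$. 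The main obstacle is the partition structure itself: because each $\vecy_i$ now appears as an argument in $|F_i|$ distinct $\hat W$ factors rather than just two consecutive ones, the Leibniz rule produces up to $|F_i|^{d+1}$ terms per integration by parts, generating an overhead of $\prod_i|F_i|^{d+1}\leq e^{(d+1)(n+1)/e}$. This is again sub-exponential in $n$ and can be absorbed into $C_{a,b,d}^n$, while the telescoping substitution itself continues to work because it only uses the chain order $j=0,\ldots,n-1$ and is insensitive to which block each index lies in.
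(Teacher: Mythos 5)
Your approach matches the paper's up to the factorization of the partial-Fourier $L^1$-norm, but there is a genuine gap at that step. In Proposition~\ref{prop51} the telescoping substitution factorizes the $n$-fold product $\varphi_0\cdots\varphi_{n-1}$ into $n$ one-variable integrals because the $n-1$ integration variables are in bijection with the chain steps, so each difference $\vecy_{j-1}-\vecy_j$ becomes an independent variable. Here $\scrW(\iota_\F(\vecy_1,\ldots,\vecy_k))$ is still a product of $n$ factors of the form $\hat W(\vecy_{\kappa(j-1)}-\vecy_{\kappa(j)})$, but there are only $k$ free variables $\vecy_1,\ldots,\vecy_k$. The quantities $\vecy_{\kappa(j-1)}-\vecy_{\kappa(j)}$ are linearly dependent whenever the path $\kappa(0)\kappa(1)\cdots\kappa(n)$ revisits a vertex (e.g.\ if $\kappa(0)=\kappa(2)$ the first two differences sum to zero), so you cannot change variables to all the edge differences at once; your claim that the telescoping substitution ``is insensitive to which block each index lies in'' and yields one $L^\infty$-norm together with $(n-1)$ $L^1$-norms is false --- you simply do not have $n-1$ independent integration variables. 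The paper's remedy is to single out, for each $i\in\{2,\ldots,k\}\setminus\{l\}$, the \emph{first} chain factor $\varphi_{\kappa^{-1}(i)}$ in which $\vecy_i$ appears, perform the change of variables only in those $|\scrK|\leq k-1$ factors (each contributing an $L^1$-norm), and bound the remaining $n-|\scrK|$ factors by their $L^\infty$-norms. Since both $\|\varphi_i\|_{L^\infty}$ and $\|\varphi_i\|_{L^1}$ are controlled by $\|W\|_{2d+2,d+1,1}$ up to dimensional constants, the final estimate is the same, but the factorization you describe does not exist and the argument as written does not close. A secondary, non-fatal inaccuracy: each index $j\in F_i$ puts $\vecy_i$ into two adjacent $\hat W$ factors, so $\vecy_i$ appears up to $2|F_i|$ (not $|F_i|$) times; this does not affect the sub-exponential absorption into $C_{a,b,d}^n$.
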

\begin{proof}
We first write
\begin{equation}
\begin{split}
& \bigg|\int_{(\RR^d)^k} \scrW(\iota_{\F}(\vecy_1,\ldots,\vecy_k)) 
 \e\bigg(\sum_{\substack{i=1\\ i\neq l}}^k  \big[\theta_i (\tfrac12\|\vecy_i\|^2-\tfrac12\|\vecy_l\|^2) + \i |\theta_i| \gamma\big] \bigg)  \\
&\qquad \times \widehat\scrA_{t,\ell,\F}(r^d \vectheta,\vecy_1,\ldots,\vecy_{k}) \,\d\vecy_1\cdots\d\vecy_k \bigg| \\
& \leq \e^{-2 \pi \sum_{i \neq l} |\theta_i| \gamma}  \bigg| \int_{(\RR^d)^{k}} \scrW(\iota_{\F}(\vecy_1,\ldots,\vecy_k)) 
 \e\bigg(\tfrac12 \sum_{i=1}^k \theta_i \|\vecy_i\|^2 \bigg)  \\
& \qquad \times \widehat\scrA_{t,\ell,\F}(r^d \vectheta,\vecy_1,\ldots,\vecy_{k}) \,\d\vecy_1\cdots\d\vecy_k\bigg|
\end{split}
\end{equation}
where we have used the shorthand $\theta_l := -\sum_{i=1,i \neq l}^k \theta_i$. The inner integral can be treated using Lemma \ref{lem51}, so all we must do is compute the relevant norm appearing inside the Lemma. Let $K=\{ 1,\dots,k \}$ and $S =\{s_1,\dots,s_p\} \subset K$. We need to compute the norm $\| f_S\|_{L^1}$ where
\begin{equation}
 f(\vecy_1,\dots,\vecy_k) = \scrW(\iota_{\F}(\vecy_1,\dots,\vecy_k)) \, \widehat\scrA_{t,\ell,\F}(r^d \vectheta,\vecy_1,\dots,\vecy_k).
\end{equation}
We have
\begin{multline}
\|f_S\|_{L^1} = \int_{\RR^{dk}}\bigg| \int_{\RR^{d|S|}} \scrW(\iota_{\F}(\vecy_1,\dots,\vecy_k))
\, \widehat\scrA_{t,\ell,\F}(r^d \vectheta,\vecy_1,\dots,\vecy_k) \\
\times [\prod_{i \in S} \e(\vecy_{i}\cdot \vecx_{i}) \d \vecy_{i}] 
\bigg| [\prod_{i\in K\setminus S} \d \vecy_i] [\prod_{i\in S} \d \vecx_{i}] 
\end{multline}
and, more explicitly,
\begin{equation}
\begin{split}
& \|f_S\|_{L^1} = \int_{\RR^{dk}}\bigg| \int_{\RR^{d|S|}} \scrW(\iota_{\F}(\vecy_1,\dots,\vecy_k)) \, \int_{\RR_{\geq 0}^{|J|}}  \int_{\RR^d} \tilde a(\veceta,\vecy_l) \, \tilde b(-\veceta,\vecy_1)   \\
&\times \frac{1}{\mu_l! \nu_l!} \bigg( t + r^d\sum_{i \in J_- } \theta_i - \sum_{i \in J} (s_i-r^d\theta_i) \bigg)_+^{\mu_l} \bigg(t- r^d\sum_{i \in J_+ } \theta_i - \sum_{i \in J} s_i \bigg)_+^{\nu_l}\\
&\times \prod_{i\in J_-} \frac{(-r^d\theta_i)_+^{\mu_i}}{\mu_i !} \prod_{i\in J_+} \frac{(r^d\theta_i)_+^{\nu_i}}{\nu_i !} 
\prod_{i\in J} \frac{(s_i-r^d\theta_i)_+^{\mu_i} s_i^{\nu_i}}{\mu_i ! \nu_i !}\\
&\times [\prod_{i \in S} \e(\vecy_{i}\cdot (\vecx_{i}-\veceta \tau_{i})) \d \vecy_{i}]  \,\d\vecs \, \d \veceta \bigg| [\prod_{i\in K\setminus S} \d \vecy_i] [\prod_{i\in S} \d \vecx_{i}]
\end{split}
\end{equation}
where we write 
\begin{equation}
\tau_i = \begin{cases} r^d \theta_i & i \in J_+ \\ s_i & i \in J . \end{cases}
\end{equation}
We pull the integral over $\veceta$ and $\vecs$ outside the absolute value and bound $\|f_S\|_{L^1}$ above by
\begin{equation}
\begin{split}
& \int_{\RR^{d (k+1)}}\bigg| \int_{\RR^{d|S|}} \scrW(\iota_{\F}(\vecy_1,\dots,\vecy_k)) \,\tilde a(\veceta,\vecy_l) \, \tilde b(-\veceta,\vecy_1)  \\
& \times [\prod_{i \in S} \e(\vecy_{i}\cdot \vecx_{i} )\d \vecy_{i}]  \bigg| [\prod_{i\in K\setminus S} \d \vecy_i] [\prod_{i\in S} \d \vecx_{i}] \, \d \veceta\\
& \times \int_{\RR_{\geq 0}^{|J|} } \frac{1}{\mu_l! \nu_l!} \bigg( t + r^d\sum_{i \in J_- } \theta_i - \sum_{i \in J} (s_i-r^d\theta_i) \bigg)_+^{\mu_l} \bigg(t- r^d\sum_{i \in J_+ } \theta_i - \sum_{i \in J} s_i \bigg)_+^{\nu_l}\\
& \times \prod_{i\in J_-} \frac{(-r^d\theta_i)_+^{\mu_i}}{\mu_i !} \prod_{i\in J_+} \frac{(r^d\theta_i)_+^{\nu_i}}{\nu_i !} 
\prod_{i\in J} \frac{(s_i-r^d\theta_i)_+^{\mu_i} s_i^{\nu_i}}{\mu_i ! \nu_i !} \, \d \vecs.
\end{split}
\end{equation}
This can be bounded above by
\begin{multline}
\frac{t^{|J|}}{|J|!}[\prod_{i=1}^k \frac{t^{\tilde \mu_i+ \tilde\nu_i}}{\tilde\mu_i! \tilde\nu_i!}]\int_{\RR^{d (k+1)}} \bigg| \int_{\RR^{d|S|}} \scrW(\iota_{\F}(\vecy_1,\dots,\vecy_k))  \, \tilde a(\veceta,\vecy_l) \, \tilde b(-\veceta,\vecy_1)  \\
\times [\prod_{i \in S} \e(\vecy_{i}\cdot \vecx_{i} )\d \vecy_{i}]  \bigg| [ \prod_{i\in K\setminus S} \d \vecy_i]  [\prod_{i\in S} \d \vecx_{i}] \, \d \veceta
\end{multline}
where we use the convention $\tilde \mu_i = (\mu_i)_+ = \max\{0,\mu_i\}$. As in the proof of Proposition \ref{prop51} we bound this above by

\begin{multline}
\frac{t^{|J|}}{|J|!}[\prod_{i=1}^k \frac{t^{\tilde \mu_i+ \tilde\nu_i}}{\tilde\mu_i! \tilde\nu_i!}] \,\sum_{|\vecm_{s_1}|<d+1} \begin{pmatrix} d+1 \\ \vecm_{s_1} \end{pmatrix} \cdots \sum_{|\vecm_{s_k}| < d+1} \begin{pmatrix} d+1 \\ \vecm_{s_k} \end{pmatrix} \\
\int_{\RR^{d (k+1)}} \bigg| [\prod_{i \in S} \vecx_i^{\vecm_i}] \int_{\RR^{d|S|}} \scrW(\iota_{\F}(\vecy_1,\dots,\vecy_k))  \, \tilde a(\veceta,\vecy_l) \, \tilde b(-\veceta,\vecy_1)  \\
\times [\prod_{i \in S} \e(\vecy_{i}\cdot \vecx_{i} )\d \vecy_{i}]  \bigg| [ \prod_{i\in K\setminus S} \d \vecy_i]  [\prod_{i\in S} \langle \vecx_i\rangle^{-d-1} \d \vecx_{i}] \, \d \veceta.
\end{multline}
Integrating by parts with respect to $\vecy_i$ for $i\in S$ and pulling the absolute value inside the integral gives the upper bound
\begin{multline}
\frac{t^{|J|}}{|J|!}[\prod_{i=1}^k \frac{t^{\tilde \mu_i+ \tilde\nu_i}}{\tilde\mu_i! \tilde\nu_i!}] \, \left(\int_{\RR^d} \langle \vecx \rangle^{-d-1} \d \vecx \right)^{p} \,\sum_{|\vecm_{s_1}|<d+1} \begin{pmatrix} d+1 \\ \vecm_{s_1} \end{pmatrix} \cdots \sum_{|\vecm_{s_p}| < d+1} \begin{pmatrix} d+1 \\ \vecm_{s_p} \end{pmatrix} \\
\int_{\RR^{d (k+1)}} \bigg| [\prod_{i \in S} D_{\vecy_i}^{\vecm_i}] \scrW(\iota_{\F}(\vecy_1,\dots,\vecy_k))  \, \tilde a(\veceta,\vecy_l) \, \tilde b(-\veceta,\vecy_1)  \bigg| \d \vecy_1 \cdots \d \vecy_k  \, \d \veceta.
\end{multline}
If $l \neq 1$, each $\vecy_i$ appears $2|F_i|$ times in the product of $\scrW$, $\tilde a$ and $\tilde b$, except for $\vecy_1$ and $\vecy_l$ which appear $2|F_i| + 1$ times (due to their appearance in $\tilde a$ and $\tilde b$). If $l=1$, then each $\vecy_i$ appears $2|F_i|$ times, except for $\vecy_1$ which appears $2|F_i|+2$ times. Either way we can find a constant $c_d$ such that the number of terms inside the absolute value can be bounded above by
$$ c_d^{pd} |F_1|^{d+1} \cdots |F_k|^{d+1}.$$
Applying the triangle inequality yields a sum of terms of the form
\begin{equation} \label{eightten}
\int_{\RR^{d (k+1)}} \bigg| \Phi(\iota_{\F}(\vecy_1,\dots,\vecy_k))  \,\mathfrak{a} (\veceta,\vecy_l)  \mathfrak{b} (-\veceta,\vecy_1)  \bigg| \d \vecy_1 \cdots \d \vecy_k \, \d \veceta
\end{equation}
where
\begin{equation}
\Phi(\veceta_0,\veceta_1,\dots,\veceta_n) = \varphi_1(\veceta_0-\veceta_1) \cdots \varphi_{n }(\veceta_{n-1}-\veceta_n),
\end{equation}
each $\varphi_i$ is a derivative of $\hat W$ of order $\leq 2d+2$ and $\mathfrak{a}$ and $\mathfrak{b}$ are derivatives of $\tilde a$ and $\tilde b$ with respect to the second argument of order $\leq d+1$. Define the map $\kappa : \{0,\dots,n\} \to \{1,\dots,k\}$ implicitly dependent on $\F$ by $\kappa(i) = j$ if $i \in F_j$. We then have that
\begin{equation}
\Phi(\iota_{\F}(\vecy_1,\dots,\vecy_k) ) = \varphi_1(\vecy_{\kappa(0)}-\vecy_{\kappa(1)}) \cdots \varphi_{n}(\vecy_{\kappa(n-1)} - \vecy_{\kappa(n)}).
\end{equation}
By the definition of $\Omega(n,k)$ we have that $\kappa(0)= \kappa(n) = 1$. For $i = 2,\dots,k$, we define the partial inverse
$$ \kappa^{-1}(i) = \min \{ j \in \{1,\dots,n-1\} \mid \kappa(j) = i \}.$$
For $i =2,\dots,k$, the first factor of $\Phi(\iota_{\F}(\vecy_1,\dots,\vecy_k))$ in which $\vecy_i$ appears is then by definition
$$ \varphi_{\kappa^{-1}(i)}(\vecy_{\kappa(\kappa^{-1}(i)-1)} -\vecy_i).$$
We define $\scrK \subset \{1,\dots,n\}$ to be the image of $\kappa^{-1}(\{2,\dots, k\}\setminus\{l\})$. Equation \eqref{eightten} can thus be bounded above by 
\begin{equation}
\begin{split}
& [\prod_{i \in \{1,\dots,n\} \setminus \scrK} \| \varphi_i \|_{L^\infty} ] \\
& \times \int_{\RR^{d (k+1)}} \bigg| \mathfrak{a} (\veceta,\vecy_l) \mathfrak{b}(-\veceta,\vecy_1) [\prod_{i=1}^{|\scrK|} \varphi_{i}(\vecy_{\kappa(i-1)} - \vecy_{\kappa(i)}) ]  \bigg| \d \vecy_1 \cdots \d \vecy_k \, \d \veceta.
\end{split}
\end{equation}
Make the variable substitutions $\vecy_{\kappa(i)} \to \vecy_{\kappa(i-1)}-\vecy_{\kappa(i)}$ to bound this above by
\begin{equation}
\bigg[\prod_{i \in \{1,\dots,n\} \setminus \scrK} \| \varphi_i \|_{L^\infty} \bigg] \left(\int_{\RR^{3d}} | \mathfrak{a}(\veceta,\vecy) \mathfrak{b}(-\veceta,\vecy')|  \d \vecy\d\vecy'\d\veceta\right) \, \bigg[\prod_{i=1}^{|\scrK|} \| \varphi_i\|_{L_1} \bigg].
\end{equation}
As in \eqref{infnorms} we have that
\begin{equation}
\|\varphi_i\|_{L^\infty} \leq \| W \|_{2d+2,0,1}
\end{equation}
and as in \eqref{onenorms} we have that
\begin{equation}
\| \varphi_i\|_{L^1} \leq   \left(\int_{\RR^d} \langle \vecy \rangle^{-d-1} \d \vecy \right) \, (2d+1)^{d+1} \, \| W \|_{2d+2,d+1,1}.
\end{equation}
Combining with the fact that the functions $\tilde a$ and $\tilde b$ are Schwartz class implies that there exists a constant $ c_{a,b,d}$ such that
\begin{equation}
\| f_S\|_{L^1} \leq  c_{a,b,d}^n \| W \|_{2d+2,d+1,2}^n \, \frac{t^{|J|}}{|J|!}[\prod_{i=1}^k \frac{t^{\tilde\mu_i+\tilde\nu_i}}{\tilde\mu_i! \tilde\nu_i!} |F_i|^{d+1}].
\end{equation}
We then use the fact that 
\begin{equation}
|F_i| = \begin{cases} \mu_i+1 & i \in J_- \\ \nu_i + 1 & i \in J_+ \\ \mu_i+\nu_i+2 & i \in J \\ \mu_i+\nu_i+1 & i = l
\end{cases}
\end{equation}
so in particular for $i \in J_+$ or $i \in J_-$
\begin{equation}
\frac{1}{\mu_i!} = \frac{|F_i|}{|F_i|!} \quad \text{ or } \quad \frac{1}{\nu_i!} = \frac{|F_i|}{|F_i|!}
\end{equation}
respectively. For $i \in J$ we have
\begin{equation}
\frac{1}{\mu_i! \nu_i!} \leq \frac{|F_i|^2}{|F_i|!} 2^{|F_i|-2} 
\end{equation}
and for $i = l$ we have 
\begin{equation}
\frac{1}{\mu_l! \nu_l!} \leq \frac{|F_i|}{|F_i|!} 2^{|F_i|-1}. 
\end{equation}
For simplicity we bound  all of these uniformly by 
\begin{equation}
\frac{|F_i|^2}{|F_i|!} 2^{|F_i|}.
\end{equation}
Using the fact that $|J| + \sum_{i=1}^k (\tilde\mu_i+\tilde\nu_i) = n-k+1$ thus yields
\begin{equation}
\|f_S\|_{L^1} \leq  c_{a,b,d}^n \frac{t^{n-k+1}}{|J|!}  \,  \| W \|_{2d+2,d+1,2}^n  [\prod_{i=1}^k \frac{|F_i|^{d+3}}{|F_i|!} \, 2^{|F_i|} ].
\end{equation}
Finally, observe that, since $\sum_{i=1}^k |F_i| = n+1$, we have that
\begin{equation}
\prod_{i=1}^k 2^{|F_i|} = 2^{n+1}, \quad \prod_{i=1}^k |F_i|^{d+3} < 2^{(n+1)(d+3)}.
\end{equation}
This completes the proof.
\end{proof}

This upper bound allows us to ensure convergence of the series \eqref{decayrecap}.

\begin{prop} \label{cor:Hseries}
Let $a,b\in\scrS(\T(\RR^d))$, $W\in\scrS(\RR^d)$, $t>0$ and set
\begin{equation}\label{Rdef}
R= \bigg(2 \pi \, C_{a,b,d} \, \langle t \rangle \|W\|_{2d+2,d+1,1} \max \left\{ 1 , \int_{\RR} \langle \theta \rangle^{-d/2} \d \theta\right\}\bigg)^{-1}.
\end{equation}
Then the series 
\begin{equation}
\sum_{n=1}^\infty  (2 \pi \i \lambda h^{-2})^n \; \scrJ_{\ell,n}^\gamma(t  r^{2-d})
\end{equation}
in \eqref{decayrecap} converges absolutely for all $|\lambda| < R$, uniformly in $\Re\gamma\geq 0$, $0< r\leq 1$.
\end{prop}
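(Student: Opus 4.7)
The plan is to apply Proposition \ref{prop91} termwise to bound the series \eqref{decayrecap} in absolute value, integrate out the $\vectheta$-variables using the pointwise decay $\prod\langle\theta_i\rangle^{-d/2}$, and then sum the resulting estimate over the marked partitions $(\ell,k,\F)$.

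First I would take absolute values throughout \eqref{decayrecap}. Since $|\e(\i|\theta_i|\gamma)|=\e^{-2\pi|\theta_i|\Re\gamma}\leq 1$ for $\Re\gamma\geq 0$, Proposition \ref{prop91} (applied with $r^d\vectheta$ in place of $\vectheta$; the bound there is uniform in $r>0$) yields, for the inner integral in \eqref{decayrecap},
\begin{equation*}
\Bigl|\int_{(\RR^d)^k}\cdots\,\d\vecy_1\cdots\d\vecy_k\Bigr| \leq C_{a,b,d}^n\,\langle t\rangle^n\,\|W\|_{2d+2,d+1,1}^n\,\prod_{i=1}^k\tfrac{1}{|F_i|!}\prod_{\substack{i=1\\i\neq l}}^k\langle\theta_i\rangle^{-d/2}.
\end{equation*}

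Next, integrating over $\vectheta\in\Sigma^\perp\cong\RR^{k-1}$ yields a factor $M^{k-1}$ with $M:=\int_\RR\langle\theta\rangle^{-d/2}\d\theta<\infty$ (finite precisely because $d>2$). Using $k\leq n+1$, the crude estimate $M^{k-1}\leq\max\{1,M\}^n$ absorbs this factor into the quantity $R^{-n}$ of \eqref{Rdef}. Consequently the $n$-th term of \eqref{decayrecap} is bounded by $(|\lambda|/R)^n\cdot S_n$, where
\begin{equation*}
S_n:=\sum_{\ell=0}^n\sum_{k=1}^{n+1}\sum_{\F\in\Pi_\circ(n,k)}\prod_{i=1}^k\tfrac{1}{|F_i|!}.
\end{equation*}
Geometric convergence for $|\lambda|<R$ thus reduces to showing that $S_n$ is controlled by a constant to the $n$-th power.

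The principal obstacle is estimating $S_n$. The class $\Pi_\circ(n,k)$ (see Appendix \ref{sec:partitions}) is in natural bijection with proper partitions of the cycle $C_n$ into $k$ non-empty independent blocks, so the chromatic polynomial of $C_n$ gives $|\Pi_\circ(n,k)|\leq 2(k-1)^n/k!$---this count alone grows super-exponentially in $n$. The factorial weight $\prod 1/|F_i|!$, which emerges from the iterated integration-by-parts underlying the proof of Proposition \ref{prop91}, is what tames the explosion. Via an exponential-generating-function argument---each block of size $g$ contributes weight $1/g!$, corresponding to the entire function $\exp(\phi(z)-1)$ with $\phi(z)=\sum_{g\geq 1}z^g/(g!)^2$---combined with Stirling-type bounds on $\prod|F_i|!$ exploiting $\sum|F_i|=n+1$, one extracts the required geometric control on $S_n$. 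Executing this delicate balance between the rapid growth of partition counts and the factorial decay of weights is the quantitatively subtle heart of the argument.
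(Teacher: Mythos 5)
Your framework is correct and matches the paper through the first two steps: applying Proposition \ref{prop91} termwise, using $\e^{-2\pi|\theta_i|\Re\gamma}\leq 1$, and integrating out $\vectheta$ to collect a factor $M^{k-1}\leq\max\{1,M\}^n$ with $M=\int_\RR\langle\theta\rangle^{-d/2}\d\theta$. So the reduction to bounding
\[
S_n=\sum_{\ell=0}^n\sum_{k}\sum_{\F\in\Pi_\circ(n,k)}\prod_{i=1}^k\frac{1}{|F_i|!}
\]
is exactly right.

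The gap is the combinatorial estimate, which you identify as ``the quantitatively subtle heart'' but then do not carry out. Two specific problems. First, the aside about chromatic polynomials is incorrect: $\Pi_\circ(n,k)$ consists of \emph{arbitrary} partitions of $\{0,\ldots,n\}$ into $k$ blocks with $0,n$ in the same block --- it is only the subclass $\Pi_\circ^{\nc}(n,k)$ (non-consecutive) that corresponds to proper colourings of a cycle, so the bound $|\Pi_\circ(n,k)|\leq 2(k-1)^n/k!$ does not apply; $|\Pi_\circ(n,k)|$ is a Stirling number $S(n,k)$. Second, the exponential-generating-function sketch with $\phi(z)=\sum_{g\geq 1}z^g/(g!)^2$ is never developed into an actual estimate with explicit constants, and one would still have to handle the sum over $k$ uniformly. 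As it stands the proof is incomplete precisely where completeness is needed.

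The paper closes this gap with a much shorter observation: since $\prod_i 1/|F_i|!=\binom{n+1}{|F_1|,\ldots,|F_k|}/(n+1)!$, enlarging the sum to all ordered compositions and applying the multinomial theorem gives
\[
\sum_{\F\in\Pi_\circ(n,k)}\prod_{i=1}^k\frac{1}{|F_i|!}<\frac{1}{(n+1)!}\sum_{m_1+\cdots+m_k=n+1}\binom{n+1}{m_1,\ldots,m_k}=\frac{k^{n+1}}{(n+1)!}.
\]
Summing over $k\leq n$ and $\ell\leq n$ then yields an $n$-th term bounded by $(2\pi A|\lambda|\langle t\rangle\|W\|)^n\, n^{n+1}/(n-1)!$, which by Stirling is geometrically summable once $2\pi A|\lambda|\langle t\rangle\|W\|_{2d+2,d+1,1}<\e^{-1}$. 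You should replace the EGF sketch with this elementary multinomial argument, or else actually execute the generating-function computation with explicit constants; as written the convergence is asserted but not proved.
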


\begin{proof}
First of all, by integrating over $\vectheta$ we can obtain from Proposition \ref{prop91}
\begin{equation} \label{decayrecap2}
 \begin{split}
&  \sum_{n=1}^\infty (2 \pi \i |\lambda|)^n\sum_{\ell=0}^n \sum_{k=1}^n \sum_{\F \in \Pi_\circ(n,k)} \bigg| \int_{\RR^{dk}}\scrW(\iota_{\F}(\vecy_1,\ldots,\vecy_k))  \\
& \times
\int_{\Sigma^\perp}  \e\bigg(\sum_{\substack{i=1\\ i\neq l}}^k  \big[\theta_i (\tfrac12\|\vecy_i\|^2-\tfrac12\|\vecy_l\|^2) + \i |\theta_i| \gamma\big] \bigg)  
\widehat\scrA_{t,\ell,\F}(r^d \vectheta,\vecy_1,\ldots,\vecy_{k}) \, \d^\perp\vectheta  \d \vecy_1 \cdots \d \vecy_k \bigg| \\
& \leq \sum_{n=1}^\infty (n+1) (2 \pi A |\lambda| \langle t \rangle \|W\|_{2d+2,d+1,1})^n   \sum_{k=1}^n \sum_{\F \in \Pi_\circ(n,k)} \left( \prod_{i=1}^k \frac{1}{|F_i|!} \right) ,
\end{split}
\end{equation}
where
$$ A =  C_{a,b,d} \,\max \left\{ 1 , \int_{\RR} \langle \theta \rangle^{-d/2} \d \theta\right\}.$$
We can replace the set $\Pi_\circ(n,k)$ by the set of all partitions of $\{1,\dots,n\}$ into $k$ blocks to obtain the upper bound
\begin{equation}
\sum_{\F \in \Pi_\circ(n,k)} \bigg( \prod_{i=1}^k \frac{1}{|F_i|!} \bigg) < \frac{1}{(n+1)!} \sum_{m_1+\dots+m_k = n+1} \begin{pmatrix} n+1 \\ m_1, m_2, \dots, m_{k}\end{pmatrix} =\frac{ k^{n+1}}{(n+1)!}.
\end{equation}
Inserting this into our upper bound yields
\begin{equation}
\eqref{decayrecap2}  < \sum_{n=1}^\infty  (2 \pi A |\lambda| \langle t \rangle \|W\|_{2d+2,d+1,1})^n  \frac{ n^{n+1}}{(n-1)!}
\end{equation}
which converges for $2 \pi A |\lambda| \langle t \rangle \|W\|_{2d+2,d+1,1} < \e^{-1}$ by Stirling's formula.
\end{proof}

\section{The microlocal Boltzmann-Grad limit} \label{sec:takingthelimit}
 In this section we combine the results of Sections \ref{sec:explicit} and \ref{sec:decay} to prove Proposition \ref{thm:multieq} which establishes the limit of the full perturbation series. Given $(\ell,\F)\in\Omega(n,k)$, 
let $\scrC^\perp_{\ell,\F}$ be the set of $\vectheta\in\Sigma^\perp$ with $i$th coordinate $\theta_i$ ranging over
\begin{equation} \label{scrCdef}
\begin{cases} 
\RR_{\leq 0}  & \text{if $F_i=\{j\}$ with $j<\ell$} \\
\RR_{\geq 0}  & \text{if $F_i=\{j\}$ with $j>\ell$} \\
0 & \text{if $\ell\in F_i$}\\
\RR  & \text{if $\ell\notin F_i$ and $|F_i|>1$.}
\end{cases}
\end{equation}
For $(\ell,\F)\in\Omega(n,k)$ we define
\begin{equation}
D_{\ell,\F}^\gamma(\vecy_1,\ldots,\vecy_k) = \int_{\scrC^\perp_{\ell,\F}}  \e\bigg(\sum_{\substack{i=1\\ i\neq l}}^k  \big[\theta_i (\tfrac12\|\vecy_i\|^2-\tfrac12\|\vecy_l\|^2) + \i |\theta_i| \gamma\big] \bigg)   \, \d^\perp\vectheta ,
\end{equation}
which converges for $\Re\gamma>0$ and can be extended (in the distributional sense) by analytic continuation to $\Re\gamma\geq 0$.
In other words, we have that
\begin{equation}
D_{\ell,\F}^\gamma(\vecy_1,\ldots,\vecy_k)=  \prod_{\substack{i=1\\ \ell\notin F_i}}^k d_{\ell,\F,i}^\gamma(\vecy_l,\vecy_i) 
\end{equation}
with
\begin{equation} \label{ddef}
d_{\ell,\F,i}^\gamma(\vecy,\vecy')=\frac{1}{2\pi\i} \times
\begin{cases} 
-g^{\gamma}(\vecy,\vecy') & \text{if $F_i=\{j\}$ with $j<\ell$} \\
\overline {g^{\gamma}(\vecy,\vecy')} & \text{if $F_i=\{j\}$ with $j>\ell$} \\
\overline{g^\gamma(\vecy,\vecy')} - g^\gamma(\vecy,\vecy')  & \text{if $|F_i|>1$,}
\end{cases}
\end{equation}
with $g^\gamma$ as in \eqref{Tg}. Note furthermore that 
\begin{equation}
\frac{\overline{g^\gamma(\vecy,\vecy')} - g^{\gamma}(\vecy,\vecy')}{2\pi\i}  = \frac{1}{\pi}\;\frac{\gamma}{(\tfrac12\|\vecy\|^2-\tfrac12\|\vecy'\|^2)^2+\gamma^2} \to \delta(\tfrac12\|\vecy\|^2-\tfrac12\|\vecy'\|^2),
\end{equation}
as $\gamma\to 0$.

\begin{prop}\label{thm:multieq}
Let $a,b\in\scrS(\T(\RR^d))$, $W\in\scrS(\RR^d)$, $t>0$, $|\lambda| < R$ with $R$ as in \eqref{Rdef}, and $\Re\gamma\geq 0$. Then
\begin{equation}
\begin{split}
& \lim_{h=r\to 0}   \sum_{n=1}^\infty  (2 \pi \i \lambda h^{-2})^n \; \scrJ_n^\gamma(t h r^{1-d})  \\
& = \sum_{n=1}^\infty (2 \pi \i \lambda)^n \sum_{k=1}^n \sum_{(\ell,\F)\in\Omega(n,k)}  (-1)^\ell  \int_{(\RR^d)^k} 
  \scrW(\iota_{\F}(\vecy_1,\ldots,\vecy_k))
\\
& \times  \, \widehat\scrA_{t,\ell,\F}(\vecnull,\vecy_1,\ldots,\vecy_{k})  
  D_{\ell,\F}^\gamma(\vecy_1,\ldots,\vecy_k)\,\d\vecy_1\cdots\d\vecy_k .
\end{split}
\end{equation}
\end{prop}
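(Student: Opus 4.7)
The plan is to exchange the $r\to 0$ limit successively with the series and with the $\vecy$, $\vectheta$, $\vecs$ integrations, reducing the question to a pointwise asymptotic analysis of the integrand of \eqref{decayrecap}. Proposition \ref{cor:Hseries} gives absolute convergence of the entire series uniformly in $0<r\leq 1$ and $\Re\gamma\geq 0$, which justifies interchanging $\lim_{r\to 0}$ with $\sum_n (2\pi\i\lambda)^n \sum_\ell (-1)^\ell \sum_k \sum_\F$. For each fixed $(n,\ell,k,\F)$ I substitute the explicit formula \eqref{lem6.1proof1003} for $\widehat\scrA_{t,\ell,\F}(r^d\vectheta,\vecy)$, apply Fubini to pull the $\vecs$-integration outside the $\vectheta$-integration (legitimate on $\Re\gamma>0$ via the decay $\e^{-2\pi|\theta_i|\gamma}$ coming from the imaginary part in the phase), and take $r\to 0$ under the integral sign by dominated convergence; the passage to $\Re\gamma=0$ is deferred to the end.

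The pointwise limit of the integrand as $r\to 0$ is transparent because every $\theta_i$ enters the $r$-dependence only through the vanishing scale $r^d\theta_i$. The joint prefactor $(t+r^d\sum_{J_-}\theta_i-\sum_J(s_i-r^d\theta_i))_+^{\mu_l}(t-r^d\sum_{J_+}\theta_i-\sum_J s_i)_+^{\nu_l}/(\mu_l!\nu_l!)$ collapses to $(t-\sum_J s_i)_+^{\mu_l+\nu_l}/(\mu_l!\nu_l!)$; for $i\in J_-$ the factor $(-r^d\theta_i)_+^{\mu_i}/\mu_i!$ tends to $\one[\theta_i\leq 0]$ if $\mu_i=0$ and to $0$ if $\mu_i\geq 1$, analogously for $i\in J_+$; and for $i\in J$ the factor $(s_i-r^d\theta_i)_+^{\mu_i}s_i^{\nu_i}/(\mu_i!\nu_i!)$ tends to $s_i^{\mu_i+\nu_i}/(\mu_i!\nu_i!)$. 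A uniform envelope polynomial in $|\vectheta|$ and $|\vecs|$ against $\e^{-2\pi|\theta_i|\gamma}$, together with the Schwartz decay inherited from $\scrW$, $a$, and $b$, supplies the required domination. Consequently the limit vanishes identically unless every one-sided block is a singleton, i.e.\ unless $(\ell,\F)\in\Omega(n,k)$, which is precisely what cuts the outer sum down to $\Omega$ on the right-hand side.

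For $(\ell,\F)\in\Omega(n,k)$ the limiting integrand decouples across the coordinates $i\neq l$, and the $\vectheta$-integral becomes a product of one-dimensional integrals. A direct comparison with \eqref{ddef} identifies each factor as $d_{\ell,\F,i}^\gamma(\vecy_l,\vecy_i)$---a half-line integral for a singleton block in $J_-$ or $J_+$ and a full real-line integral for a two-sided block in $J$---so their product is exactly $D_{\ell,\F}^\gamma(\vecy_1,\ldots,\vecy_k)$. The surviving $\vecs$- and internal $\vecx$-integrations reassemble precisely into \eqref{lem6.1proof1004}, i.e.\ into $\widehat\scrA_{t,\ell,\F}(\vecnull,\vecy_1,\ldots,\vecy_k)$, and integrating against $\scrW(\iota_\F(\vecy_1,\ldots,\vecy_k))$ over $\vecy$ then yields the claimed right-hand side.

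The main obstacle I expect is the passage from $\Re\gamma>0$ to $\Re\gamma=0$: the one-sided singleton factors $d_{\ell,\F,i}^\gamma$ only stabilise as distributional boundary values of the Cauchy-type integrals $\pm g^\gamma/(2\pi\i)$ (cf.\ \eqref{ddef}), so the analytic continuation has to be interpreted against the integrable density produced by $\scrW$ together with the $\vecs$-integral. Since both sides of the identity are holomorphic in $\gamma$ on the open right half-plane and continuous up to the boundary by the uniform estimate of Proposition \ref{prop91}, the identity extends to $\Re\gamma=0$ once this continuity has been verified on the appropriate space of test densities; this step is where the combinatorial bookkeeping of $\Pi_\circ(n,k)$, $J_-$, $J_+$, $J$ has to be coordinated carefully with the singular support of $D_{\ell,\F}^\gamma$.
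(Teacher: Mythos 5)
Your pointwise computation of the $r\to 0$ limit of $\widehat\scrA_{t,\ell,\F}(r^d\vectheta,\cdot)$ and the identification of the resulting product of one-dimensional $\theta_i$-integrals with $D^\gamma_{\ell,\F}$ via \eqref{ddef} is correct and matches the paper. The difficulty is in how you justify interchanging $\lim_{r\to 0}$ with the integrations. Your dominating envelope is explicitly ``polynomial in $|\vectheta|$ and $|\vecs|$ against $\e^{-2\pi|\theta_i|\gamma}$,'' which forces $\Re\gamma>0$, and you then defer the boundary case $\Re\gamma=0$ to an analytic-continuation/continuity argument which you acknowledge as the ``main obstacle'' and do not actually carry out. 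That gap is real: the one-sided singleton factors $d^\gamma_{\ell,\F,i}$ become genuine boundary-value distributions at $\gamma=0$, and establishing the required continuity on the relevant test densities is precisely the substance of the step you leave open.

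The detour is, however, unnecessary. The paper avoids it by never treating the $\vecy$-integral, the $\vectheta$-integral, and the $\vecs$-integral symmetrically: one first fixes $\vectheta$ and performs the $\vecy$-integration (against $\scrW$ and the quadratic phase $\e(\tfrac12\sum\theta_i\|\vecy_i\|^2)$), and it is this oscillatory integral --- via Lemma~\ref{lem51} and Proposition~\ref{prop91}, which you cite but do not fully exploit --- that produces the factor $\prod_{i\neq l}\langle\theta_i\rangle^{-d/2}$, uniformly in $r$ and \emph{uniformly in} $\Re\gamma\geq 0$. For $d\geq 3$ this alone makes the $\vectheta$-integral absolutely convergent with an $r$- and $\gamma$-independent dominating function, so $\lim_{r\to 0}$ can be taken inside the $\vectheta$-integral directly at $\gamma=0$; no analytic continuation is needed, and the identification with \eqref{lem6.1proof1003} and \eqref{eq716} then finishes the proof as you describe. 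In short: keep the $\vecy$-integral innermost, invoke Proposition~\ref{prop91} to dominate the $\vectheta$-integral, and your Fubini-and-exponential-decay step, together with the subsequent continuation argument, become superfluous.
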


\begin{proof}
In view of the uniform convergence of the series in $n$ (Proposition \ref{cor:Hseries}), it is sufficient to establish convergence term by term. Now
\begin{equation} \label{decayrecapcap}
 \begin{split}
&  h^{2n}\; \scrJ_{\ell,n}^\gamma(t h r^{1-d}) \\
& =  \sum_{k=1}^n \sum_{\F \in \Pi_\circ(n,k)} \int_{\Sigma^\perp}  
\bigg( \int_{\RR^{dk}}\scrW(\iota_{\F}(\vecy_1,\ldots,\vecy_k))  \\
& \times
 \e\bigg(\sum_{\substack{i=1\\ i\neq l}}^k  \big[\theta_i (\tfrac12\|\vecy_i\|^2-\tfrac12\|\vecy_l\|^2) + \i |\theta_i| \gamma\big] \bigg)  
\widehat\scrA_{t,\ell,\F}(r^d \vectheta,\vecy_1,\ldots,\vecy_{k}) \,\d \vecy_1 \cdots \d \vecy_k \bigg)\,   \d^\perp\vectheta.
\end{split}
\end{equation}
Due to the uniform decay $\prod_{i\neq l} \langle \theta_i \rangle^{-d/2}$ guaranteed by Proposition \ref{prop91}, the outer integral converges uniformly in $r>0$ (and $\Re\gamma\geq 0$), and we can therefore take the limit $r\to 0$ inside. Relations \eqref{lem6.1proof1003} and \eqref{eq716} tell us that the only-non zero terms come from the marked partitions $(\ell,\F)\in\Omega(n,k)$ and for $\vectheta\in\scrC^\perp_{\ell,\F}$.
\end{proof}

\section{The collision series} \label{secOrganisationoftheseries}
 The main result of this section is Proposition \ref{prop:tmatexpansion} which specialises Proposition \ref{thm:multieq} to the case of $\gamma=0$. Let us define
\begin{equation}
\scrT_{n}(\vecy_0,\ldots,\vecy_n) = (-2 \pi \i)^n \prod_{j=0}^{n-1} T(\vecy_j,\vecy_{j+1})  , \qquad \scrT_{0}(\vecy_0)=1, 
\end{equation}
\begin{equation}
\begin{split}
\scrT_{\ell,n}(\vecy_0,\ldots,\vecy_n) & = \scrT_{\ell}(\vecy_0,\ldots,\vecy_\ell) \overline{\scrT_{n-\ell}}(\vecy_n,\ldots,\vecy_{\ell}) \\
& = (2 \pi \i)^n (-1)^\ell  \prod_{j=0}^{\ell-1} T(\vecy_j,\vecy_{j+1}) \prod_{j=\ell}^{n-1} T^\dagger(\vecy_{j},\vecy_{j+1}) ,
\end{split}
\end{equation}
and for $\vecm \in \ZZ_{> 0}^{n}$,
\begin{equation}
\scrT_{\ell,n,\vecm} ( \vecy_0,\ldots,\vecy_n) = (2 \pi \i)^n (-1)^\ell \prod_{j=0}^{\ell-1}  T_{m_j}(\vecy_j,\vecy_{j+1})\prod_{j=\ell}^{n-1}  T_{m_j}^\dagger(\vecy_{j},\vecy_{j+1}) .
\end{equation}
Note that
\begin{equation} \label{summingscrT}
\sum_{\vecm \in \ZZ_{> 0}^{n}}\lambda^{m_1+\ldots+m_{n}}  \scrT_{\ell,n,\vecm}(\vecy_0,\ldots,\vecy_n) = \scrT_{\ell,n}(\vecy_0,\ldots,\vecy_n).
\end{equation}
Define furthermore
\begin{multline}\label{defRkt}
\scrR^{(k)}(t) = \sum_{n=k-1}^\infty\sum_{(\ell,\F)\in\widehat\Omega(n,k) } \int_{(\RR^d)^{k}} \widehat\scrA_{t,\ell,\F}(\vecnull,\vecy_1,\ldots,\vecy_{k})\\
\times \scrT_{\ell,n}(\iota_{\F}(\vecy_1,\ldots,\vecy_k)) \, \omega_k(\vecy_1,\ldots,\vecy_k)\, \d\vecy_1\cdots\d\vecy_k.
\end{multline}
Note that in \eqref{defRkt} for $k>1$ only terms with $n\geq k$ contribute. 

\begin{prop}\label{prop:tmatexpansion}
Let $a,b\in\scrS(\T(\RR^d))$, $W\in\scrS(\RR^d)$, $t>0$, $|\lambda| < R$ with $R$ as in \eqref{Rdef}. Then
\begin{equation}\label{thm:multieq234} 
 \lim_{h=r\to 0} \sum_{n=0}^\infty  (2 \pi \i \lambda h^{-2})^n \; \scrJ_n (t h r^{1-d}) 
=  \sum_{k=1}^\infty \scrR^{(k)}(t) .
\end{equation}
\end{prop}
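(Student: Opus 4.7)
The plan is to specialise Proposition \ref{thm:multieq} to $\gamma=0$ and reorganise the resulting series so that it reassembles into the Born expansion of products of $T$-matrices. As a first step, write out $D_{\ell,\F}^0$ using \eqref{ddef}: each two-sided block $F_i$ ($i\in J$) contributes, in the distributional $\gamma\downarrow 0$ limit, the on-shell delta $\delta(\tfrac12\|\vecy_l\|^2-\tfrac12\|\vecy_i\|^2)$, while each one-sided singleton $F_i=\{j\}$ with $j<\ell$ (resp.\ $j>\ell$) contributes $-\tfrac{1}{2\pi\i}\,g^0(\vecy_l,\vecy_i)$ (resp.\ $\tfrac{1}{2\pi\i}\,\overline{g^0(\vecy_l,\vecy_i)}$).

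The core step is a combinatorial reorganisation. Let $k$ index the number of blocks of a \emph{skeleton} $(\ell_0,\widehat\F)\in\widehat\Omega(n_0,k)$, namely those marked partitions in which every non-$F_{l_0}$ block is two-sided; these correspond to the $k$ distinct ``physical'' momenta of the collision series. Every $(\ell,\F)\in\Omega(n,k_{\mathrm{tot}})$ appearing on the left-hand side possesses a unique skeleton, obtained by deleting all one-sided singletons. Conversely, given a skeleton together with a Born datum $\vecm\in\ZZ_{>0}^{n_0}$, one recovers a unique $(\ell,\F)$ by inserting $m_j-1$ one-sided singletons on the $j$-th edge of the skeleton sequence. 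The identities $n=|\vecm|$, $k_{\mathrm{tot}}=k+|\vecm|-n_0$, $\ell=\sum_{j<\ell_0}m_j$ are immediate, and the required counting is exactly that of Appendix \ref{sec:partitions}. The product of the $k-1$ on-shell deltas coming from the two-sided blocks of $\widehat\F$ equals $\omega_k(\vecy_1,\ldots,\vecy_k)$.

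The key analytic input is that on the support of $\omega_k$ all skeleton momenta $\vecy_{\kappa(j)}$ share a common norm with $\vecy_l$. Thus $g^0(\vecy_l,\vecz)$, as contributed by singletons on the left of $\ell_0$, may be replaced by $g^0(\vecy_{\kappa(j)},\vecz)$ for the relevant $j$, and analogously for the conjugates on the right. The product of $\hat W$-factors from $\scrW(\iota_\F)$ together with the $g^0$- and $\overline{g^0}$-factors from the singletons on each edge is then precisely the Born integrand \eqref{DtoT0} for $T_{m_j}(\vecy_{\kappa(j)},\vecy_{\kappa(j+1)})$ (resp.\ $T^\dagger_{m_j}$). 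Summing over $\vecm$ and invoking \eqref{summingscrT} converts the Born-expanded product into $\scrT_{\ell_0,n_0}(\iota_{\widehat\F}(\vecy_1,\ldots,\vecy_k))$. Sign accounting: $(-1)^\ell$ times a factor of $-1$ per $J_-$-singleton yields $(-1)^{\ell_0}$; the powers of $2\pi\i$ coming from $(2\pi\i\lambda)^n$ and from the $(2\pi\i)^{-(k_{\mathrm{tot}}-1)}$ in $D_{\ell,\F}^0$ collapse to $(2\pi\i)^{n_0}\lambda^{|\vecm|}$, matching the prefactor of $\scrT_{\ell_0,n_0}$. Absolute convergence of the reorganised triple series in $(n_0,\widehat\F,\vecm)$, uniformly for $|\lambda|<R$, is inherited from Proposition \ref{cor:Hseries} (via the decay bounds of Section \ref{secT}) and justifies every interchange of sums.

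The main obstacle is the combinatorial bookkeeping: verifying the bijection between $\Omega(n,k_{\mathrm{tot}})$ and the data (skeleton, Born datum, positions of the inserted singletons), ensuring that the factorial weights in \eqref{eq716} for $\widehat\scrA_{t,\ell,\F}(\vecnull,\cdot)$ agree after the insertion sum with the corresponding $\widehat\scrA_{t,\ell_0,\widehat\F}(\vecnull,\cdot)$ at the skeleton level, and tracking that no term is doubled or missed. This is precisely where the marked-ordered-partition formalism of Appendix \ref{sec:partitions} enters, and constitutes the substantive content of the proof beyond the analytic input provided by Proposition \ref{thm:multieq}.
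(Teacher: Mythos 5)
Your proposal is correct and follows essentially the same route as the paper's proof: specialise Proposition \ref{thm:multieq} at $\gamma=0$, peel off the one-sided singletons to pass to the reduced (``skeleton'') marked partition, use the $\omega_k$ deltas to shift the arguments of the $g^0$-factors so that each edge reassembles into the Born integrand \eqref{DtoT0}, track $(-1)^\ell$ and powers of $2\pi\i$ to arrive at $\scrT_{\ell_0,n_0}$, and resum via \eqref{summingscrT}, with absolute/uniform convergence from Proposition \ref{cor:Hseries} licensing the reorganisation. The only cosmetic difference is that you phrase the bijection as insertion of singletons onto skeleton edges rather than the paper's removal-and-relabelling via $\kappa_\F$; this is the same map in the opposite direction, and your identities $n=|\vecm|$, $\ell=\sum_{j<\ell_0}m_j$ match the paper's $M+n'=n$, $M_\mu+\ell'=\ell$ under the edge-indexing convention you use.
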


\begin{proof} We begin from the result of Proposition \ref{thm:multieq}. Let $(\ell,\F) \in \Omega(n,k)$ and let $(\ell',\F') \in \widehat\Omega(n',k')$ be the corresponding reduced marked partition. Order the blocks of $(\ell,\F)$ such that the following three conditions hold:
\begin{enumerate}
\item $|F_i|>1$ for $i=1,\dots,k'-1$,
\item $\ell \in F_{k'}$,
\item $|F_i|=1$ for $i=k'+1,\dots,k$.
\end{enumerate}

Define $i_{k'+1},\dots,i_n$ so that $F_j = \{i_j\}$ for $j=k'+1,\dots,n$. We can then write
\begin{multline} \label{exdelta23456}
\delta(\Delta_{\ell,\F}\vecu)= 
\bigg(\prod_{i=1}^{k'-1} \delta\bigg(\sum_{j=0}^{\ell-1}  u_j \mathbb{1}(j\in F_i)-\sum_{j=\ell+1}^{n} u_j  \mathbb{1}(j\in F_i) \bigg) \bigg) \bigg( \prod_{j=k'+1}^n \delta(u_{i_j}) \bigg).
\end{multline}
By first integrating over $u_{i_{k'+1}},\dots,u_{i_{n}}$, and then relabelling the remaining $u_i$ variables with the indices $0,\dots,n'$ (preserving their order) we obtain
\begin{multline}
\int_{\Box_{\ell,n}(t)} \scrA_{\ell,n}(\iota_{\F}(\vecy_1,\ldots,\vecy_k),\vecu ) \, \delta(\Delta_{\ell,\F} \vecu) \, \d^\perp \vecu \\
= \int_{\Box_{\ell',n'}(t)} \scrA_{\ell',n'}(\iota_{\F'}(\vecy_1,\ldots,\vecy_{k'}), \vecu) \, \delta(\Delta_{\ell',\F'} \vecu) \, \d^\perp \vecu,
\end{multline}
or in other words
\begin{equation}
\widehat\scrA_{t,\ell,\F}(\vecnull,\vecy_1,\ldots,\vecy_{k}) = \widehat\scrA_{t,\ell',\F'}(\vecnull,\vecy_1,\ldots,\vecy_{k'}).
\end{equation}
Furthermore, by \eqref{ddef} every non-singleton in $\F$ contains indices both to the left and right of $\ell$, so every such term yields a delta function and we see that the distribution $D_{\ell,\F}$ is given by
\begin{equation}
D_{\ell, \F}(\vecy_1,\ldots,\vecy_k) 
= \bigg(\prod_{i=1}^{k'-1} \delta(\tfrac12 \|\vecy_i\|^2-\tfrac12 \|\vecy_{k'}\|^2)\bigg)\bigg(\prod_{ i = k'+1}^k d_{\ell,\F,i}(\vecy_{k'},\vecy_i)\bigg).  
\end{equation}
This allows us to replace instances of $\|\vecy_{k'}\|^2$ with $\|\vecy_j\|^2$ for any $j = 1,\ldots,k'-1$. Let   
$$0=j_1<\cdots<j_\mu<\ell = j_{\mu+1}< \cdots < j_{\mu+\nu+1} = n$$
be the list of elements of $\{0,\ldots,n\}$ that lie in non-singleton blocks. 
For $i=1,\dots,\mu+\nu$ we define $m_i = j_{i+1} - j_i -1$ as the number of singletons between $j_{i+1}$ and $j_i$, and set $M_i=m_1+\ldots+m_i$. Thus $M_{\mu+\nu}=M$ is the total number of singletons in $\F$. From the definition \eqref{ddef} one can see that
\begin{equation}
\prod_{ i = k'+1}^k d_{\ell,\F,i}^\gamma(\vecy_{k'},\vecy_i) 
 =\frac{(-1)^{M_\mu}}{(2 \pi \i)^M} \bigg( \prod_{i=k'+1}^{k'+M_\mu} g^\gamma(\vecy_{k'},\vecy_i) \bigg)\bigg( \prod_{i=k'+M_\mu+1}^{k'+M} \overline{ g^\gamma(\vecy_{k'},\vecy_i)}  \bigg).
\end{equation}
Combining the above with the definition of the $T$-matrix allows us to obtain
\begin{multline} \label{1112}
\int_{\RR^{Md}} \scrW(\iota_{\F}(\vecy_1,\ldots,\vecy_k) ) D_{\ell, \F}(\vecy_1,\ldots,\vecy_k) \prod_{ i = k'+1}^k \d \vecy_i\\
= \frac{(-1)^{M_\mu}}{(2\pi \i)^M} \int_{\RR^{Md}} \scrW(\iota_{\F}(\vecy_1,\ldots,\vecy_k) )  \omega_{k'}(\vecy_1,\ldots,\vecy_{k'}) \\
\hfill \times \bigg( \prod_{i=k'+1}^{k'+M_\mu} g(\vecy_{k'},\vecy_i) \bigg) \bigg( \prod_{i=k'+M_\mu+1}^{k'+M} \overline{ g(\vecy_{k'},\vecy_i)}  \bigg)  \, \prod_{ i = k'+1}^k \d \vecy_i.
\end{multline}
Due to the Dirac delta functions appearing in $\omega_{k'}$, for $i \in [k'+1, k'+M_\mu]$ we can replace $g(\vecy_{k'},\vecy_i)$ with $g(\vecy_{i_-},\vecy_i)$ where $i_-$ is defined to be the largest non-singleton element smaller than $i$. Similarly, for $i \in [k'+M_\mu + 1, k'+M]$ we replace $\overline{g(\vecy_{k'},\vecy_i)}$ with $\overline{g(\vecy_{i_+},\vecy_i)}$ where $i_+$ is defined to be the smallest non-singleton element larger than $i$. This allows us to conclude that \eqref{1112} is equal to
\begin{multline}
\frac{(-1)^{M_\mu} }{(2\pi \i)^M} \frac{(-1)^{\ell'}}{(2\pi \i)^{n'}} \scrT_{\ell',n',(m_1+1,\dots,m_M+1)}(\iota_{\F'}(\vecy_1,\ldots,\vecy_{k'}) ) \, \omega_{k'}(\vecy_1,\dots,\vecy_{k'}).
\end{multline} 
Now use the fact that $M+n' = n$ and $M_\mu +\ell' = \ell$ we can write
\begin{equation}
\frac{(-1)^{M_\mu} }{(2\pi \i)^M} \frac{(-1)^{\ell'}}{(2\pi \i)^{n'}}=\frac{(-1)^\ell }{(2\pi \i)^n}
\end{equation}
and so
\begin{equation}
\begin{split}
& \lim_{h=r\to 0} \sum_{n=0}^\infty (2 \pi \i \lambda h^{-2})^n \; \scrI_n(t h r^{1-d}) \\
& = \sum_{n=0}^\infty \sum_{k=1}^{n} \sum_{(\ell,\F)\in\widehat\Omega(n,k)} \sum_{\vecm \in \ZZ_{\geq 0}^n}\lambda^{n+m_1+\ldots+m_n}   \int_{(\RR^d)^{k}} \widehat\scrA_{t,\ell,\F}(\vecnull,\vecy_1,\ldots,\vecy_{k}) \\
& \times  \scrT_{\ell,n,(m_1+1,\dots,m_M+1)}(\iota_{\F}(\vecy_1,\ldots,\vecy_k)) \, \omega_{k}(\vecy_1,\dots,\vecy_k) \;\d \vecy_1 \cdots \d \vecy_k
\end{split}
\end{equation}
where on the right hand side we now write $k$, $\ell$, $n$ instead of $k'$, $\ell'$, $n'$.
The result then follows by changing the summation variables $m_j \to m_j -1$ and using \eqref{summingscrT}.
\end{proof}

\section{The limit process}\label{secpositivity}
 In this section we derive explicit formulas for $\scrR^{(k)}(t)$, assuming throughout that $k\geq 1$. These show in particular that $\scrR^{(k)}(t)$ can be expressed as the $(k-1)$-collision term with a real and non-negative kernel. The main results are equations \eqref{scrRkdll} and \eqref{scrRkofflm} which together yield the formula \eqref{collserieskOUR}, as well as equations \eqref{rhokdfinalG2} and \eqref{rhokndfinalG2} which respectively yield the expressions \eqref{rho112} and \eqref{rho122} for $\rho_{11}^{(2)}$ and $\rho_{12}^{(2)}$ in terms of Bessel functions.

Let us write $\scrR^{(k)}(t) = \scrR_{\d}^{(k)}(t) + \scrR_{\off}^{(k)}(t)$, where $\scrR_{\d}^{(k)}(t)$, $\scrR_{\off}^{(k)}(t)$ are as in the definition of $\scrR^{(k)}(t)$ \eqref{defRkt}, with $\widehat\Omega(n,k)$ replaced by $\widehat\Omega_\d(n,k)$ and $\widehat\Omega_\off(n,k)$, respectively. 

\subsection{Diagonal Terms}

This is the case $0,\ell\in F_1$ and so
\begin{multline} \label{scrRkd}
\scrR_{\d}^{(k)}(t) = \frac{1}{(k-1)!} \int_{\RR_{\geq 0}^k} 
\int_{(\RR^d)^{k+1}}   a\bigg( \vecx - \sum_{j=1}^k u_j (\vecy_j-\vecy_1),\vecy_1 \bigg) b( \vecx, \vecy_1)   \\
\times   \rho_{11}^{(k)}(\vecu,\vecy_1,\dots,\vecy_k) \,
\delta(u_1+\cdots+u_k-t) \, 
 \d \vecx \, \d\vecy_1\cdots\d\vecy_k\, \d \vecu
\end{multline}
with the function
\begin{multline} 
\rho_{11}^{(k)}(\vecu,\vecy_1,\dots,\vecy_k)  \\ = \omega_k(\vecy_1,\ldots,\vecy_k)
\sum_{n=k-1}^\infty\sum_{(\ell,\Fura)\in\underrightarrow{\widehat\Omega}_{\d}(n,k) } \scrT_{\ell,n}(\iota_{\Fura}(\vecy_1,\ldots,\vecy_k)) \prod_{i=1}^k \frac{u_i^{\mu_{i}+\nu_{i}} }{\mu_{i}! \nu_{i}!}.
\end{multline}
We have here used the symmetry of the integrand under permutation of the indices of the $\vecy_i$ with $i\geq 2$, and taken an average over all ordered partitions in $\underrightarrow{\widehat\Omega}_{\d}(n,k)$ rather than the original sum over the unordered partitions in $\widehat\Omega_{\d}(n,k)$.

We now apply the bijection $(\ell,\Fura)\mapsto (\Fura^+,\Fura^-)$ in \eqref{A9} (Lemma \ref{lemA2}), which, together with the relation
\begin{equation}
\scrT_{\ell,n}(\iota_{\Fura}(\vecy_1,\dots,\vecy_k)) = \scrT_n(\iota_{\Fura^+}(\vecy_1,\dots,\vecy_k) ) \, \overline{\scrT_n(\iota_{\Fura^-}(\vecy_1,\dots,\vecy_k) )} ,
\end{equation}
yields
\begin{multline} 
\rho_{11}^{(k)}(\vecu,\vecy_1,\dots,\vecy_k) \\
 = \omega_k(\vecy_1,\ldots,\vecy_k)
 \bigg| \sum_{n=k-1}^\infty\sum_{\Fura\in \underrightarrow\Pi_\circ(n,k)} \scrT_{n}(\iota_{\Fura}(\vecy_1,\ldots,\vecy_k)) \prod_{i=1}^k \frac{u_i^{|F_i|-1}}{(|F_i|-1)!} \bigg|^2.
\end{multline}
Next we use the bijection $\Fura \mapsto (\Fura',\vecm)$ \eqref{A7} to non-consecutive ordered partitions $\underrightarrow\Pi_\circ^\nc(n,k)$,
which yields
\begin{multline}
\sum_{n=k-1}^\infty\sum_{\Fura\in \underrightarrow\Pi_\circ(n,k)} \scrT_{n}(\iota_{\Fura}(\vecy_1,\ldots,\vecy_k)) \prod_{i=1}^k \frac{u_i^{|F_i|-1}}{(|F_i|-1)!} \\
= \sum_{n=k-1}^\infty \sum_{\Fura\in \underrightarrow\Pi_\circ^\nc(n,k)} 
\scrT_{n}(\iota_{\Fura}(\vecy_1,\ldots,\vecy_k)) 
\sum_{\vecm\in\ZZ^n} \bigg( \prod_{i=1}^k \frac{u_i^{|F_i|+|F_i|_{\vecm} -1} (-2\pi\i T(\vecy_i,\vecy_i))^{|F_i|_{\vecm} }}{(|F_i|+|F_i|_{\vecm} -1)!} \bigg), 
\end{multline}
with $|F_i|_{\vecm} = \sum_{j\in F_i} m_j$.

We identify $\vecm\in\ZZ_{\geq 0}^n$ with $(\vecm^{(1)},\ldots,\vecm^{(k)})\in\ZZ_{\geq 0}^{|F_1|}\times\cdots\times\ZZ_{\geq 0}^{|F_k|}$.
For each $i$ we then use the identity
\begin{equation}
\sum_{m_1,\dots,m_p=0}^\infty f(m_1+\dots+m_p) = \sum_{\mu=0}^\infty \frac{(\mu+p-1)!}{\mu! (p-1)!} f(\mu)
\end{equation}
with $p=|F_i|$ to write
\begin{equation}
\begin{split}
& \sum_{\vecm\in\ZZ^n} \bigg( \prod_{i=1}^k \frac{u_i^{|F_i|+|F_i|_{\vecm} -1} (-2\pi\i T(\vecy_i,\vecy_i))^{|F_i|_{\vecm} }}{(|F_i|+|F_i|_{\vecm} -1)!} \bigg) \\
& =
\prod_{i=1}^k \bigg( \sum_{\mu=0}^\infty \frac{u_i^{|F_i|+\mu -1} (-2\pi\i T(\vecy_i,\vecy_i))^{\mu}}{\mu! (|F_i|-1)!} \bigg).
\end{split}
\end{equation}
\begin{equation}
=\prod_{i=1}^k \bigg( \frac{u_i^{|F_i|-1}\e^{-2\pi\i u_i T(\vecy_i,\vecy_i)}}{(|F_i|-1)!}  \bigg)
\end{equation}
This yields in view of the optical theorem \eqref{opticaltheorem},
\begin{multline} \label{rhokdfinal}
\rho_{11}^{(k)}(\vecu,\vecy_1,\dots,\vecy_k)  = \omega_k(\vecy_1,\ldots,\vecy_k) \prod_{i=1}^{k} \e^{- u_i \Sigma_\tot(\vecy_i)} \\ \times
\bigg| \sum_{n=k-1}^\infty\sum_{\Fura\in \underrightarrow\Pi_\circ^\nc(n,k)} \scrT_{n}(\iota_{\Fura}(\vecy_1,\ldots,\vecy_k)) \prod_{i=1}^k \frac{u_i^{|F_i|-1}}{(|F_i|-1)!} \bigg|^2.
\end{multline}
When $k=1$ we are summing over partitions into $1$ block. Note that $\underrightarrow\Pi_\circ^\nc(n,1)$ is empty unless $n=0$, in which case $\underrightarrow\Pi_\circ^\nc(0,1)$ contains only the partition $[\{0\}]$. Formula \eqref{rhokdfinal} therefore yields
\begin{equation}
\rho_{11}^{(1)}(u_1,\vecy_1) = \e^{-u_1 \Sigma_\tot(\vecy_1)}.
\end{equation}

When $k\geq 2$, using the results in Appendix \ref{subsec:GaP}, we can write \eqref{rhokdfinal} as
\begin{equation}  \label{rhokdfinalG} 
\rho_{11}^{(k)}(\vecu,\vecy_1,\dots,\vecy_k)  = \omega_k(\vecy_1,\ldots,\vecy_k) \big| g_{11}^{(k)}(\vecu,\vecy_1,\dots,\vecy_k) \big|^2 \prod_{i=1}^{k} \e^{- u_i \Sigma_\tot(\vecy_i)} .
\end{equation}
Here $g_{11}^{(k)}$ is the $11$ coefficient of the $k\times k$ matrix-valued function (recall \eqref{rhokdfinalGdef0}),
\begin{equation} \label{rhokdfinalGdef}
\GG^{(k)}(\vecu,\vecy_1,\dots,\vecy_k) 
= \frac{1}{(2\pi\i)^k} \ointccw\cdots \ointccw \big( \DD(\vecz)- \WW\big)^{-1}  \exp(\vecu\cdot\vecz)\, \d z_1 \cdots \d z_k ,
\end{equation}
with the diagonal matrix $\DD(\vecz)=\diag(z_1,\ldots,z_k)$ and the matrix $\WW=\WW(\vecy_1,\dots,\vecy_k)$ with coefficients
\begin{equation}
w_{ij} = 
\begin{cases}
0 & (i=j) \\
-2\pi\i T(\vecy_i,\vecy_j) & (i\neq j) .
\end{cases}
\end{equation}
If we extend the definition of \eqref{rhokdfinalG} to
\begin{equation}  \label{rhokdfinalGmn} 
\rho_{\ell m}^{(k)}(\vecu,\vecy_1,\dots,\vecy_k)  = \omega_k(\vecy_1,\ldots,\vecy_k) \big| g_{\ell m}^{(k)}(\vecu,\vecy_1,\dots,\vecy_k) \big|^2 \prod_{i=1}^{k} \e^{- u_i \Sigma_\tot(\vecy_i)} ,
\end{equation}
we see that, by symmetry under the permutation of indices, the function \eqref{scrRkd} can be written as
\begin{multline} \label{scrRkdll}
\scrR_{\d}^{(k)}(t) = \frac{1}{k!} \sum_{\ell=1}^k \int_{\RR_{\geq 0}^k} 
\int_{(\RR^d)^{k+1}}   a\bigg( \vecx - \sum_{j=1}^k u_j (\vecy_j-\vecy_1),\vecy_\ell \bigg) b( \vecx, \vecy_\ell)   \\
\times   \rho_{\ell\ell}^{(k)}(\vecu,\vecy_1,\dots,\vecy_k) \,
\delta(u_1+\cdots+u_k-t) \, 
 \d \vecx \, \d\vecy_1\cdots\d\vecy_k\, \d \vecu ,
\end{multline}
which yields the diagonal part of the expression \eqref{collserieskOUR}.

In the case $k=2$, Eq.~\eqref{eqk2A1} yields an explicit formula in terms of the $J_1$-Bessel function (assuming here $u_1,u_2>0$),
\begin{equation}\label{eqk2A1J}
g_{11}^{(2)}(u_1,u_2) = -2\pi \sqrt\frac{u_1}{u_2}\, (T(\vecy_1,\vecy_2) T(\vecy_2,\vecy_1))^{1/2} \,J_1(4\pi (u_1 u_2 T(\vecy_1,\vecy_2) T(\vecy_2,\vecy_1))^{1/2}) .
\end{equation}
We conclude
\begin{multline}  \label{rhokdfinalG2} 
\rho_{11}^{(2)}(u_1,u_2,\vecy_1,\vecy_2)  = 4\pi^2 \, |T(\vecy_1,\vecy_2) T(\vecy_2,\vecy_1)|\, \omega_2(\vecy_1,\vecy_2)\, \e^{- u_1 \Sigma_\tot(\vecy_1)- u_2 \Sigma_\tot(\vecy_2)} \\
\times \frac{u_1}{u_2}\, \big| J_1\big(4\pi (u_1 u_2 T(\vecy_1,\vecy_2) T(\vecy_2,\vecy_1))^{1/2} \big) \big|^2 .
\end{multline}
This formula can also be obtained directly from the combinatorial expression \eqref{rhokdfinal}: note that for $n$ even the only $\Fura\in \underrightarrow\Pi_\circ^\nc(n,2)$ is
\begin{equation}
\Fura = [ \{0,2,4,\ldots,n-2,n\}, \{1,3,5,\ldots,n-1\}];
\end{equation}
and for $n$ odd $\underrightarrow\Pi_\circ^\nc(n,2)$ is empty.
Hence
\begin{multline} \label{rhokdfinal201}
\rho_{11}^{(2)}(u_1,u_2,\vecy_1,\vecy_2)  = \omega_2(\vecy_1,\vecy_2)\,\e^{- u_1 \Sigma_\tot(\vecy_1)- u_2 \Sigma_\tot(\vecy_2)}  \\ \times
\bigg| \sum_{m=1}^\infty (-4\pi^2 T(\vecy_1,\vecy_2) T(\vecy_2,\vecy_1))^m \frac{u_1^{m} u_2^{m-1}}{m! (m-1)!} \bigg|^2.
\end{multline}
By shifting the summation index this can be written
\begin{multline} 
\rho_{11}^{(2)}(u_1,u_2,\vecy_1,\vecy_2)  = \omega_2(\vecy_1,\vecy_2)\,\e^{- u_1 \Sigma_\tot(\vecy_1)- u_2 \Sigma_\tot(\vecy_2)}  \\ \times
\bigg|  4\pi^2 T(\vecy_1,\vecy_2) T(\vecy_2,\vecy_1)  u_1 \sum_{m=0}^\infty (-4\pi^2 T(\vecy_1,\vecy_2) T(\vecy_2,\vecy_1))^m \frac{u_1^{m} u_2^{m}}{m! (m+1)!} \bigg|^2.
\end{multline}
Equation \eqref{rhokdfinalG2} then follows from the series representation of the Bessel function
\begin{equation}\label{J_nSeries}
J_n(z) = (\tfrac12 z)^n \sum_{k=0}^\infty \frac{(-\tfrac14 z^2)^k}{k! (k+n)!}.
\end{equation}

\subsection{Off-diagonal Terms}

Here $0\in F_1$ and $\ell\in F_k$, and thus
\begin{multline} \label{scrRkoff}
\scrR_{\off}^{(k)}(t) = \frac{1}{(k-2)!} \int_{\RR_{\geq 0}^k} 
\int_{(\RR^d)^{k+1}}   a\bigg( \vecx - \sum_{j=1}^k u_j (\vecy_j-\vecy_k),\vecy_k \bigg) b( \vecx, \vecy_1)  \\
\times  \rho_{1k}^{(k)}(\vecu,\vecy_1,\dots,\vecy_k) 
\,  \delta(u_1+\cdots+u_k-t) \, 
 \d \vecx \, \d\vecy_1\cdots\d\vecy_k\, \d \vecu
\end{multline}
with the function
\begin{multline} 
\rho_{1k}^{(k)}(\vecu,\vecy_1,\dots,\vecy_k)  = \\
\omega_k(\vecy_1,\ldots,\vecy_k)
\sum_{n=k-1}^\infty\sum_{(\ell,\Fura)\in\underrightarrow{\widehat\Omega}_{\off}(n,k) } \scrT_{\ell,n}(\iota_{\Fura}(\vecy_1,\ldots,\vecy_k)) \prod_{i=1}^k \frac{u_i^{\mu_{i}+\nu_{i}} }{\mu_{i}! \nu_{i}!}.
\end{multline}
The argument is identical to the diagonal case, and we obtain
\begin{multline} \label{rhokOFFfinal}
\rho_{1k}^{(k)}(\vecu,\vecy_1,\dots,\vecy_k)  = \omega_k(\vecy_1,\ldots,\vecy_k) \prod_{i=1}^{k} \e^{- u_i \Sigma_\tot(\vecy_i)} \\ \times
\bigg| \sum_{n=k-1}^\infty\sum_{\Fura\in \underrightarrow\Pi_\baro^\nc(n,k)} \scrT_{n}(\iota_{\Fura}(\vecy_1,\ldots,\vecy_k)) \prod_{i=1}^k \frac{u_i^{|F_i|-1}}{(|F_i|-1)!} \bigg|^2.
\end{multline}
In the case $k=1$, there is no off-diagonal term as $\underrightarrow\Pi_\baro^\nc(n,1)$ is empty for every $n$.

Furthermore, for $k\geq 2$ (again using the results in Appendix \ref{subsec:GaP}),
\begin{equation}  \label{rhokofffinalG}
\rho_{1k}^{(k)}(\vecu,\vecy_1,\dots,\vecy_k)  = \omega_k(\vecy_1,\ldots,\vecy_k) \big| g_{1k}^{(k)}(\vecu,\vecy_1,\dots,\vecy_k) \big|^2 \prod_{i=1}^{k} \e^{- u_i \Sigma_\tot(\vecy_i)} ,
\end{equation}
with $\GG^{(k)}(\vecu,\vecy_1,\dots,\vecy_k)$ as in \eqref{rhokdfinalGdef}.
Again, by symmetry under the permutation of indices, \eqref{scrRkoff} can be expressed as
\begin{multline} \label{scrRkofflm}
\scrR_{\off}^{(k)}(t) = \frac{1}{k!} \sum_{\substack{\ell,m=1\\ \ell\neq m}}^k \int_{\RR_{\geq 0}^k} 
\int_{(\RR^d)^{k+1}}   a\bigg( \vecx - \sum_{j=1}^k u_j (\vecy_j-\vecy_1),\vecy_m \bigg) b( \vecx, \vecy_\ell)   \\
\times   \rho_{\ell m}^{(k)}(\vecu,\vecy_1,\dots,\vecy_k) \,
\delta(u_1+\cdots+u_k-t) \, 
 \d \vecx \, \d\vecy_1\cdots\d\vecy_k\, \d \vecu ,
\end{multline}
which yields the off-diagonal part of \eqref{collserieskOUR}.

For $k=2$, Eq.~\eqref{eqk2A2} yields
\begin{equation}\label{eqk2A2J}
g_{12}^{(2)}(u_1,u_2) = -2\pi\i T(\vecy_1,\vecy_2) \, J_0(4\pi (u_1 u_2 T(\vecy_1,\vecy_2) T(\vecy_2,\vecy_1))^{1/2} ) .
\end{equation}
We conclude
\begin{multline}   \label{rhokndfinalG2}
\rho_{12}^{(2)}(u_1,u_2,\vecy_1,\vecy_2)  = 4\pi^2 \, |T(\vecy_1,\vecy_2)|^2\, \omega_2(\vecy_1,\vecy_2)\, \e^{- u_1 \Sigma_\tot(\vecy_1)- u_2 \Sigma_\tot(\vecy_2)} \\
\times \big| J_0\big(4\pi (u_1 u_2 T(\vecy_1,\vecy_2) T(\vecy_2,\vecy_1))^{1/2} \big) \big|^2 .
\end{multline}

Let us derive this formula also directly from the combinatorial expression \eqref{rhokOFFfinal}: for $n$ odd the only element of $\underrightarrow\Pi_\baro^\nc(n,2)$ is
\begin{equation}
\Fura = [\{0,2,4,\ldots,n-1\}, \{1,3,5,\ldots,n\} ];
\end{equation}
and for $n$ even $\underrightarrow\Pi_\baro^\nc(n,2)$ is empty. Hence
\begin{multline} \label{rhok2nd2}
\rho_{12}^{(2)}(u_1,u_2,\vecy_1,\vecy_2) =\omega_2(\vecy_1,\vecy_2)\, \e^{- u_1 \Sigma_\tot(\vecy_1)- u_2 \Sigma_\tot(\vecy_2)}\,\\
\times \bigg| \sum_{m=0}^\infty \frac{u_1^{m}u_2^{m} }{(m!)^2} (-2\pi \i)^{2m+1} T(\vecy_1,\vecy_2) (T(\vecy_1,\vecy_2) T(\vecy_2,\vecy_1))^{m} \bigg|^2 .
\end{multline}
This can be written
\begin{multline} \label{rhok2nd3}
\rho_{12}^{(2)}(u_1,u_2,\vecy_1,\vecy_2) = 4 \pi^2 |T(\vecy_1,\vecy_2)|^2 \,\omega_2(\vecy_1,\vecy_2)\, \e^{- u_1 \Sigma_\tot(\vecy_1)- u_2 \Sigma_\tot(\vecy_2)}\\
\times \bigg| \sum_{m=0}^\infty \frac{1 }{(m!)^2} (-1)^{m}  \left(2 \pi\sqrt{u_1 u_2 \, T(\vecy_1,\vecy_2) T(\vecy_2,\vecy_1)}\right)^{2m} \bigg|^2 .
\end{multline}
Identifying the summation as a $J_0$ Bessel function yields the result.

\section{Discussion}\label{secDiscussion}

The main conclusion of this work is that quantum transport in a periodic potential converges, in the microlocal Boltzmann-Grad limit, to a limiting random flight process. Unlike in the random setting, there is a positive probability that a path of the limit process revisits the same momentum several times. This is ultimately a consequence of the Floquet-Bloch reduction to discrete Hilbert spaces.
The only hypothesis, Assumption \ref{hyp0}, in our derivation is that Bloch momenta have asymptotically the same fine-scale distribution as a Poisson point process. This assumption can be viewed as a phase-space extension of the Berry-Tabor conjecture in quantum chaos \cite{BerryTabor,Marklof00}, which to-date has been confirmed only in special cases \cite{Bleher95,EMM,Marklof02,Marklof03,MargulisMohammadi,Sarnak96,VanderKam}. In the setting discussed in this paper, present techniques permit a rigorous analysis up to second order perturbation theory which, perhaps surprisingly, is consistent with the linear Boltzmann equation as well as our limit process. Thus extending the perturbative analysis to higher order terms unconditionally is an important open challenge. This would require the rigorous understanding of higher-order correlation functions for lattice point statistics, and we refer the reader to \cite{VanderKam} for the best current results in this direction.

It follows from standard invariance principles for Markov processes that for large times the solution of the linear Boltzmann equation is governed by Brownian motion with the standard diffusive mean-square displacement (i.e., linear in time) \cite{Spohn78,ErdosSalmhoferYau08}. Therefore, the work of Eng and Erd\"os \cite{EngErdos} for random potentials implies convergence to Brownian motion, if we first take the Boltzmann-Grad and then the diffusive limit. (Note that Erd\"os, Salmhofer and Yau \cite{ErdosSalmhoferYau07,ErdosSalmhoferYau08} have established convergence to Brownian motion in long-time/weak-coupling scaling limits directly, i.e., without first taking the weak-coupling limit to obtain the linear Boltzmann equation as in \cite{ErdosYau}.) An immediate challenge is thus to understand the diffusive nature of the random flight process derived in the present paper. Recall that in the classical setting the Boltzmann-Grad limit of the periodic Lorentz gas does not satisfy the linear Boltzmann equation \cite{Caglioti10,partII}, and we have superdiffusion with a $t\log t$ mean-square displacement \cite{MT2016}.
A further challenge is to expand our current understanding to more singular single-site potentials (such as hard core and/or long-range potentials) and to include background electromagnetic fields.

\begin{appendix}

\section{Partitions, diagrams and graphs} \label{sec:partitions}

\subsection{Set partitions}   \label{subsec:partitions}

A {\em set partition} $\F=[F_1,\ldots,F_k]$ of the finite set $\{0,\ldots,n\}$ is a decomposition into disjoint and non-empty subsets $F_1,\ldots,F_k$. The order in which we list the $F_i$ is not relevant (we will discuss ordered partitons further down). We call $F_i$ a {\em block} of $\F$, and denote by $k=|\F|\leq n+1$ the number of blocks. We furthermore define $\nu(\F)=|F_1\cup\cdots\cup F_k|-1$. We denote by $\Pi(n)$ the collection of all set partitions $\F$ with $\nu(\F)=n$, and by $\Pi(n,k)$ the collection of $\F\in\Pi(n)$ with $|\F|=k$. We write $\F\preceq\underline G$, if every subset of $\underline G$ is a subset of unions of subsets of $\F$. This defines a partial ordering on $\Pi(n)$. The minimal and maximal elements of $\Pi(n)$ are $\underline O=[\{0\},\ldots,\{n\}]$ and $\underline N=[\{0,\ldots,n\}]$, respectively. 

We further denote by $\Pi_\circ(n,k)\subset\Pi(n,k)$ the sub-collection of all set partitions where $0$ and $n$ are in the same block ($F_1$, say) and by $\Pi_\circ^{\nc}(n,k)\subset\Pi_\circ(n,k)$ the sub-collection of \emph{non-consecutive} partitions where each subset $F_i$ does not contain consecutive indices; that is $|j_1-j_2|\neq 1$ for all $j_1,j_2\in F_i$.

\subsection{Marked and reduced set partitions} \label{subsec:Marked}

Given a set partition $\F\in\Pi(n)$ and integer $\ell\in \{0,\ldots, n\}$ we call $(\ell,\F)$ the corresponding {\em marked partition}. 
Let $\Omega(n) \subset \{0,\ldots, n\}\times \Pi_\circ(n)$ denote the set of marked partitions $(\ell,\F)$ such that every block $F_i$ not containing $\ell$ either (i) is a singleton or (ii) contains at least one number strictly less than $\ell$ and at least one number strictly greater than $\ell$. Let furthermore $\Omega(n,k) \subset \Omega(n) $ denote the subset where $\F$ has $k$ blocks.

These marked partitions are more easily understood diagrammatically. Given $(\ell,\F)\in\Omega(n,k)$ first draw $n+1$ circles in a horizontal line representing the indices $0,1,\ldots,n$; then fill in the circle corresponding to the index $\ell$; finally, connect indices with lines beneath if and only if they lie in the same block. An example diagram for a typical partition can be seen in Figure \ref{fig:markedpartition}.
\begin{figure}[ht!] 
\begin{align*}
(\ell,\F) &= \left( 6, \left[ \{0,5,11\}, \{1\},\{2,7\},\{3,8,10\},\{4\},\{6\},\{9\} \right] \right) \\[0.5cm]
&= \quad \begin{xy} 
(0,0)*{\circ}="0"; (10,0)*{\circ}="1"; (20,0)*{\circ}="2"; (30,0)*{\circ}="3"; (40,0)*{\circ}="4";(50,0)*{\circ}="5"; (60,0)*{\bullet} = "6"; (70,0)*{\circ} ="7"; (80,0)*{\circ} ="8"; (90,0)*{\circ}= "9"; (100,0)*{\circ} ="10"; (110,0)*{\circ} = "11";
(50,-9.5)*{\smallfrown};(50,-4.5)*{\smallfrown};(70,-4.5)*{\smallfrown};
"0"; (0,-15)**\dir{-}; "5"; (50,-15)**\dir{-}; "11";(110,-15)**\dir{-}; (0,-15);(110,-15)**\dir{-};
"2";(20,-10)**\dir{-}; "7";(70,-10)**\dir{-}; (20,-10);(48.5,-10)**\dir{-};(51.5,-10);(70,-10)**\dir{-};
"3";(30,-5)**\dir{-};"8";(80,-5)**\dir{-}; "10";(100,-5)**\dir{-};
(30,-5);(48.5,-5)**\dir{-};(51.5,-5);(68.5,-5)**\dir{-};(71.5,-5);(100,-5)**\dir{-};
\end{xy}
\end{align*}
\caption{Diagrammatic representation of a typical marked partition $(\ell,\F) \in \Omega(11,6)$.}
\label{fig:markedpartition}
\end{figure}
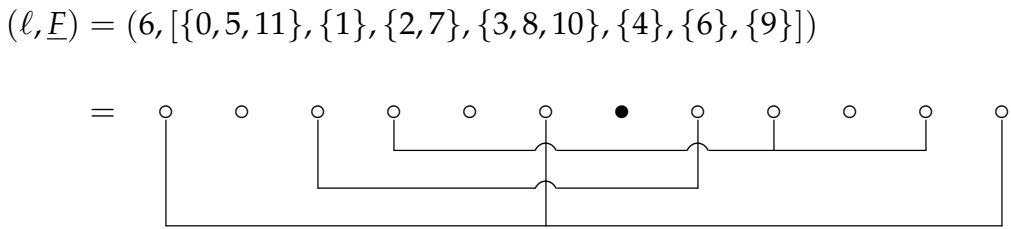

We say that $(\ell,\F) \in \Omega(n,k)$ is \emph{reduced} if for every block $F_i$ we have either $|F_i|>1$ or $F_i = \{\ell\}$, i.e. the partition contains no singleton blocks except possibly $\{\ell\}$. We denote by $\widehat\Omega(n,k) \subset \Omega(n,k)$ the collection of reduced marked partitions. From this point forward when we say `singleton' we will mean a block of the form $\{j\}$ \emph{and} $j\neq \ell$.

Given $(\ell,\F) \in \Omega(n,k)$ with $M$ singleton blocks we construct the corresponding reduced marked partition $(\ell',\F') \in \widehat\Omega(n-M,k-M)$ by removing all singletons and relabelling the remaining elements with the labels $\{0,\dots,n-M\}$ such that the order is preserved. This process is described explicitly below.

Let   
$$0=j_1<\ldots<j_\mu<\ell = j_{\mu+1}< \ldots < j_{\mu+\nu+1} = n$$
be the list of numbers in $\{0,\ldots,n\}$ that lie in non-singleton blocks. 
For $i=1,\dots,\mu+\nu$ we define $m_i = j_{i+1} - j_i -1$ as the number of singletons between $j_{i+1}$ and $j_i$, and set $M_i=m_1+\ldots+m_i$. Thus $M_{\mu+\nu}=M$ is the total number of singletons in $\F$. For $1\leq M_{\mu+\nu}\leq n$ define the map $\kappa_{\F}:\{0,\ldots,n\}\to \{0,\ldots,n'\}$ with $n'=n-M_{\mu+\nu}$ by
\begin{equation}
\kappa_{\F}( j ) = \sup_{i\geq 1}\{ j_i - M_{i-1} \mid j_i \leq j \}.
\end{equation}
If there are no singletons, i.e. $(\ell,\F)$ is already reduced, then $\kappa_{\F}=\id$. Given a pair $(\ell,\F) $ we obtain the reduced marked partition $(\ell', \F')$ by setting $\ell' = \kappa_{\F}(\ell) = \mu$, $\F'= [F'_1,\ldots,F'_{k'}]$ with $k' = k-M_{\mu+\nu}$ and \begin{equation}
F'_i=\{ \kappa_{\F}(j) \mid j\in F_i\} .
\end{equation}
Given $(\ell,\F)\in \Omega(n,k)$ the above provides a unique $(\ell',\F') \in \widehat\Omega(n',k')$. We thus have the following Lemma. 
\begin{lem}
The map
\begin{equation}
 \bigcup_{n=1}^\infty  \Omega(n) \to  \bigcup_{n=1}^\infty \big(\widehat\Omega (n) \times \ZZ_{\geq 0}^{n} \big), \qquad
 (\ell,\F) \mapsto ((\ell',\F'), (m_1,\cdots,m_{n'}))
 \end{equation}
is bijective.
\end{lem}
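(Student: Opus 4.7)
My plan is to exhibit the inverse map explicitly and then verify that the two maps compose to the identity in both directions.

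Given $((\ell',\F'),(m_1,\ldots,m_{n'}))\in\widehat\Omega(n')\times\ZZ_{\geq 0}^{n'}$, set $M_0=0$, $M_i=m_1+\cdots+m_i$, and $n=n'+M_{n'}$. Define the strictly increasing injection $\sigma:\{0,\ldots,n'\}\to\{0,\ldots,n\}$ by $\sigma(i)=i+M_i$. Let $\F$ be the set partition of $\{0,\ldots,n\}$ whose blocks are $\sigma(F'_j)$ for each block $F'_j$ of $\F'$, together with a singleton $\{p\}$ for each $p\in\{0,\ldots,n\}\setminus\sigma(\{0,\ldots,n'\})$; and set $\ell=\sigma(\ell')$. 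In words, we insert $m_i$ singletons between the images of $i-1$ and $i$ for each $i=1,\ldots,n'$.

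I would then check three things. First, that $(\ell,\F)\in\Omega(n)$: because $\sigma$ is order-preserving and $\sigma(\ell')=\ell$, each non-singleton block $\sigma(F'_j)$ with $\ell\notin\sigma(F'_j)$ inherits from $F'_j$ the property of containing elements both strictly smaller and strictly greater than $\ell$ (condition (ii) in the definition of $\Omega$); the inserted singletons trivially verify condition (i); and since $0,n'\in F'_1$ and $\sigma(0)=0$, $\sigma(n')=n$, we have $\F\in\Pi_\circ(n)$. Second, that ``lift then reduce'' is the identity: by construction the non-singleton elements of $\F$ are exactly $\sigma(0)<\cdots<\sigma(n')$, so in the notation of the excerpt $j_{i+1}=\sigma(i)$, giving $j_{i+1}-j_i-1=M_i-M_{i-1}=m_i$, while $\kappa_\F\circ\sigma=\id$ and $\kappa_\F(\ell)=\ell'$, so $(\ell',\F',(m_1,\ldots,m_{n'}))$ is recovered. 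Third, that ``reduce then lift'' is the identity: starting from $(\ell,\F)\in\Omega(n)$, the defining property of $\Omega$ forces every block of $\F$ to be either a singleton $\{j\}$ with $j\neq\ell$, or the block $\{\ell\}$, or a non-singleton block meeting both $[0,\ell-1]$ and $[\ell+1,n]$; listing the non-singleton elements as $j_1<\cdots<j_{n'+1}$ and setting $m_i=j_{i+1}-j_i-1$, the map $\sigma$ built from these $m_i$ satisfies $\sigma(i)=j_{i+1}$, so lifting $(\ell',\F')$ with this tuple reproduces $\F$ block by block and $\ell=\sigma(\ell')$.

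The argument involves no genuine difficulty; the main obstacle is purely notational, namely keeping the conventions of $\kappa_\F$, the sequence $j_1<\cdots<j_{n'+1}$, and the inserted singletons consistent. The one edge case worth flagging is that $\widehat\Omega$ permits the block $\{\ell'\}$ to be a singleton, in which case $\sigma$ sends it to $\{\ell\}$, which is likewise permitted in $\Omega$; and the degenerate case $n'=0$, where both sides reduce to the single element $(0,[\{0\}])$.
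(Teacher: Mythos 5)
Your proof is correct, and it fills in exactly the details the paper elides: the paper constructs the reduction $(\ell,\F)\mapsto(\ell',\F')$ via $\kappa_{\F}$ and then states the lemma without argument, implicitly leaving the reader to supply the inverse (insertion of singletons) and the composition checks, which is precisely what you do. Your construction of $\sigma(i)=i+M_i$ is the inverse of $\kappa_{\F}$ on the non-singleton elements, and your verifications that both compositions give the identity are sound.

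Two small remarks. First, your classification in the ``reduce then lift'' step — ``every block of $\F$ is either a singleton $\{j\}$ with $j\neq\ell$, or the block $\{\ell\}$, or a non-singleton block meeting both $[0,\ell-1]$ and $[\ell+1,n]$'' — is slightly over-claimed: the block $F_l$ containing $\ell$ is not constrained by the definition of $\Omega(n)$, so if $|F_l|>1$ it need not meet both sides of $\ell$ (e.g.\ $\F=[\{0,2,4,5\},\{1,3\}]$, $\ell=3$). This does not affect the argument, since all you actually use is the list of non-singleton elements, not the shape of $F_l$. Second, the degenerate case $n'=0$ you flag cannot occur here: for $n\geq 1$, the condition $\F\in\Pi_\circ(n)$ forces $0$ and $n$ to lie in a common block of size $\geq 2$, so there are always at least two non-singleton elements and hence $n'\geq 1$. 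Finally, it is worth making explicit (you do so only implicitly, via the compositions) that the forward map is well-defined into $\widehat\Omega(n')$, i.e.\ that the reduced partition $\F'$ still satisfies condition (ii) for blocks not containing $\ell'$; this holds because $\kappa_{\F}$ is order-preserving on non-singleton elements, mirroring the argument you give in the other direction for $\sigma$.
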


Furthermore note that every block $F_i\neq\{\ell\}$ in $\widehat\Omega(n)$ contains at least one number strictly less than $\ell$ and at least one number strictly greater than $\ell$. Diagrammatically, the reduction of a marked partition described above then simply corresponds to removing all isolated, unfilled circles - see Figure \ref{fig:reducedmarkedpartition}.

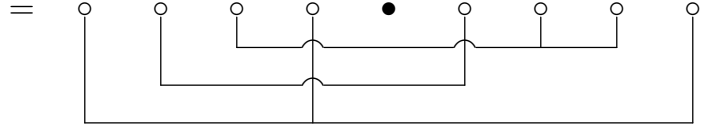
\begin{figure} 
\begin{align*}
(\ell',\F') &= \left( 4, \left[ \{0,3,8\},\{1,5\},\{2,6,7\}, \{4\}\right] \right)\\[0.5cm]
&= \quad \begin{xy} 
(0,0)*{\circ}="0"; (10,0)*{\circ}="1"; (20,0)*{\circ}="2"; (30,0)*{\circ}="3"; (40,0)*{\bullet}="4";(50,0)*{\circ}="5"; (60,0)*{\circ} = "6"; (70,0)*{\circ} ="7"; (80,0)*{\circ} ="8";
(30,-9.5)*{\smallfrown}; (30,-4.5)*{\smallfrown};(50,-4.5)*{\smallfrown};
"0"; (0,-15)**\dir{-}; "3"; (30,-15)**\dir{-}; "8";(80,-15)**\dir{-}; (0,-15);(80,-15)**\dir{-};
"1";(10,-10)**\dir{-}; "5";(50,-10)**\dir{-}; (10,-10);(28.5,-10)**\dir{-};(31.5,-10);(50,-10)**\dir{-};
"2";(20,-5)**\dir{-};"6";(60,-5)**\dir{-}; "7";(70,-5)**\dir{-};
(20,-5);(28.5,-5)**\dir{-};(31.5,-5);(48.5,-5)**\dir{-};(51.5,-5);(70,-5)**\dir{-};
\end{xy}
\end{align*}
\caption{Diagrammatic representation of a marked partition $(\ell',\F') \in \widehat\Omega(8,4)$. Note that $(\ell',\F')$ is obtained from $(\ell,\F)$ in Figure \ref{fig:markedpartition} by removal of all singletons.}
\label{fig:reducedmarkedpartition}
\end{figure}

Assume $k\geq 2$ in the following. Define $\widehat\Omega_{\d}(n,k) \subset \widehat\Omega(n,k)$ to be the set of reduced marked partitions $(\ell,\F)$ such that $0$ and $\ell$ (and therefore also $n$) lie in the same block i.e.,
$$\widehat\Omega_{\d}(n,k) = \{ (\ell,[F_1,\ldots,F_k]) \in \widehat\Omega(n,k) \mid  \{0,\ell,n\} \subset F_i \text{ for some $i$} \}.$$
We call this the set of \emph{diagonal} reduced marked partitions.
The corresponding set of {\em off-diagonal} reduced marked partitions is defined as
$$\widehat\Omega_{\off}(n,k) = \{ (\ell,[F_1,\ldots,F_k]) \in \widehat\Omega(n,k) \mid  \{0,\ell,n\} \not\subset F_i \text{ for any $i$} \}.$$
No order of blocks is specified here, so indeed any marked partition in $\widehat\Omega_{\d}(n,k)$ is either in $\widehat\Omega_{\d}(n,k)$ or in $\widehat\Omega_{\off}(n,k)$.

\subsection{Ordered partitions} \label{subsec:ordered}

We introduce an ordering of a partition $\F$ by specifying an order in which the blocks appear. That is, for a partition $\F=[F_1,\ldots,F_k]$
we have $k!$ corresponding ordered partitions, which we write as $\Fura=\langle F_{\sigma(1)},\ldots,F_{\sigma(k)}\rangle$; here $\sigma\in S_k$ (the symmetric group of $k$ elements). We denote the corresponding set of ordered partitions by $\underrightarrow\Pi(n,k)$. 
We call an ordering {\em canonical} if each $F_i$ contains the smallest of all elements in the blocks $F_j$ with $j\geq i$; in particular this means that $0\in F_1$. This yields a one-to-one correspondence between partitions and canonically ordered partitions.

Given $\Fura\in\underrightarrow\Pi(n,k)$ we define the embedding $\iota_{\Fura}: \RR^{k}\to\RR^{n+1}$, 
\begin{equation}\label{iota111} (x_1,\ldots,x_k)\mapsto (y_0,\ldots,y_n) \quad \text{ where } \quad y_j=x_i \iff j\in F_i. \end{equation}
By abuse of notation, we also define the vector analogue $\iota_{\Fura}: (\RR^d)^{k}\to(\RR^d)^{n+1}$,
\begin{equation}\label{iota222} (\vecx_1,\ldots,\vecx_k)\mapsto (\vecy_0,\ldots,\vecy_n) \quad \text{ where } \quad \vecy_j=\vecx_i \iff j\in F_i. \end{equation}
For the unordered partition $\F\in\Pi(n,k)$ we define the corresponding embedding $\iota_{\F}=\iota_{\Fura}$ where $\Fura$ has the canonical order.  

Let us define  
$$
\underrightarrow\Pi_\circ(n,k) = \{ \langle F_1,\ldots,F_k\rangle \mid [F_1,\ldots,F_k] \in \Pi_\circ(n,k),\; 0\in F_1\} .
$$ 
That is, we specify that the first block contains $0$ (and thus also $n$); with this convention there are $(k-1)!$ ordered partitions in $\underrightarrow\Pi_\circ(n,k)$ for every given $\F\in \Pi_\circ(n,k)$. 
Let $\Pi_\baro(n,k)$ denote the set of partitions of $\{0,\dots,n\}$ into $k$ blocks where $0$ and $n$ lie in different blocks. Define $\underrightarrow\Pi_\baro(n,k)$ to be the set of ordered partitions $\Fura=\langle F_1,\ldots,F_k\rangle$ where $0 \in F_1$ and $n \in F_k$.

Let $\underrightarrow\Pi^\nc(n,k) \subset \underrightarrow\Pi(n,k)$ (and equally $\underrightarrow\Pi_\circ^\nc(n,k) \subset \underrightarrow\Pi_\circ(n,k)$ and $\underrightarrow\Pi_\baro^\nc(n,k) \subset \underrightarrow\Pi_\baro(n,k)$) be the subset of ordered partitions such that all consecutive elements lie in separate blocks, that is $j\in F_i$ implies $j+1\notin F_i$. Let us take $\Fura\in\underrightarrow\Pi_\circ(n,k)$ and construct the corresponding non-consecutive partition; the construction is similar to that of removing singletons. Let $j_1<\ldots<j_{n'}$ be the list of elements $j\in\{1,\ldots,n\}$ for which $j_s\in F_i$ implies $j_s-1\notin F_i$. Let $m_s\in\ZZ_{\geq 0}$ be the largest integer so that $j_s+1,\ldots,j_s+m_s\in F_i$. Then $n=n'+m_1+\ldots+m_{n'}$. We map $\Fura \in \underrightarrow\Pi_\circ(n,k)$ to the pair $(\Fura',\vecm) \in \underrightarrow\Pi^\nc(n',k) \times \ZZ_{\geq 0}^{n'}$, where $F_i'=\{ s \mid j_s\in F_i\}$ and $\vecm=(m_1,\ldots,m_{n'})$ as defined as above. In particular note that if $\Fura \in \underrightarrow\Pi_\circ^\nc(n,k)$ then $(\Fura',\vecm)=(\Fura,\vecnull)$. The map $\Fura \mapsto (\Fura',\vecm)$ is clearly invertible, and we have the following.

\begin{lem}
Let $k\geq 1$. The maps
\begin{equation}\label{A76}
\bigcup_{n=1}^\infty\underrightarrow\Pi(n,k) \to \bigcup_{n=1}^\infty \big( \underrightarrow\Pi^\nc(n,k) \times \ZZ_{\geq 0}^n \big), \qquad \Fura \mapsto (\Fura',\vecm) ,
\end{equation}
\begin{equation}\label{A7}
\bigcup_{n=1}^\infty\underrightarrow\Pi_\circ(n,k) \to \bigcup_{n=1}^\infty \big( \underrightarrow\Pi_\circ^\nc(n,k) \times \ZZ_{\geq 0}^n \big), \qquad \Fura \mapsto (\Fura',\vecm) ,
\end{equation}
\begin{equation}\label{A8}
\bigcup_{n=1}^\infty\underrightarrow\Pi_\baro(n,k) \to \bigcup_{n=1}^\infty \big( \underrightarrow\Pi_\baro^\nc(n,k) \times \ZZ_{\geq 0}^n \big), \qquad \Fura \mapsto (\Fura',\vecm) 
\end{equation}
are bijective.
\end{lem}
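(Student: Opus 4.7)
The plan is to prove bijectivity by exhibiting an explicit inverse $\Psi$ to the forward map $\Phi\colon \Fura \mapsto (\Fura',\vecm)$. Since all three maps \eqref{A76}--\eqref{A8} are defined by the same ``collapse maximal runs of consecutive integers within each block'' construction, a single argument handles all three; only the conditions specifying which blocks contain $0$ and $n$ distinguish the $\underrightarrow\Pi_\circ$ and $\underrightarrow\Pi_\baro$ variants from \eqref{A76}.

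The inverse is a run-expansion map. Given $(\Fura',\vecm) \in \underrightarrow\Pi^\nc(n',k) \times \ZZ_{\geq 0}^{n'}$, one inflates each element $s$ of $\Fura'$ into a run of $m_s + 1$ consecutive integers, placed in the same block $F_i$ in which $s$ lived in $\Fura'$; with the cumulative sums $M_s = m_1 + \cdots + m_s$, element $s \in F_i'$ expands to the interval $\{s + M_{s-1}, \ldots, s + M_s\} \subset F_i$ of $\Fura$, yielding a partition of $\{0,\ldots,n\}$ with $n = n' + \sum_s m_s$. That $\Psi\circ\Phi = \id$ is automatic: the forward map identifies the maximal consecutive runs inside each block of $\Fura$, orders them by their minimum elements to form $\Fura'$, and records their lengths in $\vecm$; re-inflating these labelled runs then recovers $\Fura$ exactly. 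For $\Phi\circ\Psi = \id$ the key observation is that, since $\Fura'$ is non-consecutive, any two adjacent inflated runs lie in distinct blocks of the resulting partition, so no two runs can merge into a longer maximal run and $\Phi$ faithfully recovers $(\Fura',\vecm)$ from the expansion.

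The $\circ$ and $\baro$ restrictions transfer transparently: the inflated run containing $0$ corresponds to element $0$ of $\Fura'$, and the run containing $n$ corresponds to element $n'$ of $\Fura'$. Hence $0$ and $n$ lie in the same block of $\Fura$ if and only if $0$ and $n'$ lie in the same block of $\Fura'$ (the $\underrightarrow\Pi_\circ$-case), and they lie in distinct first and last blocks if and only if the same holds in $\Fura'$ (the $\underrightarrow\Pi_\baro$-case). The entire argument is pure index bookkeeping, which is why the lemma is relegated to the appendix; no analytic difficulty arises.
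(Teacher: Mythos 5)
Your overall plan---defining the forward ``contract runs'' map $\Phi$ and exhibiting the run-expansion map $\Psi$ as an explicit inverse, then checking that the $\circ$ and $\baro$ decorations transfer---is the right approach and is more explicit than the paper, which simply asserts that the map ``is clearly invertible.'' However, the crucial claim ``$\Psi\circ\Phi=\id$ is automatic'' does not actually hold as your maps are written, because of a bookkeeping mismatch: a partition $\Fura'$ of $\{0,\ldots,n'\}$ has $n'+1$ elements and hence $n'+1$ runs to reinflate, but $\vecm\in\ZZ_{\geq 0}^{n'}$ carries only $n'$ components. Your cumulative-sum formula $M_s=m_1+\cdots+m_s$ only makes sense for $s\geq 1$, so $0$ is always inflated to the singleton $\{0\}$; the length of the run through $0$ in $\Fura$ is never recorded by $\Phi$ and never reproduced by $\Psi$. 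Consequently $\Phi$ is not injective and $\Psi$ is only a one-sided inverse.

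A concrete counterexample: take $\Fura_B=\langle\{0,1,3,6\},\{2,4,5\}\rangle\in\underrightarrow\Pi_\circ(6,2)$, whose runs are $\{0,1\},\{2\},\{3\},\{4,5\},\{6\}$, and $\Fura_A=\langle\{0,2,5\},\{1,3,4\}\rangle\in\underrightarrow\Pi_\circ(5,2)$, whose runs are $\{0\},\{1\},\{2\},\{3,4\},\{5\}$. Both have four non-zero run starts, and both produce $\Fura'=\langle\{0,2,4\},\{1,3\}\rangle\in\underrightarrow\Pi_\circ^\nc(4,2)$ and $\vecm=(0,0,1,0)$. The only difference is the length of the run through $0$, which is lost. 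Thus $\Phi(\Fura_A)=\Phi(\Fura_B)$ while $\Psi(\Phi(\Fura_B))=\Fura_A\neq\Fura_B$.

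To be fair, this off-by-one is already latent in the paper's statement: the displayed identity $n=n'+m_1+\cdots+m_{n'}$ in the preceding paragraph is only valid when the run through $0$ is a singleton, which is not enforced, and in the manipulation preceding \eqref{rhokdfinal} the identification of $\vecm$ with an element of $\ZZ_{\geq 0}^{|F_1|}\times\cdots\times\ZZ_{\geq 0}^{|F_k|}$ requires $n+1$ components since $\sum_i|F_i|=n+1$. The statement and your argument both become correct once one replaces $\ZZ_{\geq 0}^n$ by $\ZZ_{\geq 0}^{n+1}$, i.e.\ records one overhang $m_s\geq 0$ for each element $s\in\{0,\ldots,n'\}$ of $\Fura'$ including $s=0$. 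You should either make that correction explicit, or at minimum note the convention that $m_0$ is also recorded, before asserting that $\Psi\circ\Phi=\id$.
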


The notion of a marked partition $(\ell,\F)$ naturally extends to an ordered marked partition $(\ell,\Fura)$, and we construct a reduced ordered partition from an ordered partition by preserving the order of surviving blocks in the construction. The ordered partitions corresponding to $\Omega(n,k)$ and $\widehat\Omega(n,k)$ are denoted by $\underrightarrow{\Omega}(n,k)$ and $\underrightarrow{\widehat\Omega}(n,k)$, respectively. We define the ordered partitions corresponding to $\widehat\Omega_{\d}(n,k)$ and $\widehat\Omega_{\off}(n,k)$ by
$$\underrightarrow{\widehat\Omega}_{\d}(n,k) = \{ (\ell,\langle F_1,\ldots,F_k\rangle) \in \underrightarrow{\widehat\Omega}(n,k) \mid  0,\ell\in F_1 \}$$
and
$$\underrightarrow{\widehat\Omega}_{\off}(n,k) = \{ (\ell,\langle F_1,\ldots,F_k\rangle) \in \underrightarrow{\widehat\Omega}(n,k) \mid  0\in F_1,\; \ell\in F_k \}.$$
Whereas there are $(k-1)!$ ordered partitions in $\underrightarrow{\widehat\Omega}_{\d}(n,k)$ for every fixed partition in $\widehat\Omega_{\d}(n,k)$, there are only $(k-2)!$ in $\underrightarrow{\widehat\Omega}_{\off}(n,k)$ for every fixed partition in $\widehat\Omega_{\off}(n,k)$. The notion of ordering can also be represented diagrammatically, namely, we insist that if $i < j$ then the line connecting elements in $F_i$ is lower than the line connecting elements in $F_j$ - see Figure \ref{fig:orderedmarkedpartition}. In the event that the marked partition contains a singleton, we can attach a vertical line below to the desired height - see Figure \ref{fig:orderedmarkedpartition2}

\begin{figure} 
\begin{align*}
\left( 3, \left\langle \{0,3,6\},\{1,4\},\{2,5\}\right\rangle \right)&= \quad \begin{xy} 
(0,0)*{\circ}="0"; (10,0)*{\circ}="1"; (20,0)*{\circ}="2"; (30,0)*{\bullet}="3"; (40,0)*{\circ}="4";(50,0)*{\circ}="5"; (60,0)*{\circ} = "6";
(30,-9.5)*{\smallfrown}; (30,-4.5)*{\smallfrown}; (40,-4.5)*{\smallfrown};
"0"; (0,-15)**\dir{-}; "3"; (30,-15)**\dir{-}; "6";(60,-15)**\dir{-}; (0,-15);(60,-15)**\dir{-};
"1";(10,-10)**\dir{-}; "4";(40,-10)**\dir{-}; (10,-10);(28.5,-10)**\dir{-}; (31.5,-10);(40,-10)**\dir{-};
"2";(20,-5)**\dir{-}; "5";(50,-5)**\dir{-}; (20,-5);(28.5,-5)**\dir{-}; (31.5,-5);(38.5,-5)**\dir{-}; (41.5,-5);(50,-5)**\dir{-}
\end{xy} \\
\left( 3, \left\langle \{0,3,6\},\{2,5\},\{1,4\}\right\rangle \right)&= \quad \begin{xy} 
(0,0)*{\circ}="0"; (10,0)*{\circ}="1"; (20,0)*{\circ}="2"; (30,0)*{\bullet}="3"; (40,0)*{\circ}="4";(50,0)*{\circ}="5"; (60,0)*{\circ} = "6";
(20,-4.5)*{\smallfrown}; (30,-9.5)*{\smallfrown}; (30,-4.5)*{\smallfrown};
"0"; (0,-15)**\dir{-}; "3"; (30,-15)**\dir{-}; "6";(60,-15)**\dir{-}; (0,-15);(60,-15)**\dir{-};
"1";(10,-5)**\dir{-}; "4";(40,-5)**\dir{-}; (10,-5);(18.5,-5)**\dir{-}; (21.5,-5);(28.5,-5)**\dir{-}; (31.5,-5);(40,-5)**\dir{-};
"2";(20,-10)**\dir{-}; "5";(50,-10)**\dir{-}; (20,-10);(28.5,-10)**\dir{-}; (31.5,-10);(50,-10)**\dir{-};
\end{xy}
\end{align*}
\caption{Diagrammatic representation showing two ordered marked partitions in $\protect\underrightarrow{\widehat\Omega}_\d (6,3)$ corresponding to the same marked partition.}
\label{fig:orderedmarkedpartition}
\end{figure}
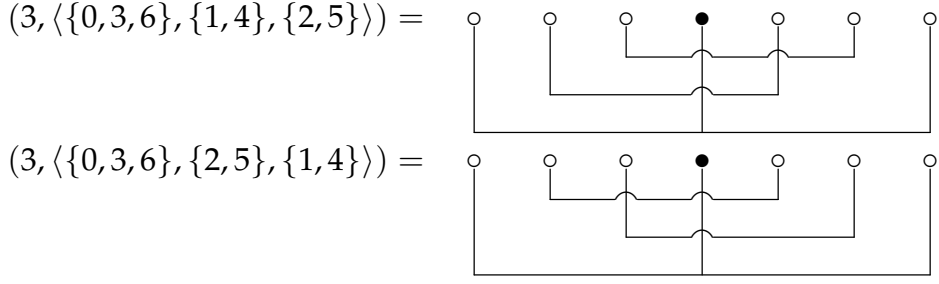

\begin{figure} 
\begin{align*}
\left( 3, \left\langle \{0,4\},\{1\},\{2\}\right\rangle \right)&= \quad \begin{xy} 
(0,0)*{\circ}="0"; (10,0)*{\circ}="1"; (20,0)*{\circ}="2"; (30,0)*{\bullet}="3"; (40,0)*{\circ}="4";
"0"; (0,-15)**\dir{-}; "4"; (40,-15)**\dir{-}; (0,-15);(40,-15)**\dir{-};
"1";(10,-10)**\dir{-}; (9,-10);(11,-10)**\dir{-};
"2";(20,-5)**\dir{-};(19,-5);(21,-5)**\dir{-};
\end{xy} \\
\left( 3, \left\langle \{0,4\},\{2\},\{1\}\right\rangle \right)&= \quad \begin{xy} 
(0,0)*{\circ}="0"; (10,0)*{\circ}="1"; (20,0)*{\circ}="2"; (30,0)*{\bullet}="3"; (40,0)*{\circ}="4";
"0"; (0,-15)**\dir{-}; "4"; (40,-15)**\dir{-}; (0,-15);(40,-15)**\dir{-};
"1";(10,-5)**\dir{-}; (9,-5);(11,-5)**\dir{-};
"2";(20,-10)**\dir{-};(19,-10);(21,-10)**\dir{-};
\end{xy}
\end{align*}
\caption{Diagrammatic representation showing two ordered marked partitions with singletons in $\protect\underrightarrow{\Omega}_\off(4,4)$ corresponding to the same marked partition.}
\label{fig:orderedmarkedpartition2}
\end{figure}
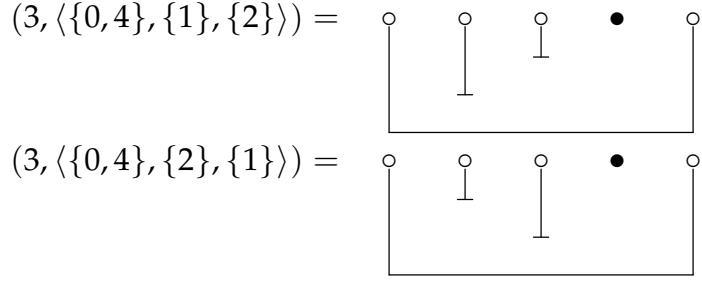

Given $(\ell,\Fura)\in\underrightarrow{\widehat\Omega}$, with $\Fura=\langle F_1,\ldots,F_k\rangle$, we define $\Fura^\pm=\langle F_1^\pm,\ldots,F_k^\pm\rangle$ with $F_i^+=F_i\cap [0,\ell]$ and $F_i^-=n-(F_i \cap [\ell,n])\subset[0,n-\ell]$. Recall that $F_i$ is either the singleton $\{\ell\}$ or contains at least one number strictly less than $\ell$ and at least one number strictly greater than $\ell$. This implies in particular that $F_i^\pm\neq\emptyset$ for all $i$, and thus $\Fura^+\in\Pi(\ell,k)$ and $\Fura^-\in\Pi(n-\ell,k)$. 
We in fact have the following bijection.

\begin{lem}\label{lemA2}
The maps 
\begin{equation}\label{A9}
\underrightarrow{\widehat\Omega}_{\d}(n,k) \to  \bigcup_{m=k-1}^{n-k+1} \big( \underrightarrow\Pi_\circ(m,k) \times \underrightarrow\Pi_\circ(n-m,k) \big) ,
\quad (\ell,\Fura) \mapsto (\Fura^+,\Fura^-) ,
\end{equation}
\begin{equation}\label{A10}
\underrightarrow{\widehat\Omega}_{\off}(n,k) \to \bigcup_{m=k-1}^{n-k+1} \big( \underrightarrow\Pi_\baro(m,k) \times \underrightarrow\Pi_\baro(n-m,k) \big) ,
\quad (\ell,\Fura) \mapsto (\Fura^+,\Fura^-) ,
\end{equation}
are bijective. 
\end{lem}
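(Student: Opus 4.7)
The plan is to construct an explicit inverse map and verify the relevant properties directly. Given a pair $(\Fura^+, \Fura^-)$ with $\Fura^+ = \langle F_1^+, \ldots, F_k^+ \rangle$ and $\Fura^- = \langle F_1^-, \ldots, F_k^- \rangle$, I would set $\ell := m$ and reconstruct blocks by
\begin{equation}
F_i := F_i^+ \cup (n - F_i^-),
\end{equation}
where $n - S := \{n - j : j \in S\}$. This is tautologically a left inverse of $(\ell, \Fura) \mapsto (\Fura^+, \Fura^-)$, since $F_i \cap [0,\ell] = F_i^+$ and $n - (F_i \cap [\ell,n]) = F_i^-$ are recovered immediately from the reconstruction formula.

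The first step is to confirm that $\{F_i\}_{i=1}^k$ is a genuine partition of $\{0, \ldots, n\}$. Since $\{F_i^+\}$ partitions $[0,m]$ and $\{n - F_i^-\}$ partitions $[m,n]$, the union $\bigcup_i F_i$ covers $\{0,\ldots,n\}$, and disjointness reduces to the observation that the only possible shared element between $F_i^+$ and $n - F_j^-$ is $m = \ell$, which lies in exactly one block on each side. Next, I would verify the inclusion conditions in each case. For the diagonal map, $\Fura^+ \in \underrightarrow\Pi_\circ(m,k)$ gives $0, m \in F_1^+$ and $\Fura^- \in \underrightarrow\Pi_\circ(n-m,k)$ gives $0, n - m \in F_1^-$; combined, these yield $\{0, \ell, n\} \subseteq F_1$, so $(\ell, \Fura) \in \underrightarrow{\widehat\Omega}_\d(n,k)$. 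For the off-diagonal map, the conditions $0 \in F_1^+,\ m \in F_k^+,\ 0 \in F_1^-,\ n - m \in F_k^-$ give $\{0, n\} \subseteq F_1$ and $\ell \in F_k$, and the requirement $\ell \notin F_1$ follows from the fact that $F_1^+$ and $F_k^+$ are distinct blocks of an ordered partition, hence disjoint. The range $m \in \{k-1, \ldots, n-k+1\}$ is forced by the requirement that both $[0,m]$ and $[0,n-m]$ admit partitions into $k$ nonempty blocks.

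The main technical point is verifying the \emph{reduced} condition: every block must either contain $\ell$ or have at least one element strictly less than $\ell$ and at least one strictly greater than $\ell$. Suppose $F_i = \{j\}$ were a singleton. Then both $F_i^+ \subseteq \{j\}$ and $n - F_i^- \subseteq \{j\}$ would have to be nonempty, forcing $j \in [0, \ell] \cap [\ell, n] = \{\ell\}$, so $F_i = \{\ell\}$, which is permitted. Conversely, whenever $F_i \not\ni \ell$, one has $F_i^+ \subseteq [0, \ell-1]$ and $n - F_i^- \subseteq [\ell+1, n]$, both nonempty, automatically supplying elements strictly on either side of $\ell$. I do not foresee any substantive obstacle beyond careful bookkeeping around the overlap point $\ell = m$, which is where the $[0,\ell]$ and $[\ell,n]$ halves are glued together, and the verification is then entirely symbolic.
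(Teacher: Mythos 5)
Your proposal is correct and follows essentially the same route as the paper: both arguments construct the explicit inverse $(\Fura^+,\Fura^-)\mapsto(\ell,\langle F_1^+\cup(n-F_1^-),\ldots,F_k^+\cup(n-F_k^-)\rangle)$ with $\ell=m$ and verify the range conditions directly. You supply rather more detail than the paper's proof does on why the reconstructed blocks form a genuine partition and satisfy the reduced and diagonal/off-diagonal conditions, which is a reasonable elaboration but not a different method.
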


\begin{proof}
For $(\ell,\Fura)\in\underrightarrow{\widehat\Omega}(n,k)$, we have by construction $0,n\in F_1$; therefore $0\in F_1^+$ and $n\in n-F_1^-$; this in turn implies $0\in F_1^-$. 

If furthermore $(\ell,\Fura)\in\underrightarrow{\widehat\Omega}_{\d}(n,k)$, then $\ell\in F_1$ and thus $\ell\in F_1^+$ and $\ell\in n-F_1^-$; the latter can be written as $n-\ell\in F_1^-$. This shows $\Fura^+\in\Pi_\circ(\ell,k)$ and $\Fura^-\in\Pi_\circ(n-\ell,k)$. 
If on the other hand $(\ell,\Fura)\in\underrightarrow{\widehat\Omega}_{\off}(n,k)$, then $\ell\in F_k$. So $\ell\in F_k^+$ and $\ell\in n-F_k^-$; that is $n-\ell\in F_k^-$. This means that $\Fura^+\in\Pi_\baro(\ell,k)$ and $\Fura^-\in\Pi_\baro(n-\ell,k)$. We conclude that \eqref{A9}, \eqref{A10} have the correct range.

Now $(\Fura^+,\Fura^-)$ are uniquely determined by $\Fura$ and hence \eqref{A9}, \eqref{A10} are injective. The inverse maps are given by 
$$
(\Fura^+,\Fura^-)\mapsto ( \nu(\F^+) , \langle F_1^+\cup (n-F_1^-), \ldots, F_k^+\cup (n-F_k^-) \rangle)
$$
from which we infer that \eqref{A9}, \eqref{A10} are surjective.
\end{proof}

\begin{figure} 
\begin{align*}
(\ell,\F) &= \left( 4, \left\langle \{0,3,8\},\{1,5\},\{2,6,7\}, \{4\} \right\rangle] \right)\\[0.5cm]
&= \quad \begin{xy} 
(0,0)*{\circ}="0"; (10,0)*{\circ}="1"; (20,0)*{\circ}="2"; (30,0)*{\circ}="3"; (40,0)*{\bullet}="4";(50,0)*{\circ}="5"; (60,0)*{\circ} = "6"; (70,0)*{\circ} ="7"; (80,0)*{\circ} ="8";
(30,-9.5)*{\smallfrown}; (30,-4.5)*{\smallfrown};(50,-4.5)*{\smallfrown};
"0"; (0,-15)**\dir{-}; "3"; (30,-15)**\dir{-}; "8";(80,-15)**\dir{-}; (0,-15);(80,-15)**\dir{-};
"1";(10,-10)**\dir{-}; "5";(50,-10)**\dir{-}; (10,-10);(28.5,-10)**\dir{-};(31.5,-10);(50,-10)**\dir{-};
"2";(20,-5)**\dir{-};"6";(60,-5)**\dir{-}; "7";(70,-5)**\dir{-};
(20,-5);(28.5,-5)**\dir{-};(31.5,-5);(48.5,-5)**\dir{-};(51.5,-5);(70,-5)**\dir{-};
"4";(40,-2.5)**\dir{-}; (39,-2.5);(41,-2.5)**\dir{-};
\end{xy} \\[0.5cm]
\underrightarrow{F}^- &= \langle\{0,3\},\{1\},\{2\},\{4\} \rangle = \quad \begin{xy} 
(0,0)*{\circ}="0"; (10,0)*{\circ}="1"; (20,0)*{\circ}="2"; (30,0)*{\circ}="3"; (40,0)*{\circ}="4";
"0"; (0,-15)**\dir{-}; "3"; (30,-15)**\dir{-}; (0,-15);(30,-15)**\dir{-};
"1";(10,-10)**\dir{-};(9,-10);(11,-10)**\dir{-};
"2";(20,-5)**\dir{-};(19,-5);(21,-5)**\dir{-};
"4";(40,-2.5)**\dir{-}; (39,-2.5);(40,-2.5)**\dir{-};
\end{xy} \\
\underrightarrow{F}^+ &= \langle\{0\},\{3\},\{1,2\},\{4\} \rangle = \quad \begin{xy} 
(0,0)*{\circ}="0"; (10,0)*{\circ}="1"; (20,0)*{\circ}="2"; (30,0)*{\circ}="3"; (40,0)*{\circ}="4";
"0"; (0,-15)**\dir{-}; (0,-15);(1,-15)**\dir{-};
"3";(30,-10)**\dir{-};(29,-10);(31,-10)**\dir{-};
"1";(10,-5)**\dir{-};"2";(20,-5)**\dir{-}; (10,-5);(20,-5)**\dir{-};
"4";(40,-2.5)**\dir{-}; (39,-2.5);(40,-2.5)**\dir{-};
\end{xy}
\end{align*}
\caption{Decomposition of an ordered marked partition in $\protect\underrightarrow{\widehat\Omega}_\off(8,4)$ into ordered partitions $\protect\underrightarrow{F}^- \in \protect\underrightarrow{\Pi}_\baro(4,4)$ and $\protect\underrightarrow{F}^+ \in \protect\underrightarrow{\Pi}_\baro(4,4)$.}
\label{fig:decomposition}
\end{figure}

\begin{figure} 
\begin{align*}
(\ell,\F) &= \left( 4, \left\langle \{0,4,6,8\},\{1,3,5\},\{2,7\} \right\rangle] \right)\\[0.5cm]
&= \quad \begin{xy} 
(0,0)*{\circ}="0"; (10,0)*{\circ}="1"; (20,0)*{\circ}="2"; (30,0)*{\circ}="3"; (40,0)*{\bullet}="4";(50,0)*{\circ}="5"; (60,0)*{\circ} = "6"; (70,0)*{\circ} ="7"; (80,0)*{\circ} ="8";
(30,-4.5)*{\smallfrown}; (40,-9.5)*{\smallfrown}; (40,-4.5)*{\smallfrown};(50,-4.5)*{\smallfrown};(60,-4.5)*{\smallfrown};
"0"; (0,-15)**\dir{-}; "4";(40,-15)**\dir{-}; "6";(60,-15)**\dir{-};  "8";(80,-15)**\dir{-}; (0,-15);(80,-15)**\dir{-};
"1";(10,-10)**\dir{-}; "3"; (30,-10)**\dir{-}; "5";(50,-10)**\dir{-}; 
(10,-10);(38.5,-10)**\dir{-};(41.5,-10);(50,-10)**\dir{-};
"2";(20,-5)**\dir{-};"7";(70,-5)**\dir{-};
(20,-5);(28.5,-5)**\dir{-};(31.5,-5);(38.5,-5)**\dir{-};(41.5,-5);(48.5,-5)**\dir{-};(51.5,-5);(58.5,-5)**\dir{-};(61.5,-5);(70,-5)**\dir{-};
\end{xy} \\[0.5cm]
\underrightarrow{F}^- &= \langle\{0,4\},\{1,3\},\{2\} \rangle = \quad \begin{xy} 
(0,0)*{\circ}="0"; (10,0)*{\circ}="1"; (20,0)*{\circ}="2"; (30,0)*{\circ}="3"; (40,0)*{\circ}="4";
"0"; (0,-15)**\dir{-}; "4";(40,-15)**\dir{-};
(0,-15);(40,-15)**\dir{-};
"1";(10,-10)**\dir{-};"3"; (30,-10)**\dir{-}; 
(10,-10);(30,-10)**\dir{-};
"2";(20,-5)**\dir{-}; 
(19,-5);(21,-5)**\dir{-};
\end{xy} \\
\underrightarrow{F}^+ &= \langle\{0,2,4\},\{3\},\{1\} \rangle = \quad \begin{xy} 
(0,0)*{\circ}="0"; (10,0)*{\circ}="1"; (20,0)*{\circ}="2"; (30,0)*{\circ}="3"; (40,0)*{\circ}="4";
"0"; (0,-15)**\dir{-}; "2";(20,-15)**\dir{-}; "4";(40,-15)**\dir{-};
(0,-15);(40,-15)**\dir{-};
"3";(30,-10)**\dir{-};
(29,-10);(31,-10)**\dir{-};
"1";(10,-5)**\dir{-};
(9,-5);(11,-5)**\dir{-};
\end{xy}
\end{align*}
\caption{Decomposition of an ordered marked partition in $\protect\underrightarrow{\widehat\Omega}_\d(8,3)$ into ordered partitions $\protect\underrightarrow{F}^- \in \protect\underrightarrow{\Pi}_\circ(4,3)$ and $\protect\underrightarrow{F}^+ \in \protect\underrightarrow{\Pi}_\circ(4,3)$.}
\label{fig:decomposition23456789}
\end{figure}

\subsection{Graphs and paths}\label{subsec:GaP}

Let $\KK_k$ be the complete graph with $k\geq 2$ vertices which we label as $1,2,\ldots,k$, and denote by $P=i_0 i_1\cdots i_n$ (with $i_{s}\neq i_{s+1}$) the path of length $n$ which visits the listed vertices in the given order. This means that consecutive indices $i_{s}i_{s+1}$ correspond to an edge. Backtracking is allowed, e.g. $P=121$ is an admissible path. We denote by $\Sigma(n,k)$ the set of all paths of length $n$, and by $\overline\Sigma(n,k)$ the set of such paths that visit every vertex at least once. $\overline\Sigma(n,k)$ is non-empty for $n\geq k-1$. We denote by $\Sigma_{ij}(n,k)\subset \Sigma(n,k)$ resp.\ $\overline\Sigma_{ij}(n,k)\subset \overline\Sigma(n,k)$ the subset of paths that start at vertex $i$ and end at vertex $j$. For $i\neq j$, $\overline\Sigma_{ij}(n,k)$ is non-empty if $n\geq k-1$, and $\overline\Sigma_{ii}(n,k)$ is non-empty if $n\geq k$.

With an ordered partition $\Fura\in \underrightarrow\Pi^\nc(n,k)$ we associate a path $P(\Fura)=i_0 i_1\cdots i_n\in\overline\Sigma(n,k)$ by identifying each block $F_i$ with the $i$th vertex of $K_n$. That is, $i_s=i$ if and only if $s\in F_i$. This yields the following.

\begin{lem}\label{lemB1}
The maps 
\begin{equation}\label{eqB1}
\underrightarrow\Pi^\nc(n,k) \to \overline\Sigma(n,k),
\quad \Fura \mapsto P(\Fura) ,
\end{equation}
\begin{equation}\label{eqB2}
\underrightarrow\Pi_\circ^\nc(n,k) \to \overline\Sigma_{11}(n,k),
\quad \Fura \mapsto P(\Fura) ,
\end{equation}
\begin{equation}\label{eqB3}
\underrightarrow\Pi_\baro^\nc(n,k) \to \overline\Sigma_{1k}(n,k),
\quad \Fura \mapsto P(\Fura) ,
\end{equation}
are bijective. 
\end{lem}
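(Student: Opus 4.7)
The plan is to exhibit an explicit two-sided inverse for each of the three maps. Given a path $P = i_0 i_1 \cdots i_n$ in $K_k$, define blocks $F_i(P) = \{s \in \{0,\ldots,n\} : i_s = i\}$ and set $\Fura(P) = \langle F_1(P), \ldots, F_k(P)\rangle$. The goal is then to verify that $\Fura \mapsto P(\Fura)$ and $P \mapsto \Fura(P)$ are mutually inverse, and that they respect the various side conditions defining the three source/target sets.

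First I would check well-definedness of $P(\Fura)$. For $\Fura = \langle F_1,\ldots,F_k\rangle \in \underrightarrow\Pi^\nc(n,k)$, the sequence $i_0 \cdots i_n$ determined by $i_s = i \iff s \in F_i$ is unambiguous because the $F_i$ partition $\{0,\ldots,n\}$. The non-consecutive property forces $i_s \neq i_{s+1}$ (otherwise $s,s+1$ would lie in a common $F_i$), so $P(\Fura)$ is indeed a path in $K_k$. Because each $F_i$ is non-empty, every vertex $i \in \{1,\ldots,k\}$ appears at least once, giving $P(\Fura) \in \overline\Sigma(n,k)$.

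Second, I would verify that $\Fura(P)$ lies in $\underrightarrow\Pi^\nc(n,k)$ when $P \in \overline\Sigma(n,k)$. Since $P$ visits each vertex, each $F_i(P)$ is non-empty; since $i_s \neq i_{s+1}$ along a path in $K_k$, no $F_i(P)$ contains two consecutive integers; and the $F_i(P)$ are obviously disjoint and cover $\{0,\ldots,n\}$. The two compositions $\Fura \mapsto P(\Fura) \mapsto \Fura(P(\Fura))$ and $P \mapsto \Fura(P) \mapsto P(\Fura(P))$ both reduce to the tautology $i_s = i \iff s \in F_i$, so each is the identity. This establishes \eqref{eqB1}.

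Finally, the restrictions \eqref{eqB2} and \eqref{eqB3} follow by translating the definitions of $\underrightarrow\Pi_\circ^\nc$ and $\underrightarrow\Pi_\baro^\nc$ into conditions on the endpoints of $P$. Membership in $\underrightarrow\Pi_\circ^\nc(n,k)$ means $\{0,n\} \subset F_1$, which is equivalent to $i_0 = i_n = 1$, i.e.\ $P \in \overline\Sigma_{11}(n,k)$. Likewise, membership in $\underrightarrow\Pi_\baro^\nc(n,k)$ means $0 \in F_1$ and $n \in F_k$, equivalent to $i_0 = 1$, $i_n = k$, i.e.\ $P \in \overline\Sigma_{1k}(n,k)$. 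The bijection from \eqref{eqB1} restricts to bijections on these sub-collections. The proof is essentially a bookkeeping exercise; the only place one might slip is in confirming that the non-consecutive condition on blocks corresponds precisely to the admissibility condition $i_s \neq i_{s+1}$ on paths in $K_k$, which is immediate once stated.
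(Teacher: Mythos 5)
Your proof is correct and matches the (implicit) argument in the paper: Lemma~\ref{lemB1} is stated immediately after the sentence defining $\Fura\mapsto P(\Fura)$, with no further justification (``This yields the following''), so the paper treats the bijection as self-evident. Your verification---the explicit inverse $F_i(P)=\{s: i_s=i\}$, the translation of the non-consecutive block condition into path admissibility $i_s\neq i_{s+1}$ and of block nonemptiness into the path visiting every vertex, and the matching of endpoint conditions with $i_0=i_n=1$ resp.\ $i_0=1$, $i_n=k$---is precisely the routine bookkeeping the paper leaves to the reader.
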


We assign a $(k\times k)$ edge matrix $\WW=(w_{ij})$ to the graph $\KK_k$, by assigning a {\em weight} $w_{ij}$ to each edge $ij$, and $w_{ii}=0$ to the coefficients on the diagonal. We also assign the diagonal vertex matrix $\DD(\vecu)=\diag(u_1,\ldots,u_k)$ which assigns weight $u_i$ to vertex $i$. The {\em edge weight} of a path $P=i_0\ldots i_n$ is then defined as $w_P=w_{i_0i_1} w_{i_1i_2}\cdots w_{i_{n-1}i_n}$, and the {\em vertex weight} as $u_P=u_{i_0} u_{i_1} \cdots u_{i_{n-1}} u_{i_n}$; the {\em total weight} is thus $$u_P w_P=u_{i_0} w_{i_0i_1} u_{i_1} w_{i_1i_2}\cdots u_{i_{n-1}} w_{i_{n-1}i_n} u_{i_n}.$$

The combinatorics of a path of length $n$ can be understood by means of the matrix $[\DD(\vecu) \WW]^n \DD(\vecu)$, as we will explain now. We define the linear operator $\scrL$ acting on functions $F$ of $\CC^k$ by
\begin{equation}\label{eqLF}
\scrL F (\vecu) = \frac{1}{(2\pi\i)^k} \ointccw\cdots \ointccw F(\vecz) \, \exp\bigg(\frac{u_1}{z_1}+\cdots+\frac{u_k}{z_k}\bigg)\, \frac{\d z_1}{z_1^2} \cdots \frac{\d z_k}{z_k^2} .
\end{equation}
More explicitly, if $F$ has the Taylor series
\begin{equation}
F(\vecu)=\sum_{\nu_1,\ldots,\nu_k=0}^\infty C_{\nu_1,\ldots,\nu_k} u_1^{\nu_1}\cdots u_k^{\nu_k},
\end{equation}
then
\begin{equation}\label{TaylorF}
\scrL F(\vecu)= \sum_{\nu_1,\ldots,\nu_k=1}^\infty \frac{C_{\nu_1,\ldots,\nu_k}}{(\nu_1-1)!\cdots(\nu_k-1)!} u_1^{\nu_1-1}\cdots u_k^{\nu_k-1} .
\end{equation}
Thus $\scrL F(\vecu)$ is the derivative $\partial_{u_1}\cdots\partial_{u_k}$ of the Borel transform of $F(\vecu)$.

\begin{lem} For $n\geq 1$,
\begin{equation}\label{dudu}
\scrL \sum_{P\in\overline\Sigma_{ij}(n,k)}  u_P w_P = 
\scrL \;  \big( [\DD(\vecu) \WW]^n \DD(\vecu)\big)_{ij} ,
\end{equation}
\end{lem}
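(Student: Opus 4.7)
The plan is to first interpret the matrix entry $\big([\DD(\vecu)\WW]^n\DD(\vecu)\big)_{ij}$ combinatorially as a sum over \emph{all} paths of length $n$ from $i$ to $j$ in $\KK_k$ (without the requirement that every vertex be visited), and then show that the operator $\scrL$ automatically restricts this sum to paths in $\overline\Sigma_{ij}(n,k)$.

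Expanding the matrix product by inserting a full index $i_s$ at each factor,
\begin{equation*}
\big([\DD(\vecu)\WW]^n\DD(\vecu)\big)_{ij}
= \sum_{\substack{i_0=i,\ i_n=j\\ i_s\in\{1,\dots,k\}}} u_{i_0} w_{i_0 i_1} u_{i_1} w_{i_1 i_2}\cdots u_{i_{n-1}} w_{i_{n-1} i_n} u_{i_n},
\end{equation*}
and since $w_{ii}=0$ forces $i_s\neq i_{s+1}$ in every non-vanishing term, this is precisely $\sum_{P\in\Sigma_{ij}(n,k)} u_P w_P$. Thus it suffices to show that
\begin{equation*}
\scrL \sum_{P\in\Sigma_{ij}(n,k)\setminus\overline\Sigma_{ij}(n,k)} u_P w_P = 0.
\end{equation*}

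For any path $P=i_0 i_1\cdots i_n$, let $\nu_m(P)$ denote the number of occurrences of vertex $m$ in $P$, so that $u_P = u_1^{\nu_1(P)}\cdots u_k^{\nu_k(P)}$. By the explicit formula \eqref{TaylorF} for $\scrL$ (or directly from the contour integral \eqref{eqLF}, since $\oint z^{-2}\d z = 0$), the operator $\scrL$ annihilates any monomial in which some $u_m$ is absent: if $\nu_m(P)=0$ for some $m$, then $\scrL(u_P w_P)=0$, while if all $\nu_m(P)\geq 1$ one has $\scrL u_P = \prod_m u_m^{\nu_m(P)-1}/(\nu_m(P)-1)!$. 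A path $P\in\Sigma_{ij}(n,k)$ belongs to $\overline\Sigma_{ij}(n,k)$ exactly when $\nu_m(P)\geq 1$ for every $m\in\{1,\dots,k\}$, so the paths missing some vertex contribute monomials in the kernel of $\scrL$. Applying $\scrL$ term-by-term (the sum is finite for each fixed $n$, so this is trivially legitimate) therefore yields
\begin{equation*}
\scrL\big([\DD(\vecu)\WW]^n\DD(\vecu)\big)_{ij} = \scrL \sum_{P\in\overline\Sigma_{ij}(n,k)} u_P w_P,
\end{equation*}
which is \eqref{dudu}.

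The identification of the matrix entry with a path sum is straightforward bookkeeping, and the vanishing property of $\scrL$ on monomials missing a variable is immediate from \eqref{TaylorF}; there is no substantive obstacle in this lemma. The only point requiring a bit of care is ensuring that the formal manipulations of $\scrL$ on finite sums of monomials are justified by the explicit Taylor-series definition \eqref{TaylorF} rather than the contour representation \eqref{eqLF} (which implicitly assumes convergence of the generating function), but for each fixed $n$ the sum $\sum_P u_P w_P$ is a polynomial and no convergence issue arises.
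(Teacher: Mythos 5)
Your proof is correct and follows the same two-step route as the paper: identify $\big([\DD(\vecu)\WW]^n\DD(\vecu)\big)_{ij}$ with the sum over all paths in $\Sigma_{ij}(n,k)$ by matrix multiplication (using $w_{ii}=0$), then observe that $\scrL$ kills any monomial missing some variable $u_m$, which restricts the sum to $\overline\Sigma_{ij}(n,k)$. The paper's proof is just a terser statement of exactly this argument.
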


\begin{proof}
Note that
\begin{equation}
\scrL \sum_{P\in\overline\Sigma_{ij}(n,k)}  u_P w_P = 
\scrL \sum_{P\in\Sigma_{ij}(n,k)}  u_P w_P  ,
\end{equation}
since the terms in $\sum_{P}  u_P w_P$ that are constant in $u_l$ correspond exactly to those paths that do not visit vertex $l$. Eq.~\eqref{dudu} then follows from simple matrix algebra.
\end{proof}

Consider the $(k\times k)$ matrix valued function 
\begin{equation}
F(\vecu)=\big( 1- [\DD(\vecu) \WW])^{-1} \DD(\vecu) =(\DD(\vecu)^{-1} - \WW)^{-1}  .
\end{equation}
We have the series expansion
\begin{equation}\label{serexp}
F(\vecu) = \sum_{n=0}^\infty [\DD(\vecu) \WW]^n \DD(\vecu)= \sum_{\nu_1,\ldots,\nu_k=0}^\infty C_{\nu_1,\ldots,\nu_k} u_1^{\nu_1}\cdots u_k^{\nu_k},
\end{equation}
with the Taylor coefficients given by $(k\times k)$ matrices $C_{\nu_1,\ldots,\nu_k}$.
The first series in \eqref{serexp} (and hence also the above Taylor series) converges absolutely for $|u_i|<r_0^{-1}$ with $r_0= k \max |w_{ij}|$.
Summing over $n$ and using the geometric series (note that for the $n=0$ term $\scrL \DD(\vecu)=0$), we then have
\begin{equation}
\scrL \sum_{n=1}^\infty \sum_{P\in\overline\Sigma_{ij}(n,k)}  u_P w_P = 
[\scrL F_{ij}](\vecu) = [\scrL F]_{ij}(\vecu) .
\end{equation}
After the variable substitution $z_i\mapsto 1/z_i$, \eqref{eqLF} becomes
\begin{equation}\label{eqLF2}
\GG^{(k)}(\vecu):=\scrL F (\vecu) = \frac{1}{(2\pi\i)^k} \ointccw\cdots \ointccw (\DD(\vecz)-\WW)^{-1}  \, \exp(\vecu\cdot\vecz)\, \d z_1\cdots \d z_k .
\end{equation}

Let us work out the example $k=2$ in some more detail. In this case
\begin{equation}
(\DD(\vecz)-\WW)^{-1} = \begin{pmatrix} z_1 & -w_{12} \\  -w_{21} & z_2 \end{pmatrix}^{-1}
= \frac{1}{z_1 z_2 - w_{12}w_{21}} \begin{pmatrix} z_2 & w_{12} \\  w_{21} & z_1 \end{pmatrix}
\end{equation}
and hence
\begin{equation}\label{eqLF2222}
\GG^{(2)}(\vecu) = \frac{1}{(2\pi\i)^2} \ointccw\ointccw \frac{1}{z_2 - z_1^{-1} w_{12}w_{21}} \begin{pmatrix} z_2 & w_{12} \\  w_{21} & z_1 \end{pmatrix} \exp(u_1z_1+u_2z_2)\, \d z_2\, \frac{\d z_1}{z_1} .
\end{equation}
We contract the path of integration in $z_2$ to zero, and pick up the residue at $z_2 = z_1^{-1} w_{12}w_{21}$. This yields
\begin{equation}
\GG^{(2)}(\vecu) = \frac{1}{2\pi\i} \ointccw \begin{pmatrix} z_1^{-1} w_{12}w_{21} & w_{12} \\  w_{21} & z_1 \end{pmatrix} \exp(u_1z_1+u_2 z_1^{-1} w_{12}w_{21})\, \frac{\d z_1}{z_1} .
\end{equation}

For $u_1,u_2\neq 0$ set $z_1 = \i u_1^{-1/2} u_2^{1/2} (-w_{12} w_{21})^{1/2} \e^{\i\theta}$ with any fixed choice for the branch of the square-root so that $\d \vecz_1/\vecz_1 = \i \d \theta$ . Then
\begin{multline}
\GG^{(2)}(\vecu) = \frac{1}{2\pi} \int_0^{2\pi}  
\begin{pmatrix} \i u_1^{1/2} u_2^{-1/2} (-w_{12} w_{21})^{1/2} \e^{-\i\theta} 
& w_{12} \\ w_{21}  &\i u_1^{-1/2} u_2^{1/2} (-w_{12} w_{21})^{1/2} \e^{\i\theta}  \end{pmatrix}
\\
\times \exp\bigg(\i u_1^{1/2} u_2^{1/2}(-w_{12}w_{21})^{1/2}  (\e^{\i\theta}+\e^{-\i\theta})\bigg)\, \d\theta  .
\end{multline}
Compare this with the classical integral representation for the $J$-Bessel function,
\begin{equation} \label{JB-def}
J_n(z)=\frac{1}{2\pi} \int_0^{2\pi} \e^{\i z \sin\theta - \i n\theta} d\theta=\frac{i^{-n}}{2\pi} \int_0^{2\pi} \e^{\i z \cos\theta -\i n\theta} d\theta.
\end{equation}
Thus
\begin{equation}\label{eqk2A1}
g_{11}^{(2)}(\vecu) = - u_1^{1/2} u_2^{-1/2} (-w_{12}w_{21})^{1/2} \,J_1(2 u_1^{1/2} u_2^{1/2}(-w_{12}w_{21})^{1/2} ) ,
\end{equation}
\begin{equation}\label{eqk2A2}
g_{12}^{(2)}(\vecu) = w_{12} \, J_0(2 u_1^{1/2} u_2^{1/2}(-w_{12}w_{21})^{1/2} ) ,
\end{equation}
\begin{equation}
g_{21}^{(2)}(\vecu) = w_{21} \, J_0(2 u_1^{1/2} u_2^{1/2}(-w_{12}w_{21})^{1/2} )  ,
\end{equation}
\begin{equation}\label{eqk2A122}
g_{22}^{(2)}(\vecu) =  -u_1^{-1/2} u_2^{1/2} (-w_{12}w_{21})^{1/2} \,J_1(2 u_1^{1/2} u_2^{1/2}(-w_{12}w_{21})^{1/2} ) .
\end{equation}

\end{appendix}


\begin{thebibliography}{99}

\bibitem{Albeverio}
S. Albeverio, F. Gesztesy, R. Hoegh-Krohn and H. Holden,. Solvable models in quantum mechanics. Springer Science \& Business Media, 2012.

\bibitem{AlbeverioGesztesyHoegh-Krohn}
S. Albeverio, F. Gesztesy and R. Hoegh-Krohn. The low energy expansion in nonrelativistic scattering theory. Annales de l'IHP Physique theorique. Vol. 37. No. 1. 1982.

\bibitem{AlbeverioHoegh-Krohn}
S. Albeverio and R. Hoegh-Krohn. Point interactions as limits of short range interactions. Journal of Operator Theory (1981): 313-339.

\bibitem{Allaire05}
G. Allaire and A. Piatnitski, Homogenization of the Schr\"odinger equation and effective mass theorems.{ Comm. Math. Phys.} 258 (2005), no. 1, 1--22.

\bibitem{BenoitGloria}
A. Benoit and A. Gloria, Long-time homogenization and asymptotic ballistic transport of classical waves, Ann. Sci. \'Ec. Norm. Sup\'er. (4) 52 (2019), no. 3, 703--759.

\bibitem{BLP}
A. Bensoussan, J.-L. Lions and G. Papanicolaou,
{ Asymptotic analysis for periodic structures}. 
AMS Chelsea Publishing, Providence, RI, 2011. 

\bibitem{BerryTabor}
M. V. Berry and M. Tabor,
Level clustering in the regular spectrum.
{Proceedings of the Royal Society of London A 356 (1977), 375-394.}

\bibitem{Bleher95}
P. M. Bleher and J. L. Lebowitz,
Variance of number of lattice points in random narrow elliptic strip.
Annales de l`I.H.P., section B, tome 31, no 1 (1995), p. 27-58.

\bibitem{Birman}
M. Sh. Birman and T.A. Suslina, Periodic second-order differential operators. Threshold properties and averaging. (Russian) { Algebra i Analiz} 15 (2003), no. 5, 1--108; translation in { St. Petersburg Math. J.} 15 (2004), no. 5, 639--714


\bibitem{Bogomolny00}
E. Bogomolny, U. Gerland, and C. Schmit, Singular statistics,
Phys. Rev. E 63, 036206 

\bibitem{Bogomolny02}
E. Bogomolny and O. Giraud, Semiclassical calculations of the two-point correlation form factor for diffractive systems. Nonlinearity 15 (2002), no. 4, 993--1018.


\bibitem{Boldrighini83}
C. Boldrighini, L.A. Bunimovich and Y.G. Sinai, On the Boltzmann equation for the Lorentz gas.  { J. Stat. Phys.}  {32}  (1983), 477--501.

\bibitem{Caglioti10}
E. Caglioti and F. Golse, On the Boltzmann-Grad limit for the two dimensional periodic Lorentz gas. { J. Stat. Phys.} {141} (2010), 264--317. 

\bibitem{Castella_WC}
F.~Castella.
\newblock On the derivation of a quantum Boltzmann equation from the periodic
  von Neumann equation.
\newblock { ESAIM: M2AN}, 33(2):329--349, 1999.

\bibitem{Castella_LD}
F.~Castella.
\newblock From the von Neumann equation to the quantum Boltzmann equation in a
  deterministic framework.
\newblock { J. Stat. Phys.}, 104:387--447, 2001.

\bibitem{Castella_LD2}
F.~Castella.
From the von Neumann equation to the quantum Boltzmann equation. II. Identifying the Born series. 
\newblock { J. Stat. Phys.}, 106:1197--1220, 2002.


\bibitem{Castella_Plagne_LD}
F.~Castella and A.~Plagne.
\newblock Non derivation of the quantum {B}oltzmann equation from the periodic
  von {N}eumann equation.
\newblock { Indiana Univ. Math. J.}, 51(4):963--1016, 2001.

\bibitem{Craster10}
R.V. Craster, J. Kaplunov and A.V. Pichugin, High-frequency homogenization for periodic media. { Proc. R. Soc. Lond. Ser. A Math. Phys. Eng. Sci.} 466 (2010), no. 2120, 2341--2362. 

\bibitem{EngErdos}
D.~Eng and L.~Erd\"{o}s.
\newblock The linear Boltzmann equation as the low density limit of a random
  Schr\"odinger equation.
\newblock { Reviews in Mathematical Physics}, 17(06):669--743, 2005.
%
\bibitem{ErdosSalmhoferYau07}
L. Erd\"os, M. Salmhofer and H.-T. Yau,
Quantum diffusion of the random Schr\"odinger evolution in the scaling limit. II. The recollision diagrams. Comm. Math. Phys. 271 (2007), no. 1, 1–53.
%
\bibitem{ErdosSalmhoferYau08}
L. Erd\"os, M. Salmhofer and H.-T. Yau, Quantum diffusion of the random Schr\"odinger evolution in the scaling limit. Acta Math. 200 (2008), no. 2, 211--277.
%
\bibitem{ErdosYau}
L. Erd\"os and H.-T. Yau, Linear Boltzmann equation as the weak coupling limit of the random Schr\"odinger equation, { Comm. Pure Appl. Math.} LIII (2000) 667--735.
%
\bibitem{EMM}
A. Eskin, G. Margulis and S. Mozes,  Quadratic forms of signature $(2,2)$ and eigenvalue spacings on rectangular $2$-tori.  
Ann. of Math. 161 (2005),
no. 2, 679\textendash 725.

\bibitem{ExnerSeba}
Exner, P., and P. Seba. Point interactions in two and three dimensions as models of small scatterers. {Physics Letters A} 222.1-2 (1996): 1-4.

\bibitem{Gallavotti69}
G. Gallavotti,
Divergences and approach to equilibrium in the Lorentz and the
Wind-tree-models, { Physical Review} 185 (1969), 308--322.

\bibitem{Gerard91}
P. G\'erard, Mesures semi-classiques et ondes de Bloch. S\'eminaire sur les \'Equations aux D\'eriv\'ees Partielles, 1990--1991, Exp. No. XVI, 19 pp., \'Ecole Polytech., Palaiseau, 1991.

\bibitem{Gerard97}
P. G\'erard,  P.A. Markowich, N.J. Mauser and F. Poupaud, Homogenization limits and Wigner transforms. Comm. Pure Appl. Math. 50 (1997), no. 4, 323--379. 

\bibitem{GM}
J. Griffin and J. Marklof, Quantum transport in a low-density periodic potential: homogenisation via homogeneous flows. Pure Appl. Anal. 1 (2019), no. 4, 571--614.

\bibitem{Grossmann}
A. Grossmann, R. Hoegh-Krohn, and M. Mebkhout. A class of explicitly soluble, local, many-center Hamiltonians for one-particle quantum mechanics in two and three dimensions. I. Journal of Mathematical Physics 21.9 (1980): 2376-2385.


\bibitem{HMC16}
D. Harutyunyan, G. Milton, and R.V. Craster, High-frequency homogenization for travelling waves in periodic media. Proc. Roy. Soc. A. 472 (2016), no. 2191, 20160066, 18 pp. 

\bibitem{HH-KJ}
H. Holden, R. Hoegh-Krohn, and S. Johannesen. The short-range expansion in solid state physics. Annales de l'IHP Physique theorique. Vol. 41. No. 4. 1984.

\bibitem{Letendre19}
T. Letendre and H. Uebersch\"ar,
Random moments for the new eigenfunctions of point scatterers on rectangular flat tori, arXiv:1910.04001

\bibitem{Lorentz05}
H. Lorentz, Le mouvement des \'electrons dans les m\'etaux,
Arch. N\'eerl. {10} (1905), 336--371.

\bibitem{MargulisMohammadi}
G. Margulis and A. Mohammadi,  Quantitative version of the Oppenheim conjecture for inhomogeneous quadratic forms. 
Duke Math. J. {158} (2011), no. 1, 121\textendash 160.

\bibitem{Marklof00}
J. Marklof, The Berry-Tabor conjecture, Proceedings of the 3rd European Congress of Mathematics, Barcelona 2000, Progress in Mathematics 202 (Birkh\"auser, Basel, 2001) 421-427

\bibitem{Marklof02}
J. Marklof, Pair correlation densities of inhomogeneous quadratic forms II,
{ Duke. Math. J.} 115 (2002) 409-434, Correction, ibid. 120 (2003) 227-228

\bibitem{Marklof03}
J. Marklof, Pair correlation densities of inhomogeneous quadratic forms,
{ Ann. of Math.} 158 (2003) 419-471

\bibitem{Marklof15}
J. Marklof, Kinetic limits of dynamical systems, in: Hyperbolic Dynamics, Fluctuations and Large Deviations, Proc. Symp. Pure Math., American Mathematical Soc. 2015, pp. 195--223

\bibitem{partII}
J. Marklof and A. Str\"ombergsson, The Boltzmann-Grad limit of the periodic Lorentz gas, Annals of Math. {174} (2011) 225--298.

\bibitem{MS2019}
J. Marklof and A. Str\"ombergsson, Kinetic theory for the low-density Lorentz gas, arXiv 1910.04982 (106pp)

\bibitem{MT2016}
J. Marklof and B. T\'oth, Superdiffusion in the periodic Lorentz gas. Comm. Math. Phys. 347 (2016), no. 3, 933--981.

\bibitem{Markowich94}
P. A. Markowich, N.J. Mauser and F.A. Poupaud, Wigner-function approach to (semi)classical limits: electrons in a periodic potential. J. Math. Phys. 35 (1994), no. 3, 1066--1094. 

\bibitem{Panati03}
G. Panati, H. Spohn and S. Teufel, Effective dynamics for Bloch electrons: Peierls substitution and beyond, Commun. Math. Phys. 242 (2003) 547--578.

\bibitem{Sarnak96}
P. Sarnak, Values at integers of binary quadratic forms, 
{Harmonic Analysis and Number Theory} (Montreal, PQ, 1996), 
181-203, CMS Conf. Proc. {\bf 21}, Amer. Math. Soc., Providence, RI, 1997. 

\bibitem{Spohn77}
H. Spohn, Derivation of the transport equation for electrons moving through random
impurities, J. Stat. Phys. 17 (1977) 385--412.

\bibitem{Spohn78}
H. Spohn, The Lorentz process converges to a random flight process,  Comm.
Math. Phys.  60  (1978), 277--290.

\bibitem{VanderKam}
J. M. VanderKam, Correlations of eigenvalues on multi-dimensional flat tori, {Commun. Math. Phys. 210 (2000) 203--223.}

\end{thebibliography}
\end{document}